\documentclass[2p]{elsarticle}
\usepackage{amssymb}
\usepackage{amsmath}
\usepackage{amsthm}
\newtheorem{definition}{Definition}[section]
\newtheorem{theorem}{Theorem}[section]
\newtheorem{lemma}{Lemma}[section]
\newtheorem{remark}{Remark}[section]
\usepackage{float}
\usepackage{graphicx}
\usepackage{epstopdf}

\begin{document}

\title{Soliton-like solutions of the Camassa--Holm equation \\ with variable coefficients and a small dispersion}

\author[inst1]{Yuliia Samoilenko}
\author[inst2]{Valerii Samoilenko}

\affiliation[inst1]{organization={Institut Camille Jordan},
addressline={43 Boulevard du 11 Novembre 1918},
city={Villeurbanne},
postcode={69622},
country={France}}

\affiliation[inst2]{organization={Institute of Mathematics of NAS of Ukraine},
addressline={3 Tereshchenkivs'ka Str.},
city={Kyiv},
postcode={01024},
country={Ukraine}}


%
\begin{abstract}
The paper deals with the Camassa--Holm equation with variable coefficients (vcCH equation) that is a direct generalization of the well-known Camassa--Holm equation. We focus on the mathematical description of particular solutions of the vcCH equation with a small dispersion that exhibit properties analogous to those of classical soliton and peakon solutions, and consider the construction of soliton- and peakon-like solutions in the form of asymptotic expansions, including both one-phase and two-phase cases.

The solution is expressed as the sum of a regular background common to both soliton- and peakon-like solutions and a singular component that captures their distinctive features, with the precise definition of the main singular term playing a central role. In the one-phase case, this term is determined, and the solvability of higher-order singular corrections is established in appropriate functional spaces, enabling the construction of asymptotic solutions to arbitrary accuracy with respect to a~small parameter. The study also addresses the approximate description of two-phase soliton- and peakon-like solutions.

Theorems on the asymptotic accuracy of the constructed asymptotic solutions are also proved. Each of the considered cases is illustrated by nontrivial examples for which,
in accordance with the obtained general results, approximate solutions are derived in explicit form and their graphs are presented.
\end{abstract}


\maketitle

MSC Classifications: {\it 76M45; 35C20; 35B25; 35Q35; 76B15}

\vskip5mm

Keywords: {\it Camassa--Holm equation; shallow wa\-ter models; soliton-like solutions; integrable type systems with variable coeffi\-cients; singular perturbation; the WKB technique}

\section{Introduction}

Among nonlinear systems, the Camassa--Holm (CH) equation has received considerable attention due to its hydrodynamical significance and rich mathematical structure. It was derived as a new completely integrable dispersive shallow-water equation in \cite{CamassaHolm}, where the existence of a new type of soliton-like solution, called peakons, was discovered. These solutions, similarly to classical solitons, describe localized waves that interact without losing their identities after collision; however, unlike smooth solitons, peakons possess a sharp peak at their crest. The CH equation is written as
\begin{equation} \label{Camassa_Holm}
u_t + 2\kappa u_x - u_{txx} = - 3 uu_x + 2 u_x u_{xx} + uu_{xxx},
\end{equation}
where $ u $ denotes the height of the water's free surface above a flat bottom and $ \kappa $ is a constant related to the critical shallow water speed.
Moreover, it was in this equation that peakons were first discovered as soliton solutions whose limiting form, as $ \kappa \to 0 $ develops peaks \cite{Lundmark_2022, Lundmark_2019}. These peaked solitons (``peakons'') are given by $ u(x, t) = c \exp(-|x-ct|) $ and describe solitary waves traveling with a~speed $ c > 0 $ and having a peak of height $ c $. If $ c < 0 $, then the wave moves to the left with a downward peak and it is called an antipeakon.

It should be noted, that equation (\ref{Camassa_Holm}) was originally obtained as a bi-Hamil\-to\-nian generalization of the Korteweg--de Vries equation by A.~Fokas and B.~Fuchs\-steiner in 1981 through the method of recursion operators \cite{Focas}. In the sequel, the relation between the CH equation and various physical phenomena was discovered. In particular, it arises in the study of second-grade non-Newtonian fluids in the vanishing viscosity limit~\cite{Busuioc}, in the dynamics of nonlinear dispersive waves in a compressible Mooney–Rivlin rod \cite{Dai}, in the modeling of turbulent flows \cite{Chen_1}, and plays a significant role in examining wave breaking \cite{Constantin-Escher, Constantin_Lannes, BrandoleseCMP, McKean_2004}, and others. These and other applications have stimulated extensive research devoted to the analysis of this equation.

Various problems for the CH equation, including the existence of solitons and other types of particular solutions to equation \eqref{Camassa_Holm}, have been intensively studied in numerous papers using a wide range of approaches and methods, beginning with \cite{Alber_2000, Cooper, Olver, Danchin_2003}. A significant contribution to the study of the CH equation is due to A.~Constantin, particularly concerning the existence, uniqueness, local well-posedness in Sobolev spaces, stability, and regularity of its solutions \cite{Constantin_1997, Constantin_Strauss_2000, Constantin_2000, Constantin_2001, Constantin_2006, Constantin_2007, Constantin_2007_1}. In these papers he and his coauthors applied classical PDE methods such as energy estimates, Sobolev space techniques and inverse scattering transform (IST) approach. As in the case of other integrable systems, the IST has become a powerful tool for studying the CH equation, allowing the construction of its reflectionless potentials as well as exact solutions of different types \cite{Johnson, Li}. In this connection, it is worth mentioning the Riemann--Hilbert approach, which additionally enables the study of the qualitative behavior of solutions under various initial and boundary conditions, including step-like Cauchy data \cite{Minakov_2016, Shepelskiy, Shepelsky_2009}.

Another powerful tool for analyzing equation \eqref{Camassa_Holm}, particularly for constructing quasi-periodic or finite-gap solutions, are the algebro-geometric and isospectral methods \cite{Alber_2000, Schiff_1996, Schiff_1998, Teschl_2013}. These techniques yield a broad class of algebro-geometric solutions of the CH equation~\cite{Qiao}, including solitons as particular solutions.
Similarly to the one-peakon solution, which is obtained from the smooth soliton via the limiting procedure as $ \kappa \to 0 $, this method was also successfully applied to derive the multipeakon solution using the $PQ$-decomposition technique \cite{Parker}.

The multipeakon solutions can be written as \cite{Lundmark_2022, Holden_2007, Anco}
\begin{equation} \label{multi-peakon}
u(x, t) = \sum\limits_{k=1}^N m_k(t) e^{-|x-x_k(t)|},
\end{equation}
where the function $ x_k(t) $ describes the position of the $k$-th peakon, while $ m_k(t) $ denotes its amplitude, and for the peakons with positive amplitudes this relation exhibits nonlinear interactions similar to those of the classical \linebreak smooth solitons. In the other cases, when the amplitudes of the peakons have different signs, interactions among them can occur in various scenarios. Even in the simplest case of a collision between a single peakon with amplitude $ m_1(t) > 0 $ and a~single antipeakon with amplitude $ m_2(t) < 0 $ the behavior depends on the type of solution considered \cite{Lundmark_2022}.

In the {\it conservative case}, where the energy integral
\[
 E(t) = \int_{{\mathbb R}} \bigl( u^2 + u_x^2\bigr) dx
\]
remains constant up and after collision, the solution describes the continued propagation of a peakon and an anti\-pea\-kon after the collision, with their amplitudes changing signs, so that the new amplitudes satisfy the inequality $ m_1(t) < 0 < m_2(t) $. A peakon–anti\-peakon collision may also follow another scenario, namely the {\it dissipative solution}, in which the total energy decreases at the collision time, and the solution continues as a single peakon traveling with a new velocity. There exists also one more possibility, known as the $ \alpha $-{\it dissipative solution}, characterising the reappearance of the peakon and antipeakon after the collision, with only a fraction $ 1 - \alpha $ of the lost energy restored, where $ 0 < \alpha < 1 $. In~\cite{Artebrant}, the authors demonstrated that a certain plateau-shaped traveling wave solution appears when two-peakon solutions interact, and how this plateau subsequently breaks up.
A detailed analysis of the complex dynamics of the peaked solitons associated with the CH equation can be found in numerous studies (see, for example, \cite{Lundmark_2022,Parker_2008, Coclite, Feng}).

It should be noted that peakon solutions and their remarkable properties represent only one aspect of the intricate structure of the CH equation.
Similarly to the existence of two wave-breaking scenarios for the KdV equation -- the rough scenario in which the solution becomes unbounded and the milder scenario in which the function remains continuous while its derivative becomes unbounded -- a similar phenomenon also occurs for the CH equation. Particularly, this equation admits so-called cuspon solutions \cite{Lenells_2005} whose derivative grows without bound, as well as peakon solutions corresponding to a milder scenario in which the derivative exhibits a jump discontinuity but remains finite. Analogously to peakon and soliton solutions, cuspons exhibit the same interaction property: they preserve their shapes and velocities after interaction \cite{Lenells}. The properties of cuspon solutions have been studied less extensively due to their specific differentiability features, which complicate the application of standard analytical methods and require modifications of existing techniques.

In the sequel, other equations admitting peakon and cuspon solutions were discovered, among which are
the Degasperis--Procesi (DP) equation \cite{Degasperis_Procesi}
\begin{equation}\label{Degasperis_Procesi}
u_t - u_{txx} = - 4 uu_x + 3 u_x u_{xx} + uu_{xxx},
\end{equation}
the Novikov equation \cite{Novikov}
\begin{equation}\label{Novikov}
u_t - u_{txx} = - 4 u^2 u_x + 3 u u_x u_{xx} + u^2u_{xxx},
\end{equation}
the Focas--Fuchssteiner--Olver--Poseanu--Qiao (FORQ) equ\-a\-tion \cite{Focas, Olver}
\begin{equation}\label{FORQ}
u_t - u_{txx} = \bigl(- u^3 + u u^2_x + u^2 u_{xx} - u_x^2u_{xx}\bigr)_x.
\end{equation}

Another important aspect of soliton theory is the study of integrable systems in the small-dispersion regime, which is characterized by the presence of a small parameter at the highest derivatives. In this context, the KdV equation was the first for which the zero-dispersion limit was analyzed \cite{Lax1, Lax2, Lax3, Venakides_1990} using the inverse scattering transform (IST) and WKB technique \cite{Miura}.

Extensions of such studies concern the other systems, including the CH equation, as discussed in~\cite{Grava_2009}:
\begin{equation}\label{CH_Grava}
u_t + (3 u + 2 \nu ) u_x - \varepsilon^2 (u_{xxt} + 2 u_x u_{xx} + uu_{xxx} ) = 0,
\end{equation}
where $ \nu $ is a constant related to the critical shallow water speed, and $ \varepsilon $ is a~constant proportional to the mean water depth. Using numerical methods, the authors performed a detailed analysis and identified Whitham zones of oscillations in the solution of the Cauchy problem for equation \eqref{CH_Grava} with the initial data $ u_0(x) = -\sinh^2 (x) $.

Owing to the broad variety of solutions of the CH equation, it is natural to extend equation \eqref{Camassa_Holm} to the case $\kappa = 0$ by incorporating variable coefficients into the model \cite{Vaneeva_2017,Popovych_2010}. This modification is motivated by the need to study wave processes in media with spatially and temporally varying properties. Consequently, this paper addresses the CH equation with variable coefficients (vcCH equation) and a singular perturbation, characterized by a small dispersion:
\begin{equation}\label{CHolm_vc}
a(x, t, \varepsilon) u_t + b(x, t, \varepsilon) uu_x - \varepsilon^2 (u_{txx} + 2 u_x u_{xx} + uu_{xxx} ) = 0.
\end{equation}

We assume that the functions
$ a(x, t, \varepsilon) $ and $ b(x, t, \varepsilon) $ with $ (x, t) \in {\mathbb R} \times [0; T] $ for some $ T > 0 $ can be written as asymptotic series
\begin{equation} \label{coeff}
	a(x, t, \varepsilon) = \sum\limits_{k=0}^\infty \varepsilon^k a_k(x, t), \, \, b(x, t, \varepsilon) = \sum\limits_{k=0}^\infty \varepsilon^k b_k(x, t),
\end{equation}
and $ a_0(x, t) b_0(x, t)\not= 0 $ for all $ (x, t) \in {\mathbb R} \times [0; T] $.

In general, solutions of the vcCH system \eqref{CHolm_vc} cannot be obtained in analytical form due to the presence of variable coefficients. Therefore, we focus on constructing its asymptotic solutions that exhibit soliton-like features, including peakon solutions. These solutions are represented as a sum of a regular part, which serves as a background function common to both soliton-like and peakon-like solutions, and singular components, which capture the distinctive features of these solutions. In the following sections, we present a detailed description of how these components are determined.

The presented algorithm for constructing the approximate solutions is based on the general methodology for deriving asymptotic soliton-like solutions of hydrodynamic type systems with variable coefficients and a small dispersion, with an approach specifically adapted to equation \eqref{CHolm_vc}. Such an adaptation is necessary because each nonlinear wave-like equation requires a specific treatment, as has been demonstrated for the vcKdV equation \cite{Sam_2005}, the vcBBM equation \cite{Sam_JMP}, the vcBurgers equation \cite{Sam_Burgers} and the vcmCH equation \cite{SamBrandSam}.

The papers is organized as follows.
We construct one-phase (Section~\ref{section2}) and two-phase soliton-like solutions (Section~\ref{section3}), as well as one-phase (Section~\ref{section4}) and two-phase peakon-like solutions (Section~\ref{section5}) of the vcCH equation~\eqref{CHolm_vc}. We also evaluate the accuracy of the constructed solutions and demonstrate the effectiveness of the proposed algorithm through examples in each case, by deriving exact solutions and presenting graphs of their first asymptotic approximation.

\section{One-phase soliton-like solutions}\label{section2}

Our analysis begins with the construction of asymptotic soliton-like solutions to equation~\eqref{CHolm_vc}, covering both one-phase and two-phase asymptotic structures. Following the idea of the nonlinear WKB method \cite{Miura}, we propose an algorithm for constructing an approximate (asymptotic) solution in the form of an expansion with respect to a small parameter. We describe a procedure that makes it possible to determine the terms of the expansion to an arbitrary order of smallness, although it should be noted that, in practice, one usually restricts its attention to the main term of asymptotics and, possibly, to its the first approximation.

Firstly, we specify the form of the asymptotic solutions. Accordingly, the general methodology of asymptotic soliton- and peakon-like solutions to singularly perturbed partial differential equations with variable coefficients these functions are searched as \cite{SamBrandSam}
\begin{equation}\label{as_sol}
u(x, t, \varepsilon) = \sum\limits_{j=0}^{\infty} \varepsilon^j [ u_j(x, t) + V_j (x, t, \tau) ] , \, \, \tau = \frac{x - \varphi(t)}{\varepsilon},
\end{equation}
where $ u_j(x, t) $ and $ V_j(x, t, \tau) $, $ j = 0,1, \dots $, are infinitely differentiable functions. The function $ \varphi(t) $ is referred to as the phase function and is assumed to be smooth. In the case of constant coefficients equations mentioned above, $ \varphi(t) $ is linear; however, for variable-coefficients systems, it undergos deformation and has to satisfy additional conditions.

The functions $ u_j(x, t) $, $ j = 0,1, \dots $, form the regular part of asymptotics~\eqref{as_sol}; they represent the background of the wave propagation. In contrast, the functions $ V_j (x, t, \tau) $, $ j = 0,1, \dots $, constitute the singular part of the asymptotics and are introduced to reproduce the soliton characteristics of the asymptotic solution.
This leads us to impose appropriate functional constraints to the singular terms $ V_j (x, t, \tau)$.
To this end, we introduce the functional spaces that will be used.

\subsection{Main definitions}

By $ {\overline C}^\infty_0({\mathbb R}) $ we denote the space of infinitely differentiable functions $ u(x) $, $ x \in {\mathbb R} $, satisfying the relation
$$
\frac{d^{ n} u(x)}{dx^n} \to 0
$$
as $ |x| \to +\infty $ for any non-negative integer $ n $.

Let $ S = S({\mathbb R}) $ be a Schwartz space, i.e., the space of infinitely differentiable rapidly decreasing on $ {\mathbb R} $ functions such that for any integers $ m, n \ge 0 $ the condition
$$
\sup\limits_{x \in {\mathbb R} } \bigl| {x^m D^n u (x)} \bigr| < + \infty
$$
holds.

By $\mathrm{H}_s({\mathbb R}) $, $ s \in {\mathbb R} $, we denote the Sobolev space \cite{Fritz, Evans}, whose elements belong to the space of tempered distributions $\mathcal{S}^*({\mathbb R})$, and their Fourier transforms $ F[g](\xi) $ satisfy the condition
\begin{equation}\label{norma}
	|| g ||_s^2 = \int_{-\infty}^{+\infty} \bigl(1 + |\xi|^2\bigr)^s | F[g](\xi) |^2 d \xi < + \infty .
\end{equation}

In constructing the singular terms of the asymptotic one-phase soliton-like solutions, we used the functional spaces $G_1$ and $G_0$ whose elements are chosen to reflect the characteristic properties of solitons \cite{Sam_2005, Maslov_book}. By $ G_1 $ we denote the space of infinitely differentiable functions
$f\colon {\mathbb R} \times [0;T] \times {\mathbb R} \to {\mathbb R} $
satisfying the two following conditions:
\begin{enumerate}
\item[1)]
For any non-negative integers $ n $, $ p $, $ q $ and $ r $
$$
\lim_{\tau \to + \infty} \tau^n \frac{\partial ^p}{\partial x^p} \frac{\partial ^q}{\partial t^q} \frac{\partial ^r}{\partial \tau^r} f (x, t, \tau) = 0,
$$
uniformly with respect to $ (x, t) \in K $, in any compact set $ K\subset {\mathbb R} \times [0;T] $.
\item[2)]
There exists a differentiable function
$$
f^-\colon {\mathbb R} \times[0;T]\to {\mathbb R}
$$
such that for any non-negative integers $ n $, $ p $, $ q $ and $ r $
$$
\lim_{\tau \to - \infty} \tau^{n} \frac{\partial ^p}{\partial x^p} \frac{\partial ^{q}}{\partial t^{q}}
 \frac{\partial ^{r}}{\partial \tau^{r}} \left( f (x, t, \tau) - f^{-}(x, t)\right) = 0,
$$
uniformly with respect to $ (x, t) \in K $, in any compact set $ K\subset {\mathbb R} \times [0;T] $.
\end{enumerate}

By $ G_0$ we denote the subspace of $G_1$, consisting of functions $ f\in G_1 $ such that
$$
\lim_{\tau \to - \infty} f (x, t, \tau) = 0
$$
uniformly with respect to the variables $ (x, t)\in K $, in any compact set $ K \subset {\mathbb R} \times [0;T] $.

We denote by $ {\widetilde G}_1 $ the space of infinitely differentiable functions $ g\colon [0; T] \times {\mathbb R} \to {\mathbb R} $, satisfying the two following conditions:
\begin{enumerate}
\item[1)]
For any non-negative integers $ n $, $ p $ and $ q $
$$
\lim_{\tau \to + \infty} \tau^n \frac{\partial ^p}{\partial t^p} \frac{\partial ^q}{\partial
 \tau^q} g (t, \tau) = 0,
$$
uniformly with respect to $ t \in [0; T] $.
\item[2)]
There exists a differentiable function $ g^-\colon [0;T]\to {\mathbb R} $ such that for any non-negative integers $ n $, $ p $ and $ q $
$$
\lim_{\tau \to - \infty} \tau^{n} \frac{\partial ^p}{\partial t^p} \frac{\partial ^{q}}{\partial \tau^{q}}
 ( g (t, \tau) - g^{-}(t) ) = 0,
$$
uniformly with respect to $ t \in [0; T] $.
\end{enumerate}

Let $ {\widetilde G}_0 $ be the subspace of $ {\widetilde G}_1 $, consisting of functions $ g\colon [0; T] \times {\mathbb R} \to {\mathbb R} $, such that
\[
\lim_{\tau \to - \infty} g (t, \tau) = 0,
\]
uniformly with respect to $ t \in [0; T] $.

We adopt the following definition, which plays a crucial role in this paper.

\begin{definition}[\cite{Sam_2005, Maslov_book}] \label{def_1}
A nontrivial function $$ u = u(x, t, \varepsilon),\quad (x, t) \in {\mathbb R} \times [0;T], $$ where $ \varepsilon>0 $ is a small parameter, is called an asymptotic one-phase so\-li\-ton-like function if for any integer $ N \ge 0 $ it can be repre\-sen\-ted as
\begin{align}
& u(x, t, \varepsilon) = \sum_{j=0}^N \varepsilon^j
		\left[u_j(x, t) + V_j \left(x, t, \tau \right)\right] + O(\varepsilon^{N+1}), \nonumber\\
& \tau = \frac{x - \varphi(t)}{\varepsilon}, \label{soliton_one-phase}
\end{align}
where $ \varphi\in C^{\infty} ([0;T]) $ is a scalar function, $u_j\in C^\infty ({\mathbb R}\times [0;T]) $ for $ j = 0, 1, \dots , N$, nontrivial function $V_0\in G_0, $ and $ V_j \in G_1 $ for $ j = 1, \dots, N $.
\end{definition}

Here and in the sequel, we use the notation of asymptotic analysis of the following form $ \Psi (x,t, \varepsilon ) = O\bigl( \varepsilon^N
\bigr) $, where $ \varepsilon > 0 $ is a small parameter. It means that there exist such values $ \varepsilon_0 > 0 $ and $ C > 0 $ that the inequality $ | \Psi (x,t,
\varepsilon ) | \le C \varepsilon^N $ holds for all $ \varepsilon \in (0; \varepsilon_0)$, $ (x,t) \in K $, where $ K \subset {\mathbb R} \times [0;T] $ is a compact set and value $ C $ is only depending on the number $ N $ and the set $ K $.

Move on to a problem of constructing the asymptotic soliton-like solutions to the vcCH equation~\eqref{CHolm_vc}.

\subsection{Differential equations for the asymptotical terms}\label{section2.2}

The asymptotic one-phase soliton-like solutions are \linebreak searched in the form \eqref{soliton_one-phase}. By substituting this expansion into equation \eqref{CHolm_vc} and limiting as $ \tau \to + \infty $, due to the property of the singular terms we deduce for the regular part of the asymptotics
$$
U_N(x, t, \varepsilon) = \sum_{j=0}^N \varepsilon^j u_j(x, t), \quad N \ge 0,
$$
the following partial differential equations
\begin{align} \label{reg_part_0}
& a_0(x, t) \frac{\partial u_0}{\partial t} + b_0(x, t) u_0 \frac{\partial u_0}{\partial x} = 0,
\\
 \label{reg_part_j}
& a_0(x, t) \frac{\partial u_j}{\partial t} +
b_0(x, t) \frac{\partial}{\partial x} (u_0 u_j ) = f_j(x, t), \, j = 1, 2, \dots, N,
\end{align}
where the function $ f_j(x, t) = f_j(x, t, u_0, u_1, \dots, u_{j-1}) $
is determined in a recursive manner after the corresponding functions $ u_0 , u_1 , \dots , u_{j-1}$ have been constructed.

In particular,
\begin{align*}
f_1(x, t) ={}& - a_1(x, t) \frac{\partial u_0}{\partial t} - b_1(x, t) u_0 \frac{\partial u_0}{\partial x}, \\
f_2(x, t) ={}& - a_1(x, t) \frac{\partial u_1}{\partial t} - a_2(x, t) \frac{\partial u_0}{\partial t} - b_0(x, t) u_1 \frac{\partial u_1}{\partial x} \\
& - b_1(x, t) \frac{\partial}{\partial x} \left( u_0 u_1 \right) - b_2(x, t) u_0 \frac{\partial u_0}{\partial x}  \\
& +\frac{\partial^3 u_0}{\partial t\partial x^2} + 2 \frac{\partial u_0}{\partial x} \frac{\partial^2 u_0}{\partial x^2} + u_0 \frac{\partial^3 u_0}{\partial x^3}.
\end{align*}

The regular terms of the asymptotics may be considered known, as the quasi-linear equation \eqref{reg_part_0} and linear equation \eqref{reg_part_j} are solvable by means of the method of characteristics.

Let us proceed to consideration of the singular part of the asymptotics
$$
V_N(x, t, \tau, \varepsilon) = \sum_{j=0}^N \varepsilon^j V_j(x, t, \tau), \quad N \ge 0.
$$
Its terms satisfy the following third order partial differential equations
\begin{align}
& - a_0(x, t) \varphi' \frac{\partial V_0}{\partial \tau} + b_0(x, t) ( u_0(x, t) + V_0 ) \frac{\partial V_0}{\partial \tau} \nonumber\\
&\qquad {}  + \varphi' \frac{\partial^3 V_0}
	{\partial\tau^3} - 2 \frac{\partial V_0}{\partial\tau} \frac{\partial^2 V_0}{\partial\tau^2} - V_0 \frac{\partial^3 V_0}{\partial\tau^3} = 0,\label{singular_part_0_soliton_sol}
\\
&
- a_0(x, t) \varphi' \frac{\partial V_j}{\partial \tau} + b_0(x, t) u_0(x, t) \frac{\partial V_j}{\partial \tau} + \varphi' \frac{\partial^3 V_j} {\partial\tau^3} \nonumber\\
& \qquad {}+ b_0(x, t) \frac{\partial}{\partial\tau} (V_0 V_j ) - 2 \frac{\partial}{\partial\tau} \left(\frac{\partial V_0}{\partial \tau} \frac{\partial V_j}{\partial \tau} \right)
\nonumber\\
& \qquad {} - V_0 \frac{\partial^3 V_j} {\partial\tau^3} - \frac{\partial^3 V_0}{\partial\tau^3} V_j = {F}_j (x, t, \tau), \quad j= 1, \dots, N,\label{singular_part_1_soliton_sol}
\end{align}
where the function
$$ {F}_j (x,t, \tau) = F_j(t, V_0(x, t, \tau), \dots , V_{j-1} (x, t, \tau))
$$
is defined recursively after the corresponding functions $ V_0 $, $ V_1 $, $ \dots $, $ V_{j-1} $ have been determined.

The analysis of the system \eqref{singular_part_0_soliton_sol}, \eqref{singular_part_1_soliton_sol} constitutes one of the central aspects in the construction of asymp\-to\-tic soliton-like solutions to the vcCH equation.

Although equation \eqref{singular_part_0_soliton_sol} is quasi-linear and homogeneous, and equation \eqref{singular_part_1_soliton_sol} is linear and non-homogeneous, their solvability must be studied within the function spaces $ G_0 $, $ G_1 $ introduced above, which represents a significant complication.
So, taking into account the properties of the functions $ V_j(x, t, \tau) $, $j= 1, \dots, N $, we may proceed to study the system \eqref{singular_part_0_soliton_sol}, \eqref{singular_part_1_soliton_sol} as follows. Firstly, we assume the function $ \varphi = \varphi(t) $ is known, and equations \eqref{singular_part_0_soliton_sol}, \eqref{singular_part_1_soliton_sol} are considered on the discontinuity curve
$$
\Gamma = \{(x, t)\in {\mathbb R}\times[0; T] \mid x - \varphi(t) =0 \},
$$
and $ t $ is considered below as a parameter.

Secondly, the functions determined along the curve $ \Gamma $ are extended to the desired domain.

We turn our attention to a detailed discussion of the algorithm. By notation
\begin{equation}\label{sing_term_curve}
v_j = v_j (t, \tau) = V_j (x, t, \tau) \bigr|_{\Gamma}, \quad j = 0, 1, \dots, N,
\end{equation}
from \eqref{CHolm_vc}, \eqref{singular_part_0_soliton_sol}, \eqref{singular_part_1_soliton_sol} we deduce differential equations for the functions $ v_j (t, \tau) $:
\begin{align}
 & ( \varphi' - v_0 ) \frac{\partial^3 v_0}{\partial\tau^3} + ( b_0 u_0 - a_0 \varphi' ) \frac{\partial v_0}{\partial \tau} + b_0 v_0 \frac{\partial v_0} {\partial\tau} \nonumber \\
 & \qquad{} - 2 \frac{\partial v_0}{\partial\tau} \frac{\partial^2 v_0}{\partial\tau^2}
	= 0, \label{sing_part_02_soliton_sol_1}
\\
& ( \varphi' - v_0 ) \frac{\partial^3 v_j} {\partial\tau^3} + ( b_0 u_0 - a_0 \varphi' ) \frac{\partial v_j}{\partial \tau} +
 b_0 \frac{\partial }{\partial \tau } ( v_0 v_j )
\nonumber\\
& \qquad{} - 2 \frac{\partial }{\partial \tau } \left( \frac{\partial v_0}{\partial \tau} \frac{\partial v_j}{\partial \tau} \right)	-
	\frac{\partial^3 v_0}{\partial \tau^3} v_j = {\cal F}_j, \quad j = 1, \dots, N ,\label{sing_part_j2_soliton_sol_1}
\end{align}
where the function
$$ {\cal F}_j = {\cal F}_j (t, \tau) = F_j(t, V_0(x, t, \tau), \dots , V_{j-1} (x, t, \tau)) \bigr|_{ x 	= \varphi(t)}
$$ 
is found recursively after the functions $ V_0 , V_1 , \dots , V_{j-1} $ have been calculated and their restrictions to the curve $ \Gamma $ have been evaluated.
In particular,
\begin{align}
{\cal F}_1 (t, \tau) ={}& \frac{\partial^3 v_0}{\partial\tau^2 \partial t} - a_0 \frac{\partial v_0}{\partial t} +
 ( \tau a_{0x} + a_1 ) \varphi' \frac{\partial v_0}{\partial \tau}
\nonumber\\
& - \left[ \tau \frac{\partial}{\partial x} \bigl( b_{0} u_0 \bigr) + b_0 u_1 + b_1 u_0 \right] \frac{\partial v_0}{\partial \tau}
+ (\tau b_{0x} + b_{1} ) v_0 \frac{\partial v_0}{\partial \tau}.\label{function_F}
\end{align}
Here and below $ a_k = a_k(\varphi, t)$, $ b_k = b_k(\varphi, t)$, $ u_k = u_k(\varphi, t)$, $ k = 0, 1 $, $ \varphi = \varphi(t) $, $ t \in [0; T]$.

\subsection{The main singular term of the soliton-like solution}

Determining the main term of the singular part of the asymptotics is a~crucial step in the approximate description of modulated soliton solutions, that arise due to the system’s emerging variable characteristics. In contrast to the analogous problem for the vcKdV  \cite{Sam_2005}, vcBBM \cite{Sam_JMP}, and vcmCH \cite{SamBrandSam} equations, in present case the main term can only be expressed in implicit form. 
The other terms of the singular part can be found in exact form.

We now proceed to the analysis of equation~\eqref{sing_part_02_soliton_sol_1}. Upon integration, it yields the following second-order differential equation:
\begin{multline} \label{sing_part_02_soliton_sol_2}
( \varphi' - v_0 ) \frac{\partial^2 v_0}{\partial\tau^2} + ( b_0 u_0 - a_0 \varphi' ) v_0  +\frac{1}{2} b_0 v_0^2 - \frac{1}{2} \left(\frac{\partial v_0}{\partial\tau}\right)^2 = 0,
\end{multline}
which can be equivalently rewritten as a system of first-order differential equations
\begin{equation} \label{sing_part_02_soliton_sol_3}
	\begin{aligned}
 \frac{d v_0}{d \tau} & = y, \\
	\displaystyle \frac{d y}{d \tau}& = \frac{y^2 - b_0 v_0^2 + 2 (a_0 \varphi' - b_0 u_0) v_0}{2(\varphi' - v_0)}.
	\end{aligned}
\end{equation}

The first integral of the system \eqref{sing_part_02_soliton_sol_3} is a function
$$
H(v_0, y) = b_0 v_0^3 - 3 (v_0 - \varphi')y^2 - 3 ( a_0 \varphi' - b_0 u_0 ) v_0^2.
$$

Accordingly properties of the function $ v_0 $ at the infinity in $ \tau $, we find the first-order differential equation of the form
\begin{equation} \label{v_0}
	\left(\frac{d v_0}{d \tau} \right)^2 = \frac{b_0 v_0^3 - 3 ( a_0 \varphi' - b_0 u_0)v_0^2}{3 (v_0 - \varphi')}.
\end{equation}

We focus on particular solutions of equation \eqref{v_0} which possess a soliton character.
Analogously to the Camas\-sa–Holm equation \cite{Li, Schiff_1998}, such a solution can be represented in implicit form as
\begin{equation} \label{main_term}
v_0(t, \theta) = \frac{ 3 ( a_0 \varphi' - b_0 u_0 ) }{b_0} \, \left(\cosh^2 \theta - \frac{3 ( a_0 \varphi'- b_0 u_0)}{b_0 \varphi'} \sinh^2 \theta\right)^{-1}.	
\end{equation}
Here, the variable $ \theta $ is defined in terms $ \tau $ through the relation
\begin{equation}\label{theta}
\tau = 2 \sqrt{ \frac{ \varphi'}{a_0 \varphi' - b_0 u_0}} \theta + \sqrt{\frac{3}{b_0}}\ln{\frac{\cosh(\theta - \theta_0)}{\cosh(\theta + \theta_0)}},
\end{equation}
where
\[
\theta_0 = \theta_0 (t) = \tanh^{-1} \sqrt{\frac{3(a_0 \varphi' - b_0 u_0)}{b_0 \varphi'}}.
\]
The last equalities require the following assumptions:
\begin{equation}\label{cond_w}
	 0 < \frac{a_0 \varphi' - b_0 u_0}{b_0 \varphi' } < \frac{1}{3}, \quad b_0 > 0,
\end{equation}
which specify the interval $ [0; T] $ on which the asymptotics is constructed.

Let us note that, in the case when the main term of the regular part of the asymptotic expansion is zero, i.e., $ u_0(x, t) = 0 $, condition \eqref{cond_w} reduces to the following simple inequality:
\begin{equation*}
0 < \frac{a_0}{b_0} < \frac{1}{3}.	
\end{equation*}

The function $v_0(t, \theta)$ in \eqref{main_term} decays rapidly with respect to $\theta$ and, according to Lemma~\ref{lem: lemma_1}, also with respect to $\tau$, which is important for our analysis.

\begin{lemma}\label{lem: lemma_1}
$$
\tau \sim 2 \sqrt{ \frac{ \varphi^{ \prime}}{a_0 \varphi^{ \prime} - b_0 u_0}} \theta \quad \mbox {as} \quad |\theta| \to \infty.
$$	
\end{lemma}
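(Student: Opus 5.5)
The plan is to work directly from the explicit relation \eqref{theta} and show that its logarithmic term stays bounded as $|\theta|\to\infty$, while the linear term grows without bound. Denote
\[
k = 2\sqrt{\frac{\varphi'}{a_0\varphi' - b_0 u_0}}, \qquad L(\theta) = \ln\frac{\cosh(\theta-\theta_0)}{\cosh(\theta+\theta_0)} .
\]
Then \eqref{theta} reads $\tau = k\theta + \sqrt{3/b_0}\,L(\theta)$. Condition \eqref{cond_w} ensures that all quantities are well defined: $b_0>0$; $k$ is real and positive, since $\varphi'/(a_0\varphi'-b_0u_0) = \bigl(b_0\varphi'\bigr)^{-1}\cdot b_0\varphi'^2/(a_0\varphi'-b_0u_0) >0$ by the sign information in \eqref{cond_w}; and $\theta_0$ is a finite real number because its argument lies in $(0;1)$. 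Throughout, $t$ is a fixed parameter, and all constants are uniform on $[0;T]$ because the coefficients are continuous there.

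The key step is to bound $L(\theta)$. Write
\[
\cosh(\theta\mp\theta_0) = \tfrac12 e^{|\theta|}\,e^{\mp\theta_0\,\mathrm{sgn}\,\theta}\bigl(1+e^{-2|\theta\mp\theta_0|}\bigr)\cdot e^{|\theta\mp\theta_0|-|\theta|\pm\theta_0\mathrm{sgn}\,\theta} .
\]
For $|\theta|>\theta_0$ the last factor equals $1$. This gives
\[
L(\theta) = -2\theta_0\,\mathrm{sgn}\,\theta + \ln\frac{1+e^{-2|\theta-\theta_0|}}{1+e^{-2|\theta+\theta_0|}} \;\to\; \mp 2\theta_0 \quad (\theta\to\pm\infty).
\]
The same bound also follows more simply from $|\ln\cosh(\theta-\theta_0)-\ln\cosh(\theta+\theta_0)|\le 2\theta_0$, since $(\ln\cosh)'=\tanh$ is bounded by $1$ in absolute value and the mean value theorem applies. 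Either way, $|L(\theta)|\le 2\theta_0$ for all $\theta$.

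Finally we combine the two pieces:
\[
\frac{\tau}{k\theta} = 1 + \frac{\sqrt{3/b_0}\,L(\theta)}{k\theta},
\]
and the second term is bounded by $2\sqrt{3/b_0}\,\theta_0/(k|\theta|)\to0$. Hence $\tau\sim k\theta$ as $|\theta|\to\infty$, which is the statement. In fact we get the sharper form $\tau = k\theta \mp 2\sqrt{3/b_0}\,\theta_0 + O(e^{-2|\theta|})$. As a remark, one can also check that $d\tau/d\theta = k + \sqrt{3/b_0}\,(\tanh(\theta-\theta_0)-\tanh(\theta+\theta_0))$ is positive, so the map $\theta\mapsto\tau$ is a bijection; this is what transfers the exponential decay of $v_0$ in $\theta$ into decay in $\tau$. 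There is no real obstacle here. The only point needing care is checking that the square roots and $\tanh^{-1}$ in \eqref{theta} are well defined under \eqref{cond_w}, and after that the mean value bound on $L$ settles everything.
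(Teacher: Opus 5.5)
Your argument is correct, but it is not the route the paper takes. The paper's proof consists only of the observation that $f(\theta)=k\theta+\sqrt{3/b_0}\,L(\theta)$ is strictly increasing and maps $\mathbb{R}$ bijectively onto $\mathbb{R}$. That is what the authors need later, since it makes $v_0$ a single-valued function of $\tau$. However, monotonicity and bijectivity alone do not give $\tau\sim k\theta$. For example, $\theta\mapsto\sinh\theta$ is an increasing bijection that is not asymptotically linear, and it would turn $e^{-2|\theta|}$ into only polynomial decay in $\tau$.

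You instead prove the stated equivalence directly from the uniform bound $|L(\theta)|\le 2\theta_0$, obtained from the mean value theorem with $|\tanh|\le 1$. This yields $\tau=k\theta+O(1)$ and the refined form $\tau=k\theta\mp 2\sqrt{3/b_0}\,\theta_0+O(e^{-2|\theta|})$. That bounded-remainder statement is exactly what turns the $e^{-2|\theta|}$ decay of $v_0$ into decay like $e^{-2|\tau|/k}$.

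Your remark on $d\tau/d\theta$ recovers the paper's monotonicity claim, but you only assert positivity. Write out that $d\tau/d\theta$ attains its minimum at $\theta=0$, where it equals $k-2\sqrt{3/b_0}\tanh\theta_0=k\,(1-\tanh^2\theta_0)>0$. This holds because $k\tanh\theta_0=2\sqrt{3/b_0}$ and $\tanh^2\theta_0=3(a_0\varphi'-b_0u_0)/(b_0\varphi')<1$. This is where the upper bound $1/3$ in \eqref{cond_w} is actually used; your asymptotic argument needs only that $k$ and $\theta_0$ are finite. The resulting positive lower bound on $d\tau/d\theta$ is also what controls the $\tau$-derivatives of $v_0$ required for $v_0\in{\widetilde G}_0$.
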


\begin{proof}
To prove the lemma, it suffices to observe that the function
$$
f(\theta) = 2 \sqrt{ \frac{ \varphi'}{a_0 \varphi' - b_0 u_0}} \theta + \sqrt{\frac{3}{b_0}}\ln{\frac{\cosh(\theta - \theta_0)}{\cosh(\theta + \theta_0)}}
$$
is strictly increasing on $ \theta \in {\mathbb R} $ and the mapping $ f: {\mathbb R} \mapsto {\mathbb R} $ is bijective.
\end{proof}

Therefore, $v_0(t, \tau) \in {\widetilde G}_0$, and it can be extended from $\Gamma$ to $V_0(x, t, \tau)$ as $V_0(x, t, \tau) = v_0(t, \tau)$, and the resulting function belongs to the space $G_0$. Consequently, the main term of the asymptotic one-phase soliton-like solution is written as
\begin{equation}\label{Y_0}
Y_0(x, t, \varepsilon) = u_0(x,t) + v_0(t, \tau).
\end{equation}

\subsection{Higher singular terms}

The starting point for defining the higher-order singular terms is equation~\eqref{sing_part_j2_soliton_sol_1}, where we consider $ j = 1, \dots, N $. The procedure for finding the singular terms is carried out in two steps: first, we establish the conditions for the existence of solutions to equation~\eqref{sing_part_j2_soliton_sol_1} in the space $ {\widetilde G}_1 $ and determine their representation form; then, we extend the obtained functions from the curve~$\Gamma$.

We begin by integrating \eqref{sing_part_j2_soliton_sol_1} in $ \tau $ from $ - \infty $ to $ \tau $:
\begin{equation}\label{sing_part_j_1}
( \varphi' - v_0) v_j'' - v_j' v_0' + [ b_0 ( v_0 + u_0 ) - a_0 \varphi' - v_0''] v_j
 = \Phi_j (t, \tau),
\end{equation}
where
\begin{equation}\label{function_Phi}
\Phi_j (t, \tau) = \int_{-\infty}^{\tau} {\cal F}_j(t, \xi) d\xi + E_j(t), 
\end{equation}
and the constant of integration $ E_j(t) $ does not depend on the variable $ \tau $. It can be chosen from formula \eqref{function_Phi} under the condition $ \lim_{\tau \to +\infty} \Phi_j (t, \tau) = 0 $.

According to the definition of asymptotic soliton-like solutions, we need to study the conditions for the existence of solutions to equation \eqref{sing_part_j_1} in the space~$\widetilde{G}_1$. To this end, we introduce the differential operator
\begin{align} \label{operator_L}
	L = L\left(t, \tau, \frac{d}{d \tau}\right): = \rho \frac{d^2}{d\tau^2} - v_{0\tau} \frac{d}{d\tau} + b_0 ( v_0 + u_0 ) \nonumber \\
 - a_0 \varphi' - v_{0\tau\tau},
\end{align}
where $ \rho = \rho(t, \tau) = \varphi'(t) - v_0 (t, \tau) $, $ t \in [0;T] $,
and rewrite linear differential equation \eqref{sing_part_j_1} in the operator form
\begin{equation}\label{equation_L}
	L v = \Phi.
\end{equation}

Recall that $ v = v(t, \tau) $, $ \Phi = \Phi(t, \tau) $, and $ t \in [0;T] $ is a parameter.

We use operator equation \eqref{equation_L} to find conditions under which differential equation \eqref{sing_part_j_1} has a solution in the space $ {\widetilde G}_1 $. To do it, we apply the results of the theory pseu\-do\-differential operators \cite{Hor}, in particular from
\cite{GR1, GR2}.

\subsubsection{Solvability of operator equation \eqref{equation_L} in the space $ \mathcal{S}({\mathbb R}) $ }

In the sequel, we use the result, which will be essential for our analysis.

\begin{theorem} \label{thm: th_1}
Let the following assumptions hold for all $t \in [0, T]$:
\begin{enumerate}
\item[1.]
Inequalities \eqref{cond_w} take place;
\item[2.]
The function $ \tau\mapsto \Phi(t,\tau)$ belongs to $\mathcal{S}(\mathbb R) $.
\end{enumerate}

Then, for any $t\in[0; T]$ equation \eqref{equation_L} has a solution $v(t,\cdot) $ in the space $ \mathcal{S}(\mathbb R) $ if and only if the function $ \Phi $
satisfies the orthogonality condition of the form
\begin{equation} \label{ort_cond}
\int_{-\infty}^{+\infty} \Phi_\tau (t, \tau) v_0(t, \tau) d \tau = 0, 
\end{equation}
where the function $ v_0 $ is given by formula \eqref{main_term} and $ t\in [0;T] $.
\end{theorem}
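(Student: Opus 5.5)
The plan is to treat $t$ as a frozen parameter and regard $L$ as a second-order ordinary differential operator on $\mathbb R$ whose coefficients tend to constants as $|\tau|\to\infty$. The key structural facts are two. First, $\rho=\varphi'-v_0$ stays bounded away from zero; this holds because, by \eqref{main_term} and \eqref{cond_w}, the maximum of $v_0$ is $3(a_0\varphi'-b_0u_0)/b_0<\varphi'$. Second, $L$ is the linearization of \eqref{sing_part_02_soliton_sol_2} at $v_0$, so it has a known kernel element.

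\textbf{Kernel.} Differentiating \eqref{sing_part_02_soliton_sol_2} in $\tau$ gives $L v_{0\tau}=0$. Hence $v_{0\tau}\in\mathcal S(\mathbb R)$ spans the decaying part of $\ker L$. At infinity $L$ tends to $\varphi'\frac{d^2}{d\tau^2}-(a_0\varphi'-b_0u_0)$, which is hyperbolic because $a_0\varphi'-b_0u_0>0$. Therefore the second kernel solution, obtained from the Wronskian, grows exponentially, and $\ker L\cap\mathcal S=\mathrm{span}\{v_{0\tau}\}$.

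\textbf{Formal adjoint.} The adjoint is $L^*w=(\rho w)''+(v_{0\tau}w)'+(b_0(v_0+u_0)-a_0\varphi'-v_{0\tau\tau})w$. Using $\rho'=-v_{0\tau}$, this simplifies to
\[
L^*w=\rho w''-3v_{0\tau}w'+\bigl(b_0(v_0+u_0)-a_0\varphi'-v_{0\tau\tau}\bigr)w .
\]
Its decaying kernel is one-dimensional by the same asymptotic argument. I will guess the element $w$ so that $\int\Phi w\,d\tau$ reproduces \eqref{ort_cond} after integration by parts: $\int\Phi_\tau v_0\,d\tau=-\int\Phi v_{0\tau}\,d\tau$. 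This suggests that $L^*$ annihilates a function related to $v_{0\tau}$, perhaps up to a factor of $\rho$. Checking $L^*(\rho^{k}v_{0\tau})=0$ for the right $k$, and using \eqref{v_0} and \eqref{sing_part_02_soliton_sol_2} to eliminate $v_{0\tau\tau}$ and $v_{0\tau}^2$, is the main computational obstacle. If the natural element comes out as $\rho^{-1}v_{0\tau}$ or similar, I would rewrite $L$ in a symmetric (Sturm--Liouville) form by multiplying by an integrating factor $\mu$ with $(\mu\rho)'=-\mu v_{0\tau}$, i.e. $\mu=$ const$\cdot\rho^{-2}$ times the appropriate power. The equation $Lv=\Phi$ is then equivalent to $\tilde Lv=\mu\Phi$, and the condition becomes orthogonality of $\mu\Phi$ to $v_{0\tau}$. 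Reconciling this with the stated form \eqref{ort_cond} would be checked at this point.

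\textbf{Solvability (Fredholm alternative).} Since the coefficients converge exponentially to constants and the limit symbol $-\varphi'\xi^2-(a_0\varphi'-b_0u_0)$ does not vanish, $L$ is an elliptic operator of the class treated in \cite{GR1, GR2}. It is therefore Fredholm from $H_{s+2}$ to $H_s$, and its range is the annihilator of $\ker L^*$. This gives necessity directly: pairing $Lv=\Phi$ with the adjoint kernel element gives zero. For sufficiency, under the condition we obtain $v\in H_{s+2}$ for all $s$, hence $v$ is smooth and bounded.

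\textbf{Regularity.} To upgrade to $\mathcal S$, I would use variation of constants with the fundamental system $\{v_{0\tau},\,\tilde v\}$, where $\tilde v$ is the exponentially growing solution. The orthogonality condition is exactly what kills the growing component at $+\infty$ and at $-\infty$ simultaneously. The resulting integrals decay faster than any power of $\tau$ because $\Phi$ does, and derivatives are controlled by the equation itself, since $\rho$ is bounded away from zero. This yields $v\in\mathcal S(\mathbb R)$, uniquely up to adding $Cv_{0\tau}$.
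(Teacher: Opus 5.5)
\textbf{There is a genuine gap at the step that produces condition \eqref{ort_cond}.} Your overall architecture is the paper's: the Noether property of $L\colon H_{s+2}\to H_s$, the range as the annihilator of the decaying kernel of $L^*$, a non-decaying second solution via the Wronskian, and a final integration by parts. But identifying the cokernel is both miscomputed and left open.

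Your formula for the adjoint is wrong. Expanding, and using $\rho_\tau=-v_{0\tau}$,
\[
(\rho w)''+(v_{0\tau}w)'=\rho w''+2\rho_\tau w'+\rho_{\tau\tau}w+v_{0\tau\tau}w+v_{0\tau}w'=\rho w''-v_{0\tau}w' .
\]
So the coefficient of $w'$ is $-v_{0\tau}$, not $-3v_{0\tau}$. In fact $L^*=L$, because $L$ is already in Sturm--Liouville form:
\[
L=\frac{d}{d\tau}\Bigl(\rho\,\frac{d}{d\tau}\Bigr)+b_0(v_0+u_0)-a_0\varphi'-v_{0\tau\tau}.
\]
Your own integrating-factor equation $(\mu\rho)_\tau=-\mu v_{0\tau}$ reduces to $\mu_\tau\rho=0$. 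Hence $\mu$ is constant, not a power of $\rho$.

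With the adjoint you wrote, $v_{0\tau}$ is not annihilated. Moreover, no candidate $\rho^k v_{0\tau}$ with $k\neq 0$ can lie in the one-dimensional decaying kernel of $L^*=L$. Such a candidate would give a condition $\int\Phi\rho^k v_{0\tau}\,d\tau=0$ that is not equivalent to \eqref{ort_cond}. As written, the proposal therefore never determines which function spans the cokernel, and that is the heart of the theorem.

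\textbf{The fix is immediate.} Since $L^*=L$, the decaying kernel of $L^*$ is spanned by $v_{0\tau}$, which you already obtained by differentiating \eqref{sing_part_02_soliton_sol_2}. The range condition is then $\int\Phi v_{0\tau}\,d\tau=0$. Integrating by parts, using $\Phi v_0\to 0$ at $\pm\infty$, turns it into \eqref{ort_cond}; this is exactly the paper's argument.

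Your variation-of-constants step, carried out explicitly, would have revealed the same thing independently. Divide $Lv=\Phi$ by $\rho$. Abel's formula then gives $\rho W=\mathrm{const}$ for the Wronskian $W$ of $v_{0\tau}$ and the growing solution $\tilde v$. A particular solution is
\[
(\rho W)^{-1}\Bigl(\tilde v\int v_{0\tau}\Phi\,d\tau-v_{0\tau}\int\tilde v\,\Phi\,d\tau\Bigr).
\]
Suppressing the growth of $\tilde v$ at both $+\infty$ and $-\infty$ forces precisely $\int_{-\infty}^{+\infty}v_{0\tau}\Phi\,d\tau=0$. This elementary ODE argument gives both directions of the theorem without the pseudodifferential Noether theory.
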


\begin{proof}
For proving the theorem it is enough to take into account that the symbol of the operator $ L $ can be written as
$$
p_L(t, \tau, \xi) = - \rho \xi^2 + i \xi v_{0\tau} + b_0 (v_0 + u_0) - a_0 \varphi' - v_{0\tau\tau} .
$$

Since $ v_0(t, \tau) \in {\mathcal S} ({\mathbb R}) $, it can be shown, analogously to~\cite{SamBrandSam}, that for any $ s \in {\mathbb R} $ the operator $ L\colon H_{s+2} \to H_s $ is Noetherian.

The operator $ L^* $, adjoint to operator \eqref{operator_L}, can be written as
$$
L^* = \frac{d^2}{d\tau^2} \rho + \frac{d}{d\tau} v_{0\tau} + b_0 (v_0 + u_0) - a_0\varphi' - v_{0\tau\tau}.
$$
In virtue of equation \eqref{sing_part_02_soliton_sol_1}, the function $ v_{0\tau} $ belongs to the kernel of the operator $ L^*\colon S ({\mathbb R}) \to S ({\mathbb R}). $
Another linear independent solution to the equation $ L^* v =0 $ is given as
$$
w_0 (t, \tau) = v_{0\tau} (t, \tau) \int_{\tau_0}^\tau \frac{d \xi}{\rho(t, \xi) v_{0 \xi }^2 (t, \xi)}, 
$$
where $ \tau_0 \in [ - \infty ; + \infty ) $.

Considering the Wronskian of the solutions $ v_{0 \tau} (t, \tau)$ and $ w_0 (t, \tau)$, and using its constancy, we let $ \tau \to +\infty $ to conclude that $ w_0 \not\in S({\mathbb R}) $. This observation allows us to formulate the orthogonality condition -- which provides the necessary and sufficient condition for the existence of a solution to equation \eqref{equation_L} in the space $ S ({\mathbb R}) $ -- in the form
\begin{equation}\label{orthog_cond_2}
\int_{-\infty}^{+\infty} \Phi(t, \tau) v_{0\tau} (t, \tau) d \tau =0, \quad t \in[0;T].
\end{equation}
	
Summarizing the arguments above, we conclude that equation \eqref{equation_L} has a~solution in the space $ \mathcal{S}({\mathbb R}) $ if and only if the orthogonality condition \eqref{orthog_cond_2} is satisfied. Due to the property $ \Phi \in \mathcal{S}({\mathbb R}) $, this condition finally leads to \eqref{ort_cond}.

Theorem~\ref{thm: th_1} is proved.
\end{proof}


\subsubsection{Solvability of differential equation \eqref{sing_part_j2_soliton_sol_1} in the space $ {\widetilde G}_1 $ }

Theorem~\ref{thm: th_1} allows to obtain condition for existence of a solution of differential equation \eqref{sing_part_j2_soliton_sol_1} in the space $ {\widetilde G}_1 $.
The following statements are valid.

\begin{lemma}\label{lem: lemma_2}
Let inequality \eqref{cond_w} be true and $ {\cal F}_j \in {\widetilde G_{0}} $. Then equation \eqref{sing_part_j2_soliton_sol_1} has a solution $ v_j \in {\widetilde G}_1 $ if and only if the function $ {\cal F}_j $ satisfies the orthogonality condition of the form
\begin{equation} \label{ort_cond_20}
\int_{-\infty}^{+\infty} {\cal F}_j(t, \tau) v_0(t, \tau) d \tau = 0,
\end{equation}
where the function $ v_0(t, \tau) $ is defined via formula \eqref{main_term} and $ t \in [0;T] $.
\end{lemma}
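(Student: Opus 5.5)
We reduce Lemma~\ref{lem: lemma_2} to Theorem~\ref{thm: th_1} by integrating once in $\tau$ and splitting off the nondecaying part of $v_j$ at $\tau\to-\infty$. Equation \eqref{sing_part_j2_soliton_sol_1} is the $\tau$-derivative of \eqref{sing_part_j_1}, i.e.\ of $Lv_j=\Phi_j$, up to the constant $E_j(t)$. Here $\Phi_j$ is given by \eqref{function_Phi} and $E_j$ is fixed by $\lim_{\tau\to+\infty}\Phi_j=0$, so that $E_j(t)=-\int_{-\infty}^{+\infty}{\cal F}_j\,d\xi$.

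Since ${\cal F}_j\in\widetilde G_0$, the integral $\int_{-\infty}^\tau{\cal F}_j\,d\xi$ tends to its total integral rapidly as $\tau\to+\infty$. It is not automatically rapidly decaying as $\tau\to-\infty$, however, because $\widetilde G_0$ only requires ${\cal F}_j\to0$ there. If ${\cal F}_j$ has rapid decay at $-\infty$ as well (which I expect from its structure, e.g.\ \eqref{function_F}, whose terms all carry a factor of $v_0$ or its derivatives), then $\Phi_j\to E_j$ rapidly at $-\infty$ and $\Phi_j\to0$ rapidly at $+\infty$. In this setting I would look for $v_j=v_j^-(t)\,\omega(\tau)+\tilde v_j$, where $\omega$ is a smooth cutoff equal to $1$ near $-\infty$ and $0$ near $+\infty$, and $\tilde v_j\in\mathcal S$.

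To fix $v_j^-$, let $\tau\to-\infty$ in \eqref{sing_part_j_1}. Since $v_0$ and its derivatives vanish there, the equation reduces to $(b_0u_0-a_0\varphi')v_j^-=E_j$. By \eqref{cond_w}, $a_0\varphi'-b_0u_0\neq0$, so this determines $v_j^-=-E_j/(a_0\varphi'-b_0u_0)$ uniquely. Then $\tilde\Phi:=\Phi_j-L(v_j^-\omega)$ belongs to $\mathcal S$: the constant parts cancel at $-\infty$, all terms with $v_0$ decay, and $\omega'$, $\omega''$ have compact support. By Theorem~\ref{thm: th_1}, $L\tilde v_j=\tilde\Phi$ is solvable in $\mathcal S$ iff $\int\tilde\Phi\,v_{0\tau}\,d\tau=0$, which \eqref{ort_cond} equivalently states as $\int\tilde\Phi_\tau v_0\,d\tau=0$.

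It remains to translate this condition. Write $\tilde\Phi_\tau=\Phi_{j\tau}-\partial_\tau L(v_j^-\omega)={\cal F}_j-v_j^-\partial_\tau L\omega$. Then
\[
\int\tilde\Phi_\tau v_0\,d\tau=\int{\cal F}_jv_0\,d\tau-v_j^-\int(\partial_\tau L\omega)\,v_0\,d\tau .
\]
The key step is showing that the second integral vanishes. Integrating by parts, it equals $-\int (L\omega)v_{0\tau}\,d\tau=-\int\omega\,L^*v_{0\tau}\,d\tau$ plus boundary terms. Here $L^*v_{0\tau}=0$ by the proof of Theorem~\ref{thm: th_1}. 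The boundary terms vanish because $v_{0\tau}$ and its derivatives decay at both ends while $\omega$ is bounded, and $L\omega$ tends to the constant $b_0u_0-a_0\varphi'$ at $-\infty$, multiplied by $v_0\to0$. This matching of boundary terms is the delicate point, and I would check it carefully using the explicit form of $L^*$. Granting it, the solvability condition becomes exactly \eqref{ort_cond_20}. Necessity follows by running the argument backwards: any solution $v_j\in\widetilde G_1$ has a limit $v_j^-$, which is forced to be the value above, and $v_j-v_j^-\omega\in\mathcal S$. Smoothness in $t$ follows from smooth dependence of all the data on $t$.
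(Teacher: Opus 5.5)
Your argument is correct and is essentially the paper's proof. Your $v_j^-\omega+\tilde v_j$ is the paper's decomposition $v_j=\nu_j\eta_j+\psi_j$ with $\nu_j=-(a_0\varphi'-b_0u_0)^{-1}E_j$, reduced to Theorem~\ref{thm: th_1}. The boundary-term check you flag does go through: since $\rho_\tau=-v_{0\tau}$, the operator $L=\partial_\tau(\rho\,\partial_\tau)+b_0(v_0+u_0)-a_0\varphi'-v_{0\tau\tau}$ is formally self-adjoint with $Lv_{0\tau}=0$, and the boundary term $\bigl[\rho\,(\omega_\tau v_{0\tau}-\omega v_{0\tau\tau})\bigr]_{-\infty}^{+\infty}$ vanishes. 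Your hedge about decay at $-\infty$ is unnecessary, because $\widetilde G_0$ is the subspace of $\widetilde G_1$ with $g^-=0$, so condition 2) of $\widetilde G_1$ already gives rapid decay of ${\cal F}_j$ and all its derivatives as $\tau\to-\infty$.
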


\begin{proof}
Since the proof of the lemma follows the same reasoning as in \cite{SamBrandSam}, which deals with the construction of asymptotic soliton-like solutions to the vcmCH equation, we provide here only a brief outline of its main steps.

As in the case of the vcmCH equation, we show that the solution $v_j$ of equation~\eqref{sing_part_j2_soliton_sol_1} can be represented as
\begin{equation}\label{vyglyad}
v_j(t, \tau) = \nu_j(t) \eta_j(t, \tau) + \psi_j(t, \tau),
\end{equation}
where
\begin{equation}\label{nu}
\nu_j(t) = - ( a_0 \varphi' - b_0 u_0 )^{-1} \lim_{\tau\to-\infty} \Phi_j(t, \tau),
\end{equation}
the function $ \Phi_j(t, \tau) $ is defined by~\eqref{function_Phi};
$ \eta_j \in {\widetilde G}_1 $ and additionally
\[
\lim_{\tau\to -\infty} \eta_j(t, \tau) = 1 , \quad
\psi_j \in {\widetilde G_0} .
\]

To prove relation \eqref{vyglyad} we substitute it into differential equation \eqref{sing_part_j_1}, take into account operator equation \eqref{equation_L} and conclude that the function $\tau\mapsto \psi_j (t, \tau ) $ has to satisfy the inhomogeneous equation
\begin{equation} \label{ad_eq_2}
L \psi_j = \Phi_j - \nu_j L \eta_j,
\end{equation}
where the right-side function $ \Phi_j - \nu_j L \eta_j \in {\mathcal{S}({\mathbb R})} $. 
	
In virtue of Theorem~\ref{thm: th_1} equation \eqref{ad_eq_2} has a solution in the space $ {\widetilde G_0} $ if and only if the following orthogonality condition
\begin{equation} \label{orth_cond_1}
\int_{-\infty}^{+\infty} \left( \Phi_j - \nu_j L \eta_j \right) v_{0\tau} d \tau = 0 
\end{equation}
holds.

Finally, upon integrating \eqref{orth_cond_1} and invoking \eqref{function_Phi}, \eqref{sing_part_j2_soliton_sol_1}, and \eqref{equation_L}, the statement of Lemma~\ref{lem: lemma_2} follows.
\end{proof}

\begin{remark}\label{rem: remark_2}
The representation of the function $ v_j(t, \tau) $ in form \eqref{vyglyad} is used to extend it from the curve $ \Gamma $ to the desired domain.
\end{remark}

Of particular interest is the case when equation \eqref{sing_part_j2_soliton_sol_1} admits solutions whose properties are entirely analogous to those of the main singular term, that is, solutions belonging to the space $ {\widetilde G}_0 $. The conditions ensuring the existence of such solutions are provided by the following lemma.

\begin{lemma} \label{lem: lemma_3}
 Let the right-side function $ {\cal F}_j $ in \eqref{sing_part_j2_soliton_sol_1} belong to the space $ {\widetilde G_{ 0}} $, inequalities \eqref{cond_w} and the orthogonality condition \eqref{ort_cond} be satisfied.
Then the solution $ v_j $ of equation \eqref{sing_part_j2_soliton_sol_1} is an element of the space $ {\widetilde G_{ 0}} $ if and only if the condition
\begin{equation}\label{ort_cond_2_0}
\lim_{\tau \to -\infty} \Phi_j(t, \tau) = 0
\end{equation}
holds.
\end{lemma}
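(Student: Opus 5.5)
The plan is to use the representation \eqref{vyglyad} from the proof of Lemma~\ref{lem: lemma_2}. Under the hypotheses, \eqref{cond_w} holds, ${\cal F}_j\in\widetilde G_0$, and the orthogonality condition holds, so that lemma gives a solution $v_j\in\widetilde G_1$ of the form
$$
v_j=\nu_j\eta_j+\psi_j,\qquad \nu_j(t)=-(a_0\varphi'-b_0u_0)^{-1}\lim_{\tau\to-\infty}\Phi_j(t,\tau),
$$
with $\eta_j\to1$ and $\psi_j\in\widetilde G_0$. Since \eqref{cond_w} implies $a_0\varphi'-b_0u_0\neq0$, the solution belongs to $\widetilde G_0$ exactly when its limit $v_j^-(t)=\lim_{\tau\to-\infty}v_j$ vanishes. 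The whole proof therefore reduces to identifying this limit.

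To do this directly, and not rely on the particular representation, I will pass to the limit $\tau\to-\infty$ in the integrated equation \eqref{sing_part_j_1}. For $v_j\in\widetilde G_1$, all $\tau$-derivatives of $v_j$ decay as $\tau\to-\infty$, and $v_j\to v_j^-$. Since $v_0\in\widetilde G_0$, both $v_0$ and its derivatives tend to zero. Every term on the left side of \eqref{sing_part_j_1} therefore vanishes in the limit, except the zeroth-order coefficient acting on $v_j$, which tends to $(b_0u_0-a_0\varphi')v_j^-$. This gives
$$
-(a_0\varphi'-b_0u_0)\,v_j^-(t)=\lim_{\tau\to-\infty}\Phi_j(t,\tau).
$$
The right-hand limit exists: by \eqref{function_Phi} it equals $E_j(t)$, because ${\cal F}_j\in\widetilde G_0$ decays rapidly at $-\infty$. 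Since $a_0\varphi'-b_0u_0\neq0$, we conclude $v_j^-=0$ if and only if \eqref{ort_cond_2_0} holds. This establishes both directions, provided every solution in $\widetilde G_1$ has this form. That holds because any other solution differs by an element of $\ker L$ in $\widetilde G_1$, and the same limit argument applied to $Lv=0$ shows that such an element has zero limit at $-\infty$.

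The main obstacle is justifying the "if" direction rigorously, namely that when $\lim_{\tau\to-\infty}\Phi_j=0$ the function $\Phi_j$ lies in $\mathcal S(\mathbb R)$, so that Theorem~\ref{thm: th_1} applies with $\nu_j=0$. Decay at $+\infty$ comes from the choice of $E_j$ together with ${\cal F}_j\in\widetilde G_0$. Decay at $-\infty$ comes from $\Phi_j-E_j=\int_{-\infty}^{\tau}{\cal F}_j\,d\xi$ being rapidly decreasing. The orthogonality condition \eqref{ort_cond} is exactly the hypothesis of Theorem~\ref{thm: th_1}, which then yields $v_j=\psi_j\in\mathcal S(\mathbb R)\subset\widetilde G_0$, with smooth dependence on $t$ as in Lemma~\ref{lem: lemma_2}.
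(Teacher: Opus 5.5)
Your proposal is correct and follows the paper's argument: condition \eqref{ort_cond_2_0} is equivalent to $E_j(t)=0$, which forces $\nu_j=0$ in \eqref{vyglyad}--\eqref{nu} and hence $v_j=\psi_j\in\widetilde G_0$; conversely, $v_j\in\widetilde G_0$ forces $\nu_j=\lim_{\tau\to-\infty}v_j=0$. Your passage to the limit $\tau\to-\infty$ in \eqref{sing_part_j_1} makes explicit the identification $\lim_{\tau\to-\infty}v_j=\nu_j$ that the paper takes from Lemma~\ref{lem: lemma_2}, and since it applies to every solution in $\widetilde G_1$ it already settles the ``if'' direction, so your final paragraph re-deriving existence through Theorem~\ref{thm: th_1} is a harmless extra rather than a needed step.
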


\begin{proof}
The lemma follows directly from representation~\eqref{vyglyad}. Indeed, the relation \eqref{ort_cond_2_0} means that
$ E_j(t) = 0 $ in \eqref{function_Phi}. This equality with formula \eqref{vyglyad}--\eqref{ad_eq_2} yields the inclusion $ v_j \in {\widetilde G_0}$.
\end{proof}

\subsubsection{Analytical representation of higher terms}

The function $ v_j (t, \tau) $, $ j = 0, \dots, N $, gives the value of the function $ V_j (x, t, \tau) $ on the discontinuity curve $ \Gamma $.
We now proceed to extend $ V_j(x, t, \tau) $ from the curve $ \Gamma $ into its neighborhood. Since the function $ v_0(t, \tau) \in {\widetilde G}_0 $, we can do it as follows
\begin{equation} \label{prolong_V_0}
V_0(x, t, \tau) = v_0(t, \tau).
\end{equation}

For $ j = 1, \dots, N $, the extension of the function $ v_j(t, \tau) $ depends on its properties as $ \tau \to - \infty $. If $ v_j(t, \tau)\in {\widetilde G_0} $, then its continuation can be written analogously to \eqref{prolong_V_0} as
\begin{equation} \label{prolong_V_j_0}
	V_j(x, t, \tau) = v_j(t, \tau).
\end{equation}

In the opposite case, i.e., when condition \eqref{ort_cond_2_0} is not true, we make use of representation \eqref{vyglyad}, and the prolongation is constructed by
\begin{equation} \label{prolong_V_j}
	V_j(x, t, \tau) = u_j^- (x, t) \eta_j(t, \tau) + \psi_j(t, \tau),
\end{equation}
where the functions $ \eta_j(t, \tau) $, $ \psi_j(t, \tau) $ are defined via formulae \eqref{vyglyad} and \eqref{nu}, while the function $ u_j^- (x, t) $ is a solution of the auxiliary Cauchy problem of the form
\begin{align} \label{prolong}
	\Lambda u_j^- (x, t) & = f_j^- (x, t),
\\
\label{prolong_0}
	u_j^- (x, t)\bigr|_{\Gamma} & = \nu_j(t)
\end{align}
with linear differential operator
\begin{equation}\label{operator_Lambda}
	\Lambda = a_0(x, t) \frac{\partial}{\partial t} + b_0(x, t) u_0(x, t) \frac{\partial}{\partial x} + b_0(x, t) u_{0x}(x, t).
\end{equation}

The right-side function $ f_j^-(x, t) $ is defined recursively after the functions $ u_1^-, \dots , u_{j-1}^- $ have been constructed, while the initial data in \eqref{prolong_0} is determined by \eqref{nu}. In particular,
\begin{align*}
f_1^-(x, t) & = 0, \\
f_2^-(x, t) & = - a_1(x, t) \frac{\partial u_1^-}{\partial t} - b_0(x, t) [ u_1 + u_1^- ] \frac{\partial u_1^-}{\partial x} \\
 & - b_1(x, t) \frac{\partial}{\partial x} ( u_1^- u_0 ).
\end{align*}

The differential equation \eqref{prolong} is deduced after substituting the representation \eqref{prolong_V_j} into equation \eqref{CHolm_vc} and limiting as variable $ \tau $ tends to $ - \infty $. The initial condition \eqref{prolong_0} follows from the representation \eqref{vyglyad}.

The Cauchy problem \eqref{prolong}, \eqref{prolong_0} has a solution at least in a neighborhood of the curve $ \Gamma $, since the curve $ \Gamma = \{(x, t) \mid x = \varphi(t),\, t \in [0; T] \} $ is transversal to the characteristics of the operators $ \Lambda $ in virtue of condition \eqref{cond_w}.

Depending on whether condition \eqref{ort_cond_2_0} is satisfied, formulas \eqref{prolong_V_j_0} and \eqref{prolong_V_j} provide the analytical representation of the singular terms $ V_j(x,t,\tau) $, $ j = 1, 2, \ldots $, which, together with $ V_0(x, t, \tau) $ defined by \eqref{prolong_V_0} and the regular terms $ u_j(x, t) $, $ j = 0, 1, \ldots $, form the asymptotic soliton-like solution to equation \eqref{CHolm_vc}.
In both cases, the asymptotic solution satisfies the equation with the same precision. This is confirmed by the following statements, formulated on the basis of the procedure for constructing regular and singular terms.

\begin{theorem}\label{thm: th_2}
Assume the following conditions:
\begin{enumerate}
\item[1.] The functions $ a_j(x, t) $, $ b_j(x, t) \in C^{\infty} ({\mathbb R}\times [0;T])$, $ j = 0, 1, \dots, N $, and
$$
a_0(x, t) b_0(x, t) \not= 0 \quad \mbox{for all} \quad (x, t) \in {\mathbb R} \times [0;T];
$$
\item[2.] The inequality \eqref{cond_w} is fulfilled and $ \varphi'(t) \not= 0 $ for all $ t \in [0;T]$;
\item[3.] The orthogonality conditions \eqref{ort_cond_20} hold for all $ j=1,2, \ldots, N $;
\item[4.] Conditions \eqref{ort_cond_2_0} are true for all $ j=1,2, \ldots, N $.
\end{enumerate}
Then the function \eqref{soliton_one-phase} represents the $N$-th approximation of the asymptotic soliton-like solution to equation \eqref{CHolm_vc} and satisfies this equation on the set $ {\mathbb R} \times [0;T] $ with an asymptotic accuracy $ O\bigl(\varepsilon^N\bigr) $.
\end{theorem}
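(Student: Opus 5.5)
The plan is to substitute the truncated expansion $u_N=U_N+V_N$ from \eqref{soliton_one-phase} directly into \eqref{CHolm_vc} and estimate the residual
$$
R_N(x,t,\varepsilon)=a(x,t,\varepsilon)u_{Nt}+b(x,t,\varepsilon)u_Nu_{Nx}-\varepsilon^2\bigl(u_{Ntxx}+2u_{Nx}u_{Nxx}+u_Nu_{Nxxx}\bigr)
$$
on compact sets $K\subset\mathbb R\times[0;T]$.

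First, the existence of all terms is secured from the earlier results. The regular terms $u_0,\dots,u_N$ come from \eqref{reg_part_0}, \eqref{reg_part_j} by characteristics. The main singular term is $V_0=v_0$, given by \eqref{main_term}, \eqref{theta}. By condition 2 and Lemma~\ref{lem: lemma_1}, this term lies in $G_0$. For $j\ge1$, conditions 3 and 4 together with Lemmas~\ref{lem: lemma_2} and \ref{lem: lemma_3} give $v_j\in\widetilde G_0$. We then extend by \eqref{prolong_V_j_0}, so $V_j\in G_0$ and no auxiliary Cauchy problem \eqref{prolong} is needed. This step is done inductively in $j$, since ${\cal F}_j$ depends on $v_0,\dots,v_{j-1}$. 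One must also check that ${\cal F}_j\in\widetilde G_0$ whenever the previous $v_k\in\widetilde G_0$. This holds because ${\cal F}_j$ consists of products of $\tau$-polynomials with derivatives of the $v_k$, and these decay at both ends.

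Next, the residual is split into a regular part and a singular part. Since $x=\varphi(t)+\varepsilon\tau$, derivatives act as $\partial_x\mapsto\partial_x+\varepsilon^{-1}\partial_\tau$ and $\partial_t\mapsto\partial_t-\varepsilon^{-1}\varphi'\partial_\tau$. Expanding the coefficients in $\varepsilon$ and collecting powers, the purely regular terms cancel up to order $\varepsilon^N$ by \eqref{reg_part_0}, \eqref{reg_part_j}. The terms containing $V$ are handled by Taylor-expanding every $x$-dependent coefficient $a_k(x,t)$, $b_k(x,t)$, $u_k(x,t)$ about $x=\varphi(t)$ in powers of $\varepsilon\tau$. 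This is exactly how \eqref{sing_part_02_soliton_sol_1}, \eqref{sing_part_j2_soliton_sol_1} and the right-hand sides ${\cal F}_j$ (for example \eqref{function_F}) were produced. Because $V_j$ is taken independent of $x$, the coefficients of $\varepsilon^{k-1}$, $k=0,\dots,N$, vanish identically by those equations. What remains consists of terms of order $\varepsilon^{N}$ and higher, together with Taylor remainders of the form $\varepsilon^{N}\tau^{N+1}\,g(x,t)\,\partial_\tau^rV_k$, where the $g$ are smooth.

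The main obstacle is bounding these remainders uniformly in $\tau\in\mathbb R$, since $\tau=(x-\varphi)/\varepsilon$ is unbounded on $K$. For this I use that each $V_k\in G_0$ together with all its derivatives decays faster than any power of $|\tau|$ as $\tau\to\pm\infty$. Hence $\sup_\tau|\tau|^m|\partial_\tau^rV_k|<\infty$ uniformly on $K$, and the remainder terms are bounded by $C\varepsilon^N$. The highest negative power comes from the $\varepsilon^2\partial_\tau^3$ terms, which carry a factor $\varepsilon^{-3}$. Because the expansion was taken only through order $N$ (one order is lost to the $\varepsilon^{-1}$ prefactor), the residual is $O(\varepsilon^N)$ and not $O(\varepsilon^{N+1})$, as the statement asserts.

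Finally, the smoothness and the nonvanishing of $a_0b_0$ and $\varphi'$ (conditions 1 and 2) guarantee that all coefficients in these bounds are finite on $K$. This gives $R_N=O(\varepsilon^N)$ on $\mathbb R\times[0;T]$ in the sense of the notation defined earlier.
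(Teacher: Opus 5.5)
Your proposal follows essentially the same route as the paper's (only sketched) proof. You substitute the truncated expansion, Taylor-expand the $x$-dependent coefficients about $\Gamma$, cancel the orders $\varepsilon^{-1},\dots,\varepsilon^{N-1}$ with the singular-term equations, and absorb the leftover powers of $\tau$ through the rapid decay of $V_j\in G_0$. Your inductive check that ${\cal F}_j\in\widetilde G_0$, and your use of condition 4 to avoid the Cauchy problem \eqref{prolong}, \eqref{prolong_0}, are useful details the paper leaves implicit.

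One caution applies equally to the paper's argument. You assert that \eqref{singular_part_0_soliton_sol} is ``exactly'' the $\varepsilon^{-1}$ coefficient, but collecting terms directly shows that $-\varepsilon^2uu_{xxx}$ contributes $-\varepsilon^{-1}u_0(\varphi(t),t)\,\partial_\tau^3 V_0$. That term is missing from \eqref{singular_part_0_soliton_sol}, and $-u_0\,\partial_\tau^3 V_j$ is likewise missing from \eqref{singular_part_1_soliton_sol}. So the cancellation ``by those equations'' holds literally only when $u_0$ vanishes on $\Gamma$, or after the third-order coefficient $\varphi'-V_0$ is corrected to $\varphi'-u_0-V_0$.
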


The form of the analytic extension essentially depends on the validity of the condition~\eqref{ort_cond_2_0}. If this condition is not violated -- equivalently if the singular term in the asymptotic expansion do not belong to the space $ G_0 $ -- the following theorem can be established.

\begin{theorem}\label{thm: th_3}
Let the following assumptions be supposed:
\begin{enumerate}
\item[1.]
The conditions $1$ -- $3$ of Theorem~\ref{thm: th_2} are true;
\item[2.] The Cauchy problem \eqref{prolong}, \eqref{prolong_0} has a solution on the set
$$
\{(x, t) \in {\mathbb R} \times[0;T]\mid x - \varphi(t) \le 0 \} .
$$
\end{enumerate}
Then the function \eqref{soliton_one-phase} represents the $N$-th approximation of the asymptotic soliton-like solution to equation \eqref{CHolm_vc} and satisfies this equation on the set $ {\mathbb R} \times [0;T] $ with an asymptotic accuracy $ O\bigl(\varepsilon^N\bigr) $.
\end{theorem}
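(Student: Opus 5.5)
The plan is to substitute the truncated expansion
\[
u_N(x,t,\varepsilon)=\sum_{j=0}^N \varepsilon^j\bigl[u_j(x,t)+V_j(x,t,\tau)\bigr],
\]
with $V_0$ given by \eqref{prolong_V_0} and $V_j$, $j\ge 1$, given by \eqref{prolong_V_j}, into the left-hand side of \eqref{CHolm_vc}. We then show that the residual is $O(\varepsilon^N)$ uniformly on compact subsets of ${\mathbb R}\times[0;T]$. The key difference from Theorem~\ref{thm: th_2} is that now $V_j\notin G_0$ in general. Their nonzero limits $u_j^-(x,t)$ as $\tau\to-\infty$ are exactly the solutions of the Cauchy problems \eqref{prolong}, \eqref{prolong_0}. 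Assumption 2 guarantees that these limits exist on the whole half-plane $\{x-\varphi(t)\le 0\}$, which is the only region where $\tau\to-\infty$ matters.

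\textbf{Step 1: three-zone decomposition of the residual.} I would write the residual as $R_N=R_N^{reg}+R_N^{sing}$. Here $R_N^{reg}$ collects the terms containing only $u_0,\dots,u_N$. By \eqref{reg_part_0}, \eqref{reg_part_j} all its coefficients at $\varepsilon^k$, $k\le N$, vanish; the $\varepsilon^{-1}$ and $\varepsilon^{-2}$ factors coming from $\partial_\tau$ do not affect the regular part. Hence $R_N^{reg}=O(\varepsilon^{N+1})$. For the singular part, I would expand every coefficient $a_k(x,t)$, $b_k(x,t)$, $u_k(x,t)$, $u_j^-(x,t)$ in a Taylor series at $x=\varphi(t)$, using $x-\varphi(t)=\varepsilon\tau$. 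By construction, equations \eqref{sing_part_02_soliton_sol_1}, \eqref{sing_part_j2_soliton_sol_1} kill the coefficients at $\varepsilon^{k-1}$, $k=0,\dots,N$. Their solvability in $\widetilde G_1$ is provided by Lemma~\ref{lem: lemma_2} under the orthogonality conditions \eqref{ort_cond_20} (condition 3), together with \eqref{cond_w} and $\varphi'\ne0$. What remains is $\varepsilon^{N}$ times a sum of terms of the form $\tau^m\,(\text{Taylor remainder})\times(\text{derivatives of }V_k)$.

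\textbf{Step 2: uniform bounds for the remainder terms.} On the region $x-\varphi(t)\ge 0$ ($\tau\ge0$) every $V_k$ and its derivatives decay faster than any power of $\tau$, so $\tau^m$-weighted remainders are bounded. On $x-\varphi(t)\le0$ ($\tau\le0$), I would split $V_j=u_j^-(x,t)\eta_j+\psi_j$ and write $\eta_j=1+(\eta_j-1)$ with $\eta_j-1\in\widetilde G_0$-type decay as $\tau\to-\infty$. The parts $\psi_j$ and $(\eta_j-1)u_j^-$ decay rapidly, so again the polynomial weights are harmless. The non-decaying part $\sum\varepsilon^j u_j^-(x,t)$ acts as an additional regular background. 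Its contribution to the residual is controlled exactly by \eqref{prolong}: substituting it and letting $\tau\to-\infty$ is how \eqref{prolong} and the recursive $f_j^-$ were derived. So the $\tau$-independent terms at orders $\varepsilon^k$, $k\le N$, vanish identically on $\{x\le\varphi(t)\}$ provided $u_j^-$ exists there, which is assumption 2. Their higher-order remainder is a smooth function, bounded on compacts. Finally, the initial condition \eqref{prolong_0} guarantees that on $\Gamma$ the extension agrees with $v_j$ from \eqref{vyglyad}, so the $\tau$-dependent identities from Step 1 hold with $u_j^-(\varphi,t)=\nu_j(t)$.

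\textbf{Main obstacle.} I expect the hardest step to be the bookkeeping on $\tau\le0$: showing that terms like $\tau^m(u_j^-(x,t)-u_j^-(\varphi,t)-\dots)\eta_j'$ are not the source of growth. The plan is to note that $\eta_j'$ decays rapidly. For the non-decaying factor $u_j^-\cdot 1$, one must not Taylor-expand at all but use the exact PDE \eqref{prolong} on the whole half-plane. Splitting these two mechanisms cleanly, and choosing $u_j^-$ arbitrarily smooth for $x>\varphi(t)$ (where it is multiplied by rapidly decaying $\eta_j$), gives $R_N=O(\varepsilon^N)$ on every compact $K\subset{\mathbb R}\times[0;T]$. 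Combined with $V_0\in G_0$ and $V_j\in G_1$, this gives the $N$-th approximation of an asymptotic soliton-like solution in the sense of Definition~\ref{def_1}.
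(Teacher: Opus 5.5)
Your proposal is correct and follows essentially the same route as the paper, whose proof is only a sketch: estimate the residual using the equations for the singular terms along $\Gamma$, Taylor-expand the coefficients about $\Gamma$, and use the decay properties of $\widetilde G_1$ and $\widetilde G_0$ functions; your handling of the nondecaying tails $u_j^-(x,t)$ through the exact equation \eqref{prolong} on $\{x\le\varphi(t)\}$ is the additional ingredient the paper only alludes to for Theorem~\ref{thm: th_3}, and you spell it out in more detail. One small refinement concerns the side $x>\varphi(t)$ near $\Gamma$, where $\tau=O(1)$: there the arbitrary smooth extension of $u_j^-$ is harmless not just because $\eta_j$ decays, but because $\Lambda u_j^- - f_j^-$ is smooth and vanishes on $\{x\le\varphi(t)\}$, so it is flat on $\Gamma$ and equals $O\bigl((x-\varphi)^M\bigr)=O\bigl(\varepsilon^M\tau^M\bigr)$ for every $M$.
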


\begin{proof}

The ideas behind the proofs of Theorems~\ref{thm: th_2} and \ref{thm: th_3} are similar to those used in the proof of Theorem~1 in \cite{Sam_MMC_2021_2}. Therefore, we do not repeat them here in details and instead provide a short outline of the main arguments.

To prove Theorem~\ref{thm: th_2}, we analyze the residual function and obtain appropriate estimates for it. In doing so, we use the fact that the singular terms satisfy equations~\eqref{sing_part_02_soliton_sol_1} and \eqref{sing_part_j2_soliton_sol_1}, and that they decay rapidly with respect to the variable
$ \tau $. We also apply the local representations (Taylor expansions) of the coefficients of the functions $ a(x, t, \varepsilon) $, $ b(x, t, \varepsilon) $ in a neighborhood of the curve~$ \Gamma $.

In the proof of Theorem~\ref{thm: th_3}, we additionally use the properties of the functions belonging to the spaces $ {\widetilde G}_1 $ and $ \widetilde G_0 $.
\end{proof}

It follows from the procedure of constructing the terms of the asymptotic expansion \eqref{soliton_one-phase} that the main terms in the asymptotic series of the coefficients $ a(x,t, \varepsilon) $, $ b(x, t, \varepsilon) $ play a crucial role in determining the asymptotic solution. In particular, $ a_0(x, t) $, $ b_0(x, t) $ are incorporated into main regular and singular terms capturing soliton features of the searched solutions.

Another key aspect of the procedure concerns the phase function $ \varphi(t) $, whose determination is a nontrivial problem and requires a specialized approach for each wave model. For instance, in the cases of the vcKdV \cite{Sam_2005} and vcBBM \cite{Sam_JMP} equations, the phase function is obtained from an ordinary differential equation derived from the corresponding orthogonality condition, whereas for the vcmCH equation \cite{SamBrandSam}, it is determined during the construction of the main singular term using the $ G'/G $ method. For the vcCH equation, the phase function have to satisfy condition \eqref{cond_w} and the differential equation deduced from the orthogonality condition \eqref{ort_cond}, ensures the existence of the first singular term with the required properties.

\subsubsection{Example 1 }

To demonstrate the application of the approach described above, we consider the construction of the first two terms of the asymptotic one-phase soliton-like solution \eqref{soliton_one-phase} for particular case of equation \eqref{CHolm_vc} with coefficients specified as
\begin{align}\label{coeff_ex_1}
\begin{aligned}
& a_0(x,t) = b_0(x, t) = 1, \quad a_1(x,t) = x^2 + 1, \\
& b_1(x,t) = a_2(x, t) = b_2(x,t) = \dots = 0.
\end{aligned}
\end{align}

Thus, we deal with vcCH equation of the form
\begin{equation}\label{example_3_s}
	\bigl[1 + \varepsilon\bigl(x^2 + 1\bigr)\bigr] u_t - \varepsilon^2 u_{xxt} + u u_x - 2 \varepsilon^2 u_x u_{xx} - \varepsilon^2 uu_{xxx} = 0.
\end{equation}

The required regular terms can be obtained directly from \eqref{reg_part_0} and \eqref{reg_part_j} using the coefficients specified in \eqref{coeff_ex_1}. The regular terms serve as the background component for the soliton-like solution, so to simplify the calculations we choose $ u_0(x, t) = 1 $ and $ u_1(x, t) = 1 $, which obviously satisfy the system mentioned above.

Since key point of constructing the asymptotic soliton-like solution is finding values of the singular terms on the discontinuity curve $ \Gamma $, we consider the equations for $ v_0(t, \tau) $ and $ v_1(t, \tau) $. From formula \eqref{main_term}, we obtain the main singular term
\begin{equation}\label{main_term-ex}
v_0(t, \theta) = \frac{15}{4} \bigl(5 \cosh^2 \theta - 3 \sinh^2 \theta\bigr)^{-1},	
\end{equation}
and relation between $ \tau $ and $ \theta $ in the form
\begin{equation}\label{theta-ex}
 \tau = 2 \sqrt{5} \theta + \sqrt{3} \ln{\frac{\cosh(\theta - \theta_0)}{\cosh(\theta + \theta_0)}},
\end{equation}
where
$$
\theta_0 = \tanh^{-1} \sqrt{\frac{3}{5}}.
$$

According Lemma~\ref{lem: lemma_1}, representation \eqref{theta-ex} defines a bijective mapping $ \theta \mapsto \tau $ on the set $ {\mathbb R} $, which is illustrated in Figure~\ref{fig: 1}.

\begin{figure}[h]
\centering
\includegraphics[scale=0.7]{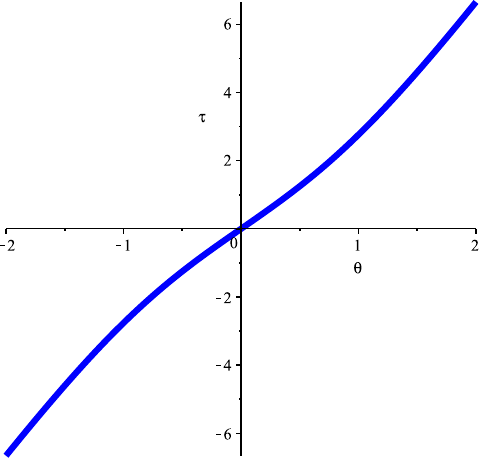}
\caption{The graph of function \eqref{theta-ex}. \label{fig: 1}}
\end{figure}

Relation \eqref{sing_part_j_1} yields the differential equation satisfied by the function $ v_1(t, \tau) $:
\begin{align*}
 ( \varphi' - v_0 ) \frac{\partial^3 v_1} {\partial\tau^3} + ( 1 - \varphi' ) \frac{\partial v_1}{\partial \tau} +
\frac{\partial }{\partial \tau } ( v_0 v_1 ) \\
- 2 \frac{\partial }{\partial \tau } \left( \frac{\partial v_0}{\partial \tau} \frac{\partial v_1}{\partial \tau} \right) - \frac{\partial^3 v_0}{\partial \tau^3} v_1 = {\cal F}_1
\end{align*}
with
$$
{\cal F}_1 = (1 + \varphi^2) \varphi' \frac{\partial v_0}{\partial \tau}.
$$

By virtue of condition \eqref{cond_w}, the phase function $ \varphi = \varphi(t) $ has to satisfy an inequality
$$
1 < \varphi'(t) < \frac{3}{2}.
$$

In this case, when constructing the first singular terms, there are no additional constraints on the phase function. Therefore, we can take $ \varphi(t) = {5 t}/{4} $, $ t\in {\mathbb R} $.

It is easy to verify that the orthogonality condition \eqref{ort_cond} for $ j=1 $ holds and the first singular term $ v_1(t, \theta) $ is written as
\begin{align}
v_1(t, \theta) ={}& \frac{75}{64} \left( 16 + 25 t^2\right) \left(3 + 2 \cosh^2 \theta \right)^{-3} \nonumber\\
& \times \left[ 25 - 60 \theta \tanh{\theta} - \left( 35 + 6 \cosh^2 \theta \right)\sinh^2 \theta \right.\nonumber\\
& \left. +2 \theta \left( 3 + \cosh^2 \theta \right) \sinh 2 \theta \right].\label{first_term-ex}	
\end{align}

We ensure that the conditions of Lemma~\ref{lem: lemma_3} are satisfied because the function
$$
\Phi_1(t, \tau) = \int_{ -\infty}^\tau {\cal F}_1 (t, \xi) d \xi + E_1(t) = (1 + \varphi^2) \varphi' v_0
$$
meets condition \eqref{orth_cond_1}, which guarantees that $ v_1(t, \tau) \in {\widetilde G}_0 $. Taking into account inclusion $ v_0(t, \tau) \in {\widetilde G}_0 $, this allows us to consider the extensions of $ v_0(t, \tau) $ and $ v_1(t, \tau) $ as
$$
V_0(x, t, \tau) = v_0(t, \tau), \quad V_1(x, t, \tau) = v_1(t, \tau).
$$

Thus, the first asymptotic approximation is given as
$$
Y_1(x, t, \tau) = u_0(x, t) + v_0(t, \tau) + \varepsilon [u_1(x, t) + v_1(t, \tau) ],
$$
the graphs of which are presented in Figures~\ref{fig: 2}--\ref{fig: 4}.

\begin{figure}[ht]
\centering
\includegraphics[scale=0.51]{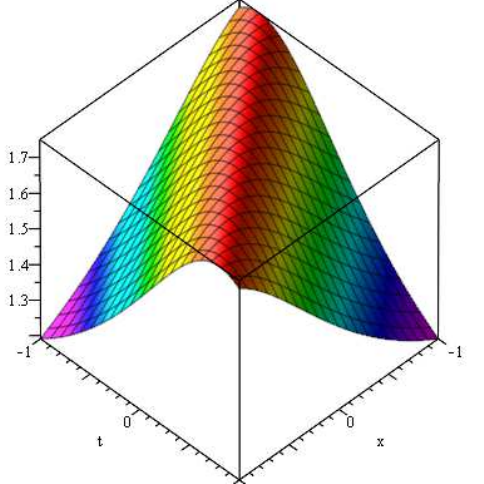} \quad
\includegraphics[scale=0.51]{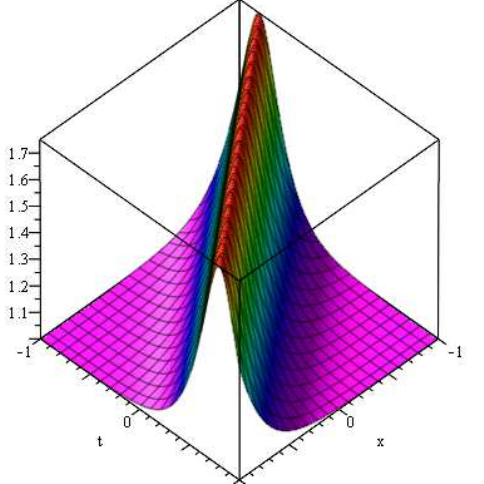}
\caption{The main term 
$ u_0(x, t) + v_0 (t, \tau)$ as $\varepsilon=0.5$ (at the left) and $\varepsilon=0.1$ (at the right).} \label{fig: 2}
\end{figure}

\begin{figure}[ht]
\centering
\includegraphics[scale=0.51]{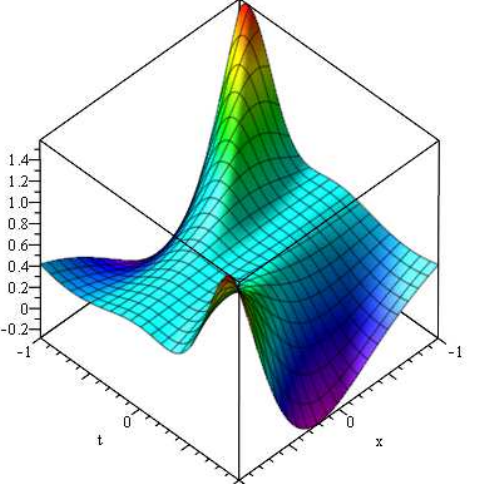} \quad
\includegraphics[scale=0.51]{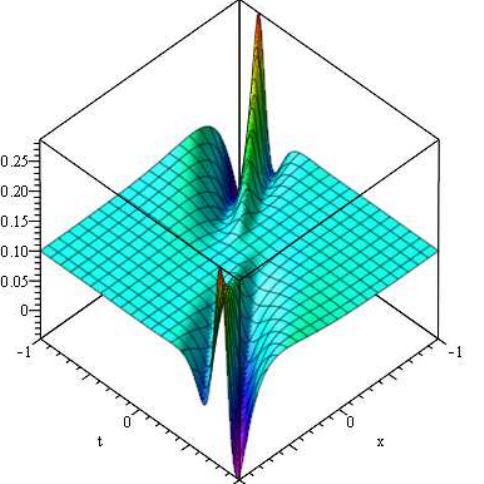}
\caption{The first term 
$ \varepsilon (u_1(x, t) + v_1 (t, \tau) )$ as $\varepsilon=0.5$ (at the left) and $\varepsilon=0.1$ (at the right).} \label{fig: 3}
\end{figure}

\begin{figure}[ht]
\centering
\includegraphics[scale=0.51]{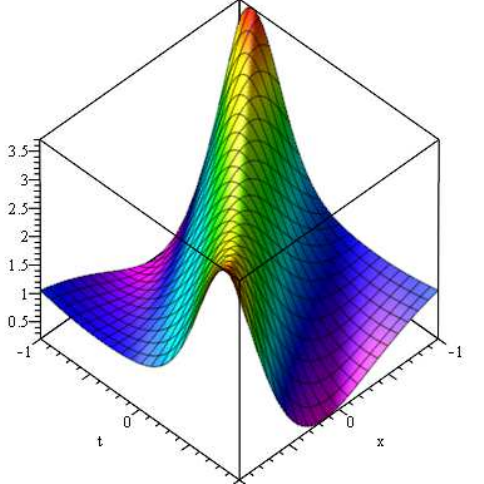} \quad
\includegraphics[scale=0.51]{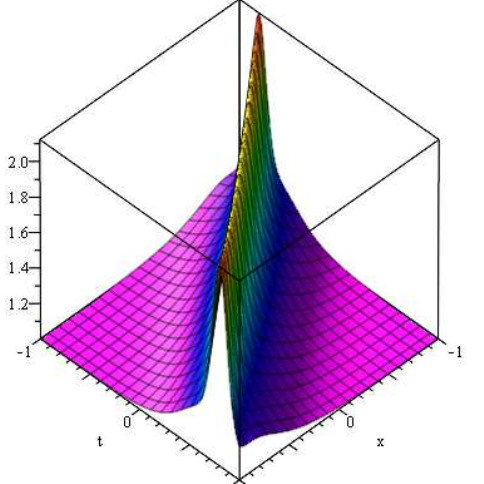}
\caption{The first asymptotic approximation
$ Y_1 (x, t, \tau)$ as $\varepsilon=1$ (at the left) and $\varepsilon=0.5$ (at the right).} \label{fig: 4}
\end{figure}

\section{Two-phase soliton-like solutions}\label{section3}

The CH equation is known to possess multi-phase soliton solutions that asymptotically split into one-phase soliton components at infinity \cite{Parker_2005}. It is natural to consider the problem of the asymptotic soliton-like solutions for the vcCH equation that exhibit properties analogous to those of the multi-phase soliton solutions of the original CH equation.

In this paper, we focus on the construction of asymptotic two-phase soliton-like solutions. As in the one-phase case, the asymptotic two-phase soliton-like so\-lu\-tion is \linebreak sought as a sum of the regular and singular parts of the asymp\-totic expansion.

While the regular part of the asymptotics of the desired solution can be determined with arbitrary accuracy, constructing the singular part faces technical difficulties that, so far, remain insurmountable. These difficulties stem from the need to determine higher-order singular terms as solutions of third-order partial differential equations with variable coefficients within a specialized functional space.

Consequently, it is currently possible to determine on\-ly the main singular term of the asymptotics, which itself satisfies a third-order nonlinear partial differential equation. Notably, this result -- the construction of the main singular term -- represents a significant achievement, since it is precisely this term that plays a decisive role in capturing the soliton characteristics of the solution.

To date, two-phase and multi-phase solutions for in\-teg\-rable-type equations with variable coefficients and singular perturbations have been constructed only in the case of the KdV equation \cite{Sam_2008, Sam_2012_1, Sam_2012_2} and KdV-type equations \cite{Maslov_book}. The success in constructing $m$-soliton solutions ($m\ge 2 $) for the vcKdV equation was made possible by the presence of only a single nonlinear term and its flexibility with respect to the coefficients, which is reflected in its calibration equivalence in the case of constant coefficients. In contrast, the corresponding problem for the vcCH equation lacks such properties and, consequently, requires additional constraints.

We introduce suitable functional spaces whose elements reflect the characteristic behavior of two-phase soliton solutions.
By $ G_2 $ we note the space of the functions $ f = f(x, t, \tau_1, \tau_2) \in C^\infty
({\mathbb{R}} \times [0; T] \times {\mathbb R} \times {\mathbb R}) $, for which there exist functions $ f_1^{\pm} = f_1^{\pm} (x,t, \tau_2) $, $ f_2^{\pm} = f_2^{\pm} (x,t, \tau_1) \in G_0 $, such that for all non-negative integers $ n $, $ p $, $ q $, $ \alpha $, $\beta $ uniformly with respect to $ (x,t) \in K $ on any compact set $ K \subset {\mathbb{R}} \times [0; T] $ we have the relations
$$
\lim_{\tau_k \to \pm \infty} \tau_k^{ n} \frac {\partial^{ p}} {\partial x^p}\frac{\partial^{ q}}{\partial t^q}
\frac{\partial^{ \alpha}} {\partial \tau_1^{\alpha}} \frac{\partial^{ \beta}} {\partial \tau_2^{\beta}} \left(
f - f_k^{\pm} \right) = 0, \quad k =1, 2.
$$

Analogously to the space $ {\widetilde G}_1 $, we define $ {\widetilde G}_2  \subset G_2 $ as a space of functions $ f = f(t, \tau_1, \tau_2) \in C^\infty ([0; T] \times {\mathbb R} \times {\mathbb R}) $, for which there exist functions $ f_1^{\pm} = f_1^{\pm} (t, \tau_2) $,
$ f_2^{\pm} = f_2^{\pm} (t, \tau_1) \in {\widetilde G}_0 $ such that, for all non-negative integers $ n $, $ q $, $ \alpha $, $\beta $ uniformly in $ t \in [0; T] $, the following relations hold:
$$
\lim_{\tau_k \to \pm \infty} \tau_k^{ n} \frac{\partial^{q}}{\partial t^q}
\frac{\partial^{ \alpha}} {\partial \tau_1^{\alpha}} \frac{\partial^{ \beta}} {\partial \tau_2^{\beta}} \left(
f - f_k^{\pm} \right) = 0, \quad k =1, 2.
$$

\begin{definition}[\cite{Maslov_book, Sam_2008}] \label{def_2} The function $ u(x, t, \varepsilon) $, where $ \varepsilon $ is a small parameter,
is called two-phase asymptotic so\-li\-ton-like function, if for any integer $ N \ge 0 $ it can be written as
\begin{equation}
\label{two-phase-soliton-like} u(x, t, \varepsilon) = \sum\limits_{j=0}^N \varepsilon^j [u_j(x, t) + V_j(x, t, \tau_1, \tau_2 ) ]
+ O\bigl(\varepsilon^{N+1}\bigr),
\end{equation}
where
$$
\tau_1 = \frac{x-\varphi_1(t)}{\varepsilon}, \quad \tau_2 = \frac{x-\varphi_2(t)}{\varepsilon},
$$
the functions $ V_j \in G_2 $, $ j=0,\dots,N $, and the phase functions $ \varphi_k = \varphi_k(t) $, $ t \in [0; T] $, $ k = 1, 2 $, satisfy the initial condition $ \varphi_1(0) = \varphi_2(0) $.
\end{definition}

The two soliton-like waves described by the two-phase soliton-like solution exhibit a nonlinear interaction, after which they separate and resume independent propagation. For convenience, we suppose that the point $ t = 0 $ corresponds to this moment.

The functions $ u_j(x, t) $, $ j = 0, 1, \ldots $, form the regular part of asymptotics \eqref{two-phase-soliton-like} and they represent the background components of the wave propagation. As before, these functions satisfy the system \eqref{reg_part_0}, \eqref{reg_part_j} solvability of which are discussed in Section~\ref{section2.2}.

In contrast, the functions $ V_j $, $ j = 0, 1, \ldots $, form the singular part of the asymptotics and are introduced to capture the soliton characteristics of the two-phase soliton-like solution. They are defined as solutions of certain partial differential equations and are written through the phase functions, $ \varphi_k $, $ k =1, 2 $, each of which determines a phase curve
$$
\Gamma_k = \{(x, t) \in {\mathbb R}\times [0;T]\mid x - \varphi_k (t) = 0 \}, \quad k =1,2.
$$

\subsection{The main singular term}

Using standard techniques, we obtain the following differential equation for the main singular term:
\begin{align}
&  - a_0(x, t) \left[\varphi_1'(t) \frac{\partial V_0}{\partial \tau_1} + \varphi_2'(t) \frac{\partial V_0}{\partial \tau_2} \right] \nonumber + b_0 (x, t) u_0(x, t) \left[\frac{\partial V_0}{\partial \tau_1}+ \frac{\partial V_0}{\partial \tau_2} \right]
\nonumber \\
& + \left[ \varphi_1'(t) \frac{\partial^3 V_0}{\partial \tau_1^3} + \left(2 \varphi_1'(t) + \varphi'_2(t) \right) \frac{\partial^2 V_0}{\partial \tau_1^2\partial \tau_2} \right.
\nonumber \\
& \qquad \left. + \left(\varphi_1'(t) + 2 \varphi'_2(t) \right) \frac{\partial^3 V_0}{\partial \tau_1 \partial \tau_2^2} + \varphi'_2(t) \frac{\partial^3 V_0}{\partial \tau_2^3} \right]
\nonumber \\
& + b_0 (x, t) V_0 \left[ \frac{\partial V_0}{\partial \tau_1} + \frac{\partial V_0}{\partial \tau_2} \right]
\nonumber \\
& \qquad - 2 \left[ \frac{\partial V_0}{\partial \tau_1} + \frac{\partial V_0}{\partial \tau_2} \right]
 \left[\frac{\partial^2 V_0}{\partial \tau_1^2} + 2 \frac{\partial^2 V_0}{\partial \tau_1\partial \tau_2} + \frac{\partial^2 V_0}{\partial \tau_2^2} \right]
\nonumber \\
& - V_0 \left[\frac{\partial^3 V_0}{\partial \tau_1^3} + 3 \frac{\partial^2 V_0}{\partial \tau_1^2\partial \tau_2} + 3 \frac{\partial^3 V_0}{\partial \tau_1 \partial \tau_2^2}+ \frac{\partial^3 V_0}{\partial \tau_2^3} \right] = 0.  \label{sing_term_0_2phase}
\end{align}
We are interesting in a particular solution which due to $ V_0 \in G_2 $ rapidly vanishes outside a certain neighborhood of the phase curves. As a consequence we may study the equation along the curves $ \Gamma_1 $, $ \Gamma_2 $. Noting also that equation~\eqref{sing_term_0_2phase} is symmetric with respect to the variables $ \tau_1 $, $ \tau_2 $ that allows to seek its particular solutions in the form of functions that possess symmetry with respect to $ \tau_1 $, $ \tau_2 $. Complexity of the solving equation forces us to restrict ourselves to the assumption that its coefficients $a_0(x, t)$, $b_0(x, t)$ and $ u_0(x, t) $ are constant along the phase curves, i.e.:
\begin{align}\label{cond_conj}
\begin{aligned}
& a_0(\varphi_1, t) = a_0(\varphi_2, t) = 1, \\ & b_0(\varphi_1, t) = b_0(\varphi_2, t) = 3, \\
& u_0(\varphi_1, t) = u_0(\varphi_2, t) = u_0 (t),
\end{aligned}
\end{align}
where $ \varphi_1 = \varphi_1 (t) $, $ \varphi_2 = \varphi_2 (t) $, $ t \in [0; T] $.

Then, relation \eqref{sing_term_0_2phase} yields an equation
\begin{align}
& \left[- \varphi_1' \frac{\partial V_0}{\partial \tau_1} - \varphi_2' \frac{\partial V_0}{\partial \tau_2} \right]
+ 3 u_0 \left[\frac{\partial V_0} {\partial \tau_1} + \frac{\partial V_0}{\partial \tau_2} \right]  \nonumber\\
&\qquad
+ \left[ \varphi_1' \frac{\partial^3 V_0}{\partial \tau_1^3} + \left(2 \varphi_1' + \varphi'_2 \right) \frac{\partial^2 V_0}{\partial \tau_1^2\partial \tau_2} \right. + \left. \left(\varphi_1' + 2 \varphi'_2 \right) \frac{\partial^3 V_0}{\partial \tau_1 \partial \tau_2^2} + \varphi'_2 \frac{\partial^3 V_0}{\partial \tau_2^3} \right] \nonumber\\
& \qquad + 3 V_0 \left[ \frac{\partial V_0}{\partial \tau_1} + \frac{\partial V_0}{\partial \tau_2} \right] - 2 \left[ \frac{\partial V_0}{\partial \tau_1} + \frac{\partial V_0}{\partial \tau_2} \right] \left[\frac{\partial^2 V_0}{\partial \tau_1^2} + 2 \frac{\partial^2 V_0}{\partial \tau_1\partial \tau_2} + \frac{\partial^2 V_0}{\partial \tau_2^2} \right]\nonumber \\
& \qquad - V_0 \left[\frac{\partial^3 V_0}{\partial \tau_1^3} + 3 \frac{\partial^2 V_0}{\partial \tau_1^2\partial \tau_2} + 3 \frac{\partial^3 V_0}{\partial \tau_1 \partial \tau_2^2}+ \frac{\partial^3 V_0}{\partial \tau_2^3} \right] = 0\label{main_term2phase_sol}
\end{align}
with $ u_0 = u_0(t) $.

By introducing new variables $\xi $, $ \eta $ via formulas
$$
\tau_k = \xi - \varphi_k' \eta, \quad k = 1, 2,
$$
\eqref{main_term2phase_sol} is reduced to the following equation
\begin{equation}\label{main_term_two_phase_n-var}
\frac{\partial V_0}{\partial \eta} + 3 u_0 \frac{\partial V_0}{\partial \xi} + \frac{\partial^3 V_0}{\partial \xi^2 \partial \eta} = - 3 V_0 \frac{\partial V_0}{\partial \xi} + 2 \frac{\partial V_0}{\partial \xi} \frac{\partial^2 V_0}{\partial \xi^2} + V_0 \frac{\partial^3 V_0}{\partial \xi^3},
\end{equation}
that is in fact the Camassa--Holm equation.
Its two-soliton solution is written in the implicit form as \cite{Johnson}
\begin{align}
& V_0 = 12 u_0 \left(\frac{ \mu_1^2}{(1-\mu_1^2)^2} E_1 + \frac{ \mu_2^2}{(1-\mu_2^2)^2} E_2 \right. + \frac{2 (\mu_1 - \mu_2)^2 (1-\mu_1^2 \mu_2^2)}{(1-\mu_1^2)^2 (1-\mu_2^2)^2} E_1 E_2 \nonumber \\
& \left. + \frac{(\mu_1 - \mu_2)^2}{(\mu_1 + \mu_2)^2} \left[\frac{\mu_2^2}{(1-\mu_2^2)^2} E_1 + \frac{\mu_1^2}{(1-\mu_1^2)^2} E_2 \right] E_1 E_2 \right) \nonumber \\
& \times \left(1 + \frac{2(1+\mu_1^2)}{(1-\mu_1^2)} E_1 + \frac{2(1+\mu_2^2)}{(1-\mu_2^2)} E_2 + E_1^2 + E_2^2 \right.\nonumber \\
& + \left. 4 \frac{(\mu_1^2 + \mu_2^2)(1 + \mu_1^2 \mu_2^2) + (\mu_1^2 - \mu_2^2)^2 - 4 \mu_1^2 \mu_2^2}{(1 - \mu_1^2)(1 - \mu_2^2)(\mu_1 + \mu_2)^2} E_1 E_2 \right.\nonumber \\
& + \frac{2(\mu_1 - \mu_2)^2}{(\mu_1 + \mu_2)^2} \left[ \frac{(1 + \mu_2^2)}{(1 - \mu_2^2)} E_1 + \frac{(1 + \mu_1^2)}{(1 - \mu_1^2)} E_2 \right] \left. E_1 E_2 \left(\frac{\mu_1 - \mu_2}{\mu_1 + \mu_2} \right)^4 E_1^2 E_2^2\right)^{-1}, \label{sol_sol_CH_equation}
\end{align}
where
\begin{align}
\mu_k^2 ={}& 1 - 3 \sqrt{6} \frac{ u_0 \sqrt{u_0}}{\varphi'_k}, \quad \mu_k > 0, \quad E_k(t, \tau_1, \tau_2) = \exp{(\delta_k)},
\nonumber\\
\label{delta_k}
\delta_k ={}& \frac{\sqrt{6}}{3 \sqrt{u_0}} \mu_k \left(-\tau_k + \ln \frac{\Delta_1}{\Delta_2}\right),
\\
{\Delta}_1 ={}& ( 1 - \mu_1) ( 1 - \mu_2) + ( 1 + \mu_1) ( 1 + \mu_2) E_1 \nonumber\\
& + ( 1 - \mu_1) ( 1 - \mu_2) E_2
\nonumber\\
&
+ ( 1 + \mu_1) ( 1 + \mu_2) \frac{(\mu_1 - \mu_2)^2}{(\mu_1 + \mu_2)^2} E_1 E_2,
\\
{\Delta}_2 ={}& ( 1 + \mu_1) ( 1 + \mu_2) + ( 1 - \mu_1) ( 1 + \mu_2) E_1
\nonumber\\
&
+ ( 1 + \mu_1) ( 1 - \mu_2) E_2 \nonumber\\
& + ( 1 - \mu_1) ( 1 - \mu_2) \frac{(\mu_1 - \mu_2)^2}{(\mu_1 + \mu_2)^2} E_1 E_2,
\end{align}
and $ u_0 (t) $, $ \varphi_1 (t) $, $ \varphi_2 (t) $ for all $ t \in [0; T] $ satisfy the following inequalities:
\begin{equation} \label{cond_2sol_phase}
u_0 > 0, \quad 0 < \frac{u_0 \sqrt{u_0}}{\varphi'_k} < \frac{1}{3 \sqrt{6}}, \quad k = 1, 2.
\end{equation}

The function $ V_0 $ in \eqref{sol_sol_CH_equation} depends on the variables $t$, $\tau_1$ and $\tau_2$.
Unlike the previously considered the vcKdV model \cite{Sam_2008}, formula \eqref{sol_sol_CH_equation} for the main singular term in the case of the vcCH equation \eqref{CHolm_vc} can hardly be derived from relation \eqref{sol_sol_CH_equation} in an explicit form as a function of the phase variables $ \tau_1 $ and $ \tau_2 $, even if one attempts to choose suitable values of the parameters $ \mu_1 $ and $ \mu_2 $. This indicates a significant complexity in constructing its asymptotic two-phase soliton-like solutions. Thus we need to ensure that the function $ V_0 $ belongs to the space $G_2$. Since the expression for $V_0$ involves implicitly $\delta_1$ and $\delta_2$, the following lemma will be needed for this purpose.

\begin{lemma}\label{lem: lemma_4} Let inequalities \eqref{cond_2sol_phase} be fulfilled. Then $ \tau_k \to + \infty (-\infty) $ if and only if
$ \delta_k \to - \infty (+\infty) $ for $ k= 1, 2 $.
\end{lemma}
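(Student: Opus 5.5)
The relation \eqref{delta_k} is implicit: $\delta_k$ appears both on the left and inside $E_1,E_2$, which enter $\Delta_1,\Delta_2$. The plan is to show that the logarithmic term $\ln(\Delta_1/\Delta_2)$ stays bounded, uniformly in $(\delta_1,\delta_2)\in\mathbb R^2$. Then \eqref{delta_k} gives
\[
\tau_k = -\frac{3\sqrt{u_0}}{\sqrt6\,\mu_k}\,\delta_k + \ln\frac{\Delta_1}{\Delta_2}.
\]
With $c_k = 3\sqrt{u_0}/(\sqrt6\mu_k) > 0$, this reads $\tau_k = -c_k\delta_k + O(1)$, and both directions of the equivalence follow at once. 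So the real content is a two-sided bound on $\Delta_1/\Delta_2$ as a function of $E_1,E_2\in(0,+\infty)$.

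First I use \eqref{cond_2sol_phase}. By the definition of $\mu_k$, $\mu_k^2 = 1 - 3\sqrt6\,u_0^{3/2}/\varphi_k'$, and the inequalities give $0<\mu_k^2<1$. Taking $\mu_k>0$, we get $0<\mu_k<1$. Hence $1\pm\mu_k>0$ and all coefficients in $\Delta_1$ and $\Delta_2$ are positive. I assume $\mu_1\neq\mu_2$ (distinct phases), so that $\gamma=(\mu_1-\mu_2)^2/(\mu_1+\mu_2)^2\in(0,1)$. Then $\Delta_1,\Delta_2$ are polynomials in $E_1,E_2$ of the form $\alpha_0+\alpha_1E_1+\alpha_2E_2+\alpha_{12}E_1E_2$ and $\beta_0+\beta_1E_1+\beta_2E_2+\beta_{12}E_1E_2$. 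Their coefficients are strictly positive and correspond monomial by monomial, which is the key structural point.

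Second, I use the elementary fact that for positive numbers,
\[
\min_i \frac{\alpha_i}{\beta_i} \le \frac{\sum_i \alpha_i m_i}{\sum_i \beta_i m_i} \le \max_i \frac{\alpha_i}{\beta_i}
\]
for any nonnegative monomials $m_i$, not all zero. Applying this with $m\in\{1,E_1,E_2,E_1E_2\}$ shows that $\ln(\Delta_1/\Delta_2)$ lies in a fixed compact interval $[-M,M]$. Here $M$ depends only on $\mu_1,\mu_2$, hence only on $t$, and it is uniform in $t\in[0;T]$ by continuity and the strict inequalities in \eqref{cond_2sol_phase}. This yields $|\tau_k + c_k\delta_k|\le M$ on the whole solution set.

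Third, I conclude. If $\delta_k\to-\infty$ then $\tau_k\ge -c_k\delta_k - M\to+\infty$; conversely, if $\tau_k\to+\infty$ then $\delta_k\le (M-\tau_k)/c_k\to-\infty$. The $(-\infty)/(+\infty)$ case is symmetric. The main obstacle is making sense of the implicit relation, i.e.\ checking that the map $(\delta_1,\delta_2)\mapsto(\tau_1,\tau_2)$ is well defined, which is immediate here, and that the limits are meaningful along the solution surface. The uniform bound handles this without needing invertibility, in the same spirit as Lemma~\ref{lem: lemma_1}. If monotonicity or bijectivity is also wanted, I would check that the Jacobian has dominant diagonal $-c_k$, using that $\partial_{\delta_j}\ln(\Delta_1/\Delta_2)$ is bounded by the same comparison argument.
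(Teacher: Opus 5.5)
Your argument is correct and follows the paper's route: the paper's proof also reduces the lemma via \eqref{delta_k} to the boundedness of $\ln(\Delta_1/\Delta_2)$, which it establishes by an asymptotic analysis of $\Delta_1,\Delta_2$ as $\delta_1,\delta_2$ tend to infinity separately or simultaneously. Your coefficientwise (mediant) comparison does that step more cleanly, giving the explicit global estimate $\bigl|\ln(\Delta_1/\Delta_2)\bigr|\le\ln\frac{(1+\mu_1)(1+\mu_2)}{(1-\mu_1)(1-\mu_2)}$ for all $E_1,E_2>0$; it does not even need $\mu_1\neq\mu_2$, since the factor $(\mu_1-\mu_2)^2/(\mu_1+\mu_2)^2$ cancels in the ratio of the $E_1E_2$ coefficients.
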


\begin{proof} The statement of Lemma~\ref{lem: lemma_4} directly follows from relation \eqref{delta_k} and from the asymptotic analysis of the expressions for $\Delta_1$ and $\Delta_2$ as $\delta_1 \to \infty $, $\delta_2 \to \infty$, either independently or simultaneously. In doing so, it is shown that the quantity $ \ln (\Delta_1/ \Delta_2) $ remains bounded.
\end{proof}

The inclusion $ V_0 \in {\widetilde G}_2 $ is ensured by the following statement.

\begin{lemma}\label{lem: lemma_5}Let inequalities \eqref{cond_2sol_phase} be satisfied. Then for any non-negative integer $ n $ the following relations be hold:
\begin{align*}
& \lim_{\tau_1 \to \pm \infty} \tau_1^n \left( V_0 (t, \tau_1, \tau_2) - f_1^{\pm}(t, \tau_2)\right) = 0,
\\
& \lim_{\tau_2 \to \pm \infty} \tau_2^n \left( V_0 (t, \tau_1, \tau_2) - f_2^{\pm}(t, \tau_1)\right) = 0,
\end{align*}
where
\begin{multline*}
f_k^{+}(t, \tau_s) =\frac{12 \mu_s^2 E_s u_0}{(1 - \mu_s^2)^2 (1 + E_s^2) + 2 (1 - \mu_s^4) E_s },
\\[5mm]
f_k^{-}(t, \tau_s) = \frac{12 (\mu_1^2 - \mu_2^2)^2 \mu_s^2 E_s u_0}{(1 - \mu_s^2)^2} \\
\times \frac{1}{(\mu_1 + \mu_2)^4 + 2 (\mu_1^2 - \mu_2^2)^2 (1 + \mu_s^2) E_s + (\mu_1 - \mu_2)^4 E_s^2},
\end{multline*}
$ k, s = 1, 2 $, and $ k + s = 3$.
\end{lemma}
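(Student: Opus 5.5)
The plan is to treat $V_0$ as a rational function $V_0 = 12u_0\,P(E_1,E_2)/Q(E_1,E_2)$ of the exponentials $E_1=e^{\delta_1}$, $E_2=e^{\delta_2}$, as in \eqref{sol_sol_CH_equation}. By Lemma~\ref{lem: lemma_4}, the limit $\tau_k\to+\infty$ corresponds to $\delta_k\to-\infty$, i.e. $E_k\to 0$, and $\tau_k\to-\infty$ corresponds to $E_k\to+\infty$. Fix $k=1$ (the case $k=2$ follows by the symmetry $\mu_1\leftrightarrow\mu_2$, $\tau_1\leftrightarrow\tau_2$) and hold $\tau_2$ fixed.

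\emph{Case $\tau_1\to+\infty$.} Set $E_1=0$ in $P$ and $Q$. The numerator reduces to $\mu_2^2E_2/(1-\mu_2^2)^2$. The denominator reduces to $1+2\frac{1+\mu_2^2}{1-\mu_2^2}E_2+E_2^2$. Multiplying through by $(1-\mu_2^2)^2$ gives exactly $f_1^+$.

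\emph{Case $\tau_1\to-\infty$.} Divide numerator and denominator by $E_1^2$ and extract the leading coefficients. In the numerator these are $\frac{(\mu_1-\mu_2)^2}{(\mu_1+\mu_2)^2}\frac{\mu_2^2}{(1-\mu_2^2)^2}E_2$. In the denominator they are
\[
1+\frac{2(\mu_1-\mu_2)^2}{(\mu_1+\mu_2)^2}\frac{1+\mu_2^2}{1-\mu_2^2}E_2+\Bigl(\frac{\mu_1-\mu_2}{\mu_1+\mu_2}\Bigr)^4E_2^2 .
\]
Clearing denominators then yields $f_1^-$. In $f_1^-$ the variable $E_2$ must be understood as the limit of $e^{\delta_2}$, since $\delta_2$ itself contains the shift $\ln(\Delta_1/\Delta_2)$. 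By Lemma~\ref{lem: lemma_4}, this shift stays bounded and converges: as $E_1\to 0$ it tends to $\ln\frac{(1-\mu_1)(1-\mu_2)(1+E_2)}{(1+\mu_1)(1+\mu_2)+(1+\mu_1)(1-\mu_2)E_2}$, with an analogous limit as $E_1\to\infty$. The phase of $f_1^\pm$ is therefore shifted, which is the usual soliton phase shift. I will state this explicitly and check that the formula for $f_1^\pm$ is consistent with it.

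\emph{Rate of convergence.} By Lemma~\ref{lem: lemma_4} and the implicit relation \eqref{delta_k}, $|\delta_1|\ge c|\tau_1|-C$ for large $|\tau_1|$. Hence the difference $V_0-f_1^\pm$ is a rational function of $E_1$ (respectively $E_1^{-1}$), vanishing at $0$, plus the error coming from the convergence of $\ln(\Delta_1/\Delta_2)$. Both are $O(e^{-c|\tau_1|})$, with denominators bounded away from zero because every coefficient is positive under \eqref{cond_2sol_phase} (we have $0<\mu_k<1$). Multiplying by $\tau_1^n$ still gives zero in the limit. Derivatives in $t$ and $\tau$ are handled the same way, by differentiating the implicit relations; this upgrades the result to $V_0\in\widetilde G_2$.

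\emph{Main obstacle.} The delicate step is the implicit coupling: $\delta_2$ depends on $E_1$ through $\Delta_1/\Delta_2$. I need uniform exponential convergence of $\delta_2(\tau_1,\tau_2)$ to its limit as $\tau_1\to\pm\infty$, with $\tau_2$ fixed. I would get this from the implicit function theorem applied to \eqref{delta_k}, using that $\partial\tau_k/\partial\delta_k$ is bounded away from zero, which is the monotonicity argument behind Lemma~\ref{lem: lemma_4}.
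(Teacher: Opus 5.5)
Your route is the ``direct calculation'' the paper has in mind: you read $V_0$ as a rational function of $E_1,E_2$, use Lemma~\ref{lem: lemma_4} to convert $\tau_1\to\pm\infty$ into $E_1\to0$ or $E_1\to\infty$, and extract leading coefficients. Both of your limits are computed correctly.

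\textbf{The gap.} It lies in the step you yourself flag. To make $f_1^\pm$ a function of $(t,\tau_2)$ alone, you need $\delta_2(\tau_1,\tau_2)$ to converge to the root of the limiting relation $\tau_2=h_\pm(\delta_2)$. You want this from $|\partial\tau_2/\partial\delta_2|\ge m>0$, which you attribute to ``the monotonicity argument behind Lemma~\ref{lem: lemma_4}''. There is no such argument. Lemma~\ref{lem: lemma_4} rests only on the boundedness of $\ln(\Delta_1/\Delta_2)$; monotonicity is proved only in Lemma~\ref{lem: lemma_1}, for the one-phase case. Nor does the property follow from \eqref{cond_2sol_phase}. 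For $E_1\to0$ you get
\[
h_+'(\delta_2)=-\frac{3\sqrt{u_0}}{\sqrt6\,\mu_2}+\frac{2\mu_2E_2}{(1+E_2)\bigl((1+\mu_2)+(1-\mu_2)E_2\bigr)},
\]
and the second term has supremum $\mu_2/\bigl(1+\sqrt{1-\mu_2^2}\bigr)$. For $E_1\to\infty$ the same holds with $E_2$ replaced by $rE_2$, where $r=(\mu_1-\mu_2)^2/(\mu_1+\mu_2)^2$. Hence $h_\pm$ is strictly monotone if and only if $\sqrt{3u_0/2}>1-\sqrt{1-\mu_2^2}$.

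This can fail under the lemma's hypotheses. Let $u_0\to0^+$ with $\varphi_2'$ fixed. Then \eqref{cond_2sol_phase} still holds, but $\mu_2\to1$ and the first term tends to $0$. So $h_\pm'$ changes sign, $\tau_2\mapsto\delta_2$ is three-valued on an interval, and $V_0$ is not even single-valued there. Your implicit-function step therefore fails.

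You need the extra hypothesis $\sqrt{3u_0/2}>1-\sqrt{1-\mu_k^2}$ for $k=1,2$. It holds automatically when $u_0\ge 2/3$, so Example~2 is fine, and the paper's one-line proof assumes it tacitly as well. With it your argument closes:
\begin{itemize}
\item $|h_\pm'|\ge m>0$;
\item for bounded $\delta_2$, $\ln(\Delta_1/\Delta_2)$ differs from its limit by $O(e^{-c|\tau_1|})$;
\item hence $|\delta_2-h_\pm^{-1}(\tau_2)|=O(e^{-c|\tau_1|})$.
\end{itemize}

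\textbf{A smaller point.} Multiply your $\tau_1\to-\infty$ limit by $(\mu_1+\mu_2)^4$. Its middle denominator term is then $2(\mu_1^2-\mu_2^2)^2\frac{1+\mu_2^2}{1-\mu_2^2}E_2$, whereas the printed $f_k^-$ has $2(\mu_1^2-\mu_2^2)^2(1+\mu_s^2)E_s$. Your version is the correct one: it is exactly $f_1^-(E_2)=f_1^+(rE_2)$, the phase-shifted profile you promised to check. So do not claim that clearing denominators reproduces the printed formula, which is missing a factor $(1-\mu_s^2)^{-1}$ in that term.
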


\begin{proof}
Proving Lemma~\ref{lem: lemma_5} is done by means of the direct calculations with taking into account formula \eqref{sol_sol_CH_equation} for the function $ V_0 $.
\end{proof}

\subsection{Accuracy of the main approximation}

Recall that the function $ u_0(x, t) $, which is a solution of equation \eqref{reg_part_0}, represents the main regular term, while the main singular term of the asymptotic two-phase soliton-like solution $ V_0(t, \tau_1, \tau_2) $ is given implicitly by \eqref{sol_sol_CH_equation}, where
$$
\tau_k = - \frac{\sqrt{6 u_0}}{2 \mu_k} \delta_k + \ln \frac{\Delta_1}{\Delta_2}, \quad k=1,2.
$$

\begin{theorem}\label{thm: th_4}
Let be the following conditions take place:
\begin{enumerate}
\item[1.] The functions $ a_0 $, $ b_0 \in C^{(1)} ({\mathbb R}\times [0; T])$, $ j = 0, 1, \dots, N $, $ a_0(x, t) b_0(x, t) \not= 0 $ for all $ (x, t) \in {\mathbb R}\times [0; T] $;
\item[2.] The phase functions $ \varphi_1 $, $ \varphi_2 \in C^{(1)} ([0; T]) $ are such that initial conditions $ \varphi_1(0) = \varphi_2(0) = 0$ hold, and
\begin{align*}
a_0(\varphi_1(t), t) &= a_0(\varphi_2(t), t) = 1, \\
b_0(\varphi_1(t), t) &= b_0(\varphi_2(t), t) = 3, \\
u_0(\varphi_1(t), t) &= u_0(\varphi_2(t), t)
\end{align*}
for all $ t \in [0; T] $;
\item[3.] Inequalities \eqref{cond_2sol_phase} are true.
\end{enumerate}

Then the main approximation of the asymptotic two-phase soliton-like solution
$$
Y_0(x, t, \varepsilon) = u_0(x, t) + V_0(t, \tau_1, \tau_2),
$$
where the function $ V_0 (t, \tau_1, \tau_2) $ is written through \eqref{sol_sol_CH_equation}, satisfies the vcCH equation \eqref{CHolm_vc} with an accuracy $ O(1) $.
\end{theorem}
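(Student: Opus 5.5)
We substitute $Y_0(x,t,\varepsilon)=u_0(x,t)+V_0(t,\tau_1,\tau_2)$, with $\tau_k=(x-\varphi_k(t))/\varepsilon$, into the left-hand side of \eqref{CHolm_vc}. We then split the residual $R(x,t,\varepsilon)$ into three groups: the terms containing only $u_0$, the terms of order $\varepsilon^{-1}$ that come from differentiating $V_0$ in the fast variables, and the remaining terms, which are bounded. Using $\partial_x=\varepsilon^{-1}(\partial_{\tau_1}+\partial_{\tau_2})$ on $V_0$ and $\partial_t=\partial_t-\varepsilon^{-1}(\varphi_1'\partial_{\tau_1}+\varphi_2'\partial_{\tau_2})$, we have the following.

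\begin{itemize}
\item The purely regular terms give $a_0u_{0t}+b_0u_0u_{0x}=0$ by \eqref{reg_part_0}. The remaining regular terms are $O(\varepsilon)$ and involve $a_1$, $b_1$ and $\varepsilon^2$ times third derivatives of $u_0$. This uses the smoothness of $u_0$ and of the coefficients on compacts.
\item The order-$\varepsilon^{-1}$ terms reproduce exactly the left-hand side of \eqref{sing_term_0_2phase}, but with $a_0(x,t)$, $b_0(x,t)$, $u_0(x,t)$ evaluated at the point $x$.
\item The order-$1$ terms consist of $a_0\partial_tV_0$, $\varepsilon^2$-weighted mixed terms such as $\partial_t\partial_\tau^2V_0$, cross terms like $b_0V_0u_{0x}$, and the $a_1$, $b_1$ contributions. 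All of these are bounded, because $V_0\in\widetilde G_2$ and all of its derivatives are bounded.
\end{itemize}

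The key step is to show that the order-$\varepsilon^{-1}$ group is in fact $O(1)$. By condition 2, $a_0=1$, $b_0=3$ and $u_0=u_0(t)$ hold on both curves $\Gamma_1$, $\Gamma_2$. On these curves the $\varepsilon^{-1}$-group coincides with \eqref{main_term2phase_sol}. After the change of variables $\tau_k=\xi-\varphi_k'\eta$, that equation is the Camassa--Holm equation \eqref{main_term_two_phase_n-var}, and \eqref{sol_sol_CH_equation} solves it under \eqref{cond_2sol_phase}. It therefore vanishes identically in $(\tau_1,\tau_2)$.

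Off the curves we write $a_0(x,t)-a_0(\varphi_k,t)=(x-\varphi_k)\int_0^1a_{0x}\,ds=\varepsilon\tau_k\,\alpha_k(x,t)$, and do the same for $b_0$ and $u_0$. Here $C^{(1)}$ smoothness is exactly what is needed, and it is available for $a_0$, $b_0$. Using the curve $\Gamma_1$ as the reference, the difference between the true $\varepsilon^{-1}$ group and the vanishing one becomes $\tau_1\cdot\mathcal{P}(V_0)$, where $\mathcal{P}$ is linear or quadratic in $V_0$ and its $\tau$-derivatives, with coefficients bounded on compacts.

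It remains to show that $\tau_1\mathcal{P}(V_0)$ is bounded uniformly in $(\tau_1,\tau_2)$. Every term of $\mathcal{P}$ contains a $\tau$-derivative of $V_0$, so it is small wherever $V_0$ is locally constant. By Lemma~\ref{lem: lemma_4} and Lemma~\ref{lem: lemma_5}, as $\tau_1\to\pm\infty$ such terms decay faster than any power of $\tau_1$, except for the contributions of the limits $f_1^\pm(t,\tau_2)$, which depend on $\tau_2$ only. Those residual one-phase profiles are localized near $\Gamma_2$. There $\tau_1=\tau_2+(\varphi_2-\varphi_1)/\varepsilon$, and we handle them by switching the reference curve to $\Gamma_2$, i.e. by expanding instead around $x=\varphi_2$, so that the factor becomes $\tau_2$ times a function decaying in $\tau_2$. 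In practice we use a partition of unity in $x$ subordinate to neighbourhoods of $\Gamma_1$ and $\Gamma_2$. Near $t=0$ the two curves merge, and there $\tau_1-\tau_2=O(t/\varepsilon)$. The decay in both variables from Lemma~\ref{lem: lemma_5} then makes products like $\tau_k\partial_{\tau}V_0$ bounded in either chart.

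I expect this step to be the main obstacle, namely the uniform bound on $\tau_k\times$ (derivatives of $V_0$) across the interaction region, given that $V_0$ is known only implicitly through $\delta_k$. I would first obtain the bound in the $(\delta_1,\delta_2)$ variables. There $V_0$ and its derivatives are explicit rational functions of $E_1$, $E_2$, and $\tau_k$ grows at most linearly in $\delta_k$ with a bounded $\ln(\Delta_1/\Delta_2)$ correction (Lemma~\ref{lem: lemma_4}), so the bound reduces to exponential decay against linear growth.

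Collecting the three groups gives $R=O(1)$ uniformly on compacts, which is the claimed accuracy.
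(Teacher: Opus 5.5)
Your overall strategy is the paper's. You substitute $Y_0$, cancel the $\varepsilon^{-1}$ group on the phase curves via \eqref{main_term2phase_sol}, Taylor-expand $a_0$, $b_0$, $b_0u_0$ about the curves to gain a factor $\varepsilon\tau_k$, and absorb $\tau_k$ by the decay of $V_0$. The gap is in how you organize the Taylor step. Expanding everything about $\Gamma_1$ produces terms such as $\tau_1\,\partial_{\tau_2}V_0$, and these are unbounded: as $|\tau_1|\to\infty$, $\partial_{\tau_2}V_0\to\partial_{\tau_2}f_1^{\pm}(t,\tau_2)\not\equiv0$. Your partition of unity cures this only where the curves are a fixed distance apart. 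Your claim that near $t=0$ such products are bounded ``in either chart'' fails when $\varepsilon\ll|\varphi_1(t)-\varphi_2(t)|\ll1$. For example, if $|\varphi_1(t)-\varphi_2(t)|=\sqrt{\varepsilon}$, then at points with $\tau_2=O(1)$ you have $|\tau_1|\approx\varepsilon^{-1/2}$, so $\tau_1\partial_{\tau_2}V_0$ is that large in the $\Gamma_1$ chart. The identity $\tau_1-\tau_2=(\varphi_2-\varphi_1)/\varepsilon=O(t/\varepsilon)$ helps only when $t=O(\varepsilon)$.

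The paper avoids this with a pairing that needs no partition. Every variable-coefficient term of the $\varepsilon^{-1}$ group is a coefficient times either a $\partial_{\tau_1}$-part or a $\partial_{\tau_2}$-part, and by condition~2 each coefficient takes the same value on both curves. So one writes $a_0(x,t)-1=a_0(x,t)-a_0(\varphi_1(t),t)$ in front of $\partial_{\tau_1}V_0$ and $a_0(x,t)-1=a_0(x,t)-a_0(\varphi_2(t),t)$ in front of $\partial_{\tau_2}V_0$, and likewise for $b_0$ and $b_0u_0$. Then only $\tau_k\partial_{\tau_k}V_0$ and $\tau_kV_0\partial_{\tau_k}V_0$ need bounding. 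They are bounded because $f_k^{\pm}$ does not depend on $\tau_k$, so $\partial_{\tau_k}V_0$ decays rapidly in $\tau_k$; the merging at $t=0$ plays no role. You can rescue your single-chart idea by choosing the chart by the nearer curve, i.e.\ splitting at $x=(\varphi_1+\varphi_2)/2$. On the $\Gamma_1$ side $|\tau_1|\le|\tau_2|$, hence $|\tau_1\partial_{\tau_2}V_0|\le|\tau_2\partial_{\tau_2}V_0|$.

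One further caution applies equally to the paper's residual $g_0$. Expanding $-\varepsilon^2uu_{xxx}$ with $u=u_0+V_0$ also gives $-\varepsilon^{-1}u_0(x,t)(\partial_{\tau_1}+\partial_{\tau_2})^3V_0$, which is absent from \eqref{sing_term_0_2phase}. So your claim that the $\varepsilon^{-1}$ group reproduces \eqref{sing_term_0_2phase} exactly is not right. On the curves the leftover is $-\varepsilon^{-1}u_0(t)(\partial_{\tau_1}+\partial_{\tau_2})^3V_0$ with $u_0(t)>0$, so no Taylor argument absorbs it. It cancels only if $V_0$ solves the profile equation with this term included.
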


\begin{proof} By substituting $ Y_0(x, t, \varepsilon) $ into equation~\eqref{CHolm_vc} and taking into account~\eqref{main_term2phase_sol}, we reduce the proving Theorem~\ref{thm: th_4} to deriving an asymptotic estimate for the following residual function:
\begin{align*}
g_0(x, t, \varepsilon) = & - \frac{1}{\varepsilon} \left[ a_0(x, t) - a_0(t)\right] \left[ \varphi_1' \frac{\partial V_0}{\partial\tau_1} + \varphi_2' \frac{\partial
V_0}{\partial\tau_2} \right] \\
& + \frac{1}{\varepsilon} \left[b_0 (x, t) u_0(x, t) - b_0(t) u_0(t) \right] \left[ \frac{\partial V_0}{\partial \tau_1} + \frac{\partial V_0}{\partial \tau_2} \right] \\
& + \frac{1}{\varepsilon} \left[b_0 (x, t) - b_0(t) \right] V_0 \left[ \frac{\partial V_0}{\partial \tau_1} + \frac{\partial V_0}{\partial \tau_2} \right]
\end{align*}
as $ \varepsilon\to 0 $.

Taking into account the inequality \eqref{cond_conj} and the inclusion $ V_0 \in {\widetilde G}_2 $, we obtain
\begin{align*}
\left| a_0(x, t) - a_0(t) \right| \left|\frac{\partial V_0}{\partial\tau_1} \right|
& = \left| a_0(x, t) - a_0(\varphi_1(t),
t) \right| \left|\frac{\partial V_0}{\partial\tau_1} \right| \\
& \le \varepsilon C_0 \left|\frac{x - \varphi_1(t)}{\varepsilon}\right| \left|\frac{\partial
V_0}{\partial\tau_1} \right| \le \varepsilon C_1,
\end{align*}
where $ C_1 > 0 $ is a constant independent of $ \varepsilon$, whose value is determined only by the compact set $ K \subset {\mathbb R} \times [0; T] $, on which the inequality is evaluated.

In a similar way, we deduce
\begin{align*}
 & | a_0(x, t) - a_0(t) | \left|\frac{\partial V_0}{\partial\tau_2} \right| \\
 & \qquad = \left| a_0(x, t) - a_0(\varphi_2(t), t) \right| \left|\frac{\partial V_0}{\partial\tau_2} \right| \le \varepsilon C_2,
\\
& | b_0(x, t) - b_0(\varphi_1(t), t) | \left|\frac{\partial V_0}{\partial\tau_1} \right| \\
&
\qquad  + \left| b_0(x, t) - b_0(\varphi_2(t), t) \right| \left| \frac{\partial
V_0}{\partial\tau_2} \right| \le \varepsilon C_3 ,
\\
& | b_0(x, t)u_0(x, t) - b_0(\varphi_1(t), t)u_0(\varphi_1(t), t) | \left|\frac{\partial V_0}{\partial\tau_1} \right| \\
& \qquad + \left| b_0(x, t)u_0(x, t) - b_0(\varphi_2(t), t)u_0(\varphi_2(t), t) \right| \left| \frac{\partial V_0}{\partial\tau_2} \right| \le \varepsilon C_4,
\end{align*}
where $ C_2 , C_3 , C_4 > 0 $ are constants depending only on the compact set $ K $.

This completes the proof of the theorem.
\end{proof}

\subsection{Example 2} \label{subsection_ example 2}
To present the result of applying the algorithm within the framework of Theorem~\ref{thm: th_4}, we consider the vcCH equation \eqref{CHolm_vc} with coefficients given as \cite{Popovych_2012}
$$
a_0(x, t) = \exp{\bigl( x^2 - 36 xt + 288 t^2\bigr)} , \quad b_0(x, t) = 3 + \varepsilon t^2,
$$
which leads to the following equation:
\begin{align}
& \exp{\bigl(x^2 - 36 x t + 288 t^2\bigr)} u_t - \varepsilon^2 u_{t x x} \nonumber\\
& \qquad{} + \bigl(3 + \varepsilon t^2\bigr) u u_x - 2 \varepsilon^2 u_x u_{xx} - \varepsilon^2 u u_{xxx} = 0.\label{ex_2_sol}
\end{align}

We take $ u_0(x, t) = 1 $ as the main regular term. For the phase functions, we choose $ \varphi_1(t) = 12 t $ and $ \varphi_2(t) = 24 t $, which ensures that the conditions of Theorem~\ref{thm: th_4} and condition~\eqref{cond_2sol_phase} are satisfied for all $ t $.

The main singular term is given by formula \eqref{sol_sol_CH_equation} with the parameters $ \mu_1 $ and $ \mu_2 $ as
$$
\mu_1 = \frac{\sqrt{4 - \sqrt{6}}}{2}, \quad \mu_2 = \frac{\sqrt{8 - \sqrt{6}}}{2\sqrt{2}}.
$$

Additionally,
\begin{align*}
\delta_1 = \frac{\sqrt{4 - \sqrt{6}}}{\sqrt{6}} \left( - \tau_1 + \ln \frac{\Delta_1}{\Delta_2}\right), \\
\delta_2 = \frac{\sqrt{8 - \sqrt{6}}}{\sqrt{6}} \left( - \tau_2 + \ln \frac{\Delta_1}{\Delta_2}\right), \\
E_k (t, \tau_1, \tau_2) = \exp{(\delta_k)}, \quad k =1, 2.
\end{align*}

The resulting main approximation of the two-phase soliton-like solution is therefore global in~$ t $ and~$ x $~\cite{Johnson}.

\begin{figure}[ht]
\centering
\includegraphics[scale=0.51]{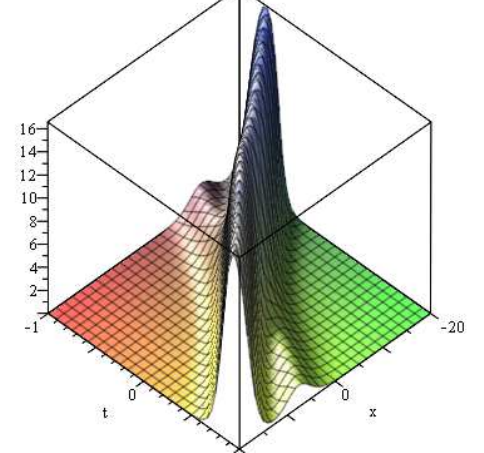} \quad
\includegraphics[scale=0.51]{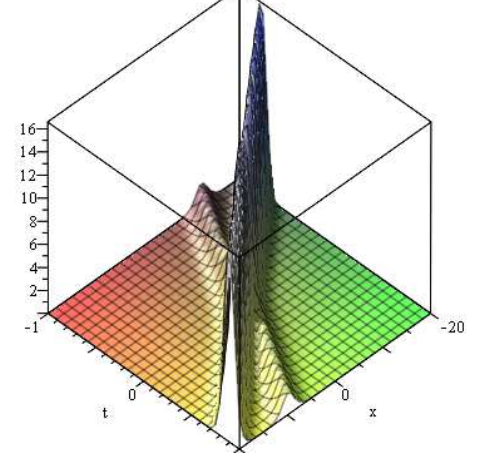}
\caption{The main approximation
$ Y_0 (x, t, \tau) $ as $\varepsilon=1$ (at the left) and $\varepsilon=0.5$ (at the right).} \label{fig: fig5}
\end{figure}

The graphs of the constructed main approximation are presented in Figure~\ref{fig: fig5}.

\section{One-phase peakon-like solution}\label{section4}

We now proceed to the problem of constructing asymptotic peakon-like solutions to equation \eqref{CHolm_vc}. Analogously to the construction of soliton-like solutions, we employ the methodology presented in \cite{SamBrandSam}, according to which the peakon-like solutions possess a structure similar to that of their soliton-like counterparts. Specifically, they consist of a regular component, representing the smooth background, and a singular component, capturing the peakon-type features of the asymptotic solution.

Although the differential equations governing the regular and singular components of the asymptotic peakon-like solutions coincide with those in the soliton-like case, a crucial distinction lies in the functional setting for the singular part. In the peakon case, solutions for the singular component are sought in different functional spaces that reflect the specific features of peakon solutions. This choice is essential, as it guarantees the characteristic properties of asymptotic peakon-like solutions: the terms forming the singular part of the asymptotics belong to appropriately selected functional spaces, ensuring the correct peakon behavior.

\subsection{Main definitions}
In the sequel we use definitions of special functional spaces. Let $ G^+ $
be a~space of infinitely differentiable functions $ f\colon {\mathbb R} \times [0; T] \times {\mathbb R_+} \mapsto {\mathbb R} $
such that for any nonnegative integers $ n $, $ p $, $ q $ and $ r $
$$
\lim_{\tau \to + \infty} \tau^n \frac{\partial ^p}{\partial x^p} \frac{\partial ^q}{\partial
 t^q} \frac{\partial ^r}{\partial \tau^r} f (x, t, \tau) = 0, \quad (x, t) \in K,
$$
uniformly with respect to $ (x, t) \in K $ in any compact set $ K \subset {\mathbb R} \times [0; T] $.

Here and below $ {\mathbb R}_+ = [0; +\infty) $.

We denote by $ {\widetilde G}^+ $ 
a space of infinitely differentiable functions $ f\colon [0; T] \times {\mathbb R_+} \mapsto {\mathbb R} $ 
such that for any nonnegative integers $ n $, $ p $ and $ q $
$$
\lim_{\tau \to + \infty} \tau^n \frac{\partial ^p}{\partial
 t^p} \frac{\partial ^q}{\partial \tau^q} f (t, \tau) = 0, \quad t \in [0; T].
$$

Analogously, let $ G^{-} $
be a space of infinitely differentiable functions $ f\colon {\mathbb R} \times [0; T] \times {\mathbb R_-} \mapsto {\mathbb R} $
such that for any nonnegative integers $ n $, $ p $, $ q $ and $ r $
$$
\lim_{\tau \to - \infty} \tau^n \frac{\partial ^p}{\partial x^p} \frac{\partial ^q}{\partial
 t^q} \frac{\partial ^r}{\partial \tau^r} f (x, t, \tau) = 0, \quad (x, t) \in K,
$$
uniformly with respect to $ (x, t) \in K $ in any compact set $ K \subset {\mathbb R} \times [0; T] $, and let $ {\widetilde G}^- $
be a space of infinitely differentiable functions $ f\colon [0; T] \times {\mathbb R_-} \mapsto {\mathbb R} $
such that for any nonnegative integers $ n $, $ p $ and $ q $
$$
\lim_{\tau \to - \infty} \tau^n \frac{\partial ^p}{\partial
 t^p} \frac{\partial ^q}{\partial \tau^q} f (t, \tau) = 0, \quad t \in [0; T].
$$
Here and below $ {\mathbb R}_- = ( -\infty; 0] $.

Thus, the functions belonging to the spaces $ G^{-} $ and $ G^+ $ are defined to the left and to the right of the point $ \tau = 0 $, respectively. The following space is obtained by identifying (or ``gluing'') functions from these two spaces along $ \tau = 0 $, thereby producing continuous functions whose asymptotic behavior at infinity is determined by the properties of elements of $ G^{-} $ and $ G^+ $. We denote by $ G^{\pm} $ 
a space of continuous functions $ f\colon {\mathbb R} \times [0; T] \times {\mathbb R} \mapsto {\mathbb R} $
such that
$$
f = \begin{cases}
 f^+(x, t, \tau), & (x,t,\tau) \in {\mathbb R} \times [0; T] \times {\mathbb R_+}, \\
 f^-(x, t, \tau), & (x,t,\tau) \in {\mathbb R} \times [0; T] \times {\mathbb R_-},
 \end{cases}
$$
where $ f^+
\in G^+ $ and $ f^-
\in G^- $, and by $ {\widetilde G}^{\pm} $ a space of continuous functions $ f\colon [0; T] \times {\mathbb R} \mapsto {\mathbb R} $
such that
$$
f = \begin{cases}
 f^+( t, \tau), & (t,\tau) \in [0; T] \times {\mathbb R_+}, \\
 f^-( t, \tau), & (t,\tau) \in [0; T] \times {\mathbb R_-},
 \end{cases}
$$
where $ f^+
\in {\widetilde G}^+ $ and $ f^-
\in {\widetilde G}^- $.

\begin{definition}[\cite{SamBrandSam}] \label{def_3}
A nontrivial function $ u = u(x,t,\varepsilon) $, where $ (x, t) \in {\mathbb R} \times[0; T] $ and $ \varepsilon $ is a small parameter, is called an asymptotic peakon-like function if for any integer $ N \ge 0 $ it can be represented as
\begin{multline}\label{1as_peakon_sol}
u(x,t,\varepsilon) = \sum\limits_{j=0}^N \varepsilon^j \left[u_j(x,t) + V_j(x,t,\tau) \right] + O\left( \varepsilon^{N+1} \right),  \quad \tau = \frac{x-\varphi(t)}{\varepsilon},
\end{multline}
where $ \varphi \in C^\infty ([0;T]) $ is a scalar function, $ u_j \in C^\infty $, $ V_j \in G^{\pm} $, $ j = 0, 1, \dots, N $.
\end{definition}

The function $ \varphi = \varphi(t) $ defines the phase curve
$$
\Gamma = \{ (x, t) \in {\mathbb R} \times [0; T] \mid x - \varphi(t) = 0\}
$$
of the peakon-like solution.

\subsection{Singular part}
The asymptotic one-phase peakon-like solution is sear\-ched in the form \eqref{1as_peakon_sol}.
As before, the regular com\-po\-nents are obtained as solutions to differential equa\-tions \eqref{reg_part_0} and \eqref{reg_part_j}, whose construction and sol\-va\-bi\-lity are discussed in Section~\ref{section2.2}.
The singular components are defined as solutions to the third-order differential sys\-tem \eqref{singular_part_0_soliton_sol} and \eqref{singular_part_1_soliton_sol} considered within appropriate func\-tio\-nal spaces.

Owing to the properties of the functions $ V_j \in G^{\pm} $, $j= 0, 1, \dots, N $, the analysis of systems \eqref{singular_part_0_soliton_sol} and \eqref{singular_part_1_soliton_sol} differs from previous cases and is carried out as follows. We begin by examining the singular terms on the phase curve $ \Gamma $. In contrast to the soliton-like case, the main term of the singular component of the asymptotic peakon-like solutions can be determined explicitly. The procedure for identifying this main term simultaneously yields the phase function~$ \varphi $. Using the structural properties of the singular terms, we subsequently construct their expansion in a neighborhood of the curve~$ \Gamma $.

We now proceed to a rigorous examination of the algorithmic procedure. The function $ v_0(t, \tau) $, introduced in \eqref{sing_term_curve}, is required to satisfy equation \eqref{sing_part_02_soliton_sol_1} and, in addition, is assumed to belong to the space $ {\widetilde G}^{\pm} $, i.e., to exhibit peakon-type characteristics. For this reason, we represent it in the form
\begin{equation} \label{CHpeakon_main_term}
v_0(t, \tau) = e^{-\alpha |\tau|}.
\end{equation}
This representation satisfies \eqref{sing_part_02_soliton_sol_1} provided that
$$
\alpha = \alpha (t) = \sqrt{\frac{b_0(\varphi (t), t)}{3}}.
$$

Simultaneously, we obtain a differential equation for the phase function $ \varphi $:
\begin{equation}\label{dif_eq_peakon}
\frac{d \varphi}{d t} = \frac{3 b_0(\varphi, t) u_0 (\varphi, t)}{3 a_0(\varphi, t) - b_0(\varphi, t)}.
\end{equation}
This differential equation is nonlinear, and its solution is, in general, defined only on a finite temporal interval, for instance, $ [0;T] $.

Thus, the main singular term on the curve $ \Gamma $ is written as \eqref{CHpeakon_main_term} under the condition $ b_0(\varphi (t), t) > 0 $, $ t\in[0; T] $, which agrees with the assumption concerning main terms of asymptotic expansions \eqref{coeff} for the coefficients equation \eqref{CHolm_vc}.
It is clear, that function \eqref{CHpeakon_main_term} belongs to the space $G^{\pm}$ and can be naturally extended as $ V_0(x, t, \tau ) = v_0(t, \tau) $.

\begin{remark} \label{rem: r_1}
The main singular term can be represented in more general form
$$
v_0(t, \tau) = A(t) e^{-\alpha(t) |\tau - \beta (t)|}
$$
with smooth functions $ A (t) $ and $ \beta (t) $. The algorithm for constructing the asymptotic peakon-like solution remains entirely unchanged. The inclusion of $ \beta(t) $ produces only a slight deformation of the curve $ \Gamma $, while $ A (t) $ serves as a scaling coefficient for wave-like solution. It is thus sufficient to consider the case $ \beta (t) = 0 $, $ A(t) = 1 $.
\end{remark}

\begin{remark} \label{rem: r_2}
In classical case of the CH equation, when $ a_0(\varphi, t) = 1 $ and $ b_0(\varphi, t) = 3 $ in \eqref{CHolm_vc}, \eqref{coeff} the main regular term is trivial, i.e., $ u_0(x, t) = 0 $; the equation for the phase function \eqref{dif_eq_peakon} becomes inapplicable and the main singular term $ v_0(t, \tau) = e^{-|\tau|} $ can depend on an arbitrary smooth function $ \varphi $ accordingly~\eqref{1as_peakon_sol}.
\end{remark}


We now turn to equations \eqref{sing_part_j2_soliton_sol_1}, which determine the higher terms $ v_j(t, \tau ) $, $ j = 1, \dots, N $, along the curve $ \Gamma $. The solution procedure is based on constructing $ v_j(t, \tau ) $ separately for the domains $ \tau > 0 $ and $ \tau <0 $. The resulting solutions are then extended and ``glued'' together so that they are continuous in $ \tau $ and belong to the space~$ {\widetilde G}^{\pm} $.

We represent components $ v_j (t, \tau) $ as
\begin{equation}\label{sol_peakon_j}
v_j (t, \tau) =
\begin{cases}
	v_j^+(t, \tau), & \tau \ge 0, \\
	v_j^-(t, \tau), & \tau < 0,
\end{cases}
\end{equation}
subject to the assumption that the functions satisfy the following conjugation conditions
\begin{equation}\label{cont_condition}
\lim_{\tau \to 0+} v_j^+(t, 0) = \lim_{\tau \to 0-}v_j^-(t, \tau).
\end{equation}

The functions in \eqref{sol_peakon_j} are solutions of corresponding differential equations
\begin{align}
& ( \varphi' - v_0 ) \frac{\partial^3 v_j^{\pm}} {\partial\tau^3} + [b_0(\varphi, t) u_0(\varphi, t) - a_0(\varphi, t) \varphi' ] \frac{\partial v_j^{\pm}}{\partial \tau} \nonumber\\
&\qquad{} + b_0(\varphi, t) \frac{\partial }{\partial \tau } ( v_0 v_j^{\pm} ) - 2 \frac{\partial }{\partial \tau } \left( \frac{\partial v_0^{\pm}}{\partial \tau} \frac{\partial v_j^{\pm}}{\partial \tau} \right)
- \frac{\partial^3 v_0^{\pm}}{\partial \tau^3} v_j = {\cal F}_j^{\pm},\label{sing_part_j2_+-}
\end{align}
where the right-side functions $ {\cal F}_j^{+} (t, \tau) $, $ {\cal F}_j^{-} (t, \tau) $ are defined recursively after the functions $
V_0, $ $ V_1 $, $ \dots $ , $ V_{j-1} $, $j= 1, \dots, N $, have been found.
In particular,
$$
{\cal F}_1^{\pm} (t, \tau) = A_1^{\pm} e^{-\alpha|\tau|} + A_2^{\pm} \tau e^{-\alpha|\tau|} + A_3^{\pm} e^{-2\alpha|\tau|} + A_4^{\pm} \tau e^{-\alpha|\tau|},
$$
where
\begin{align*}
A_1^{\pm} = 2 \alpha \alpha' \pm a_0 \alpha' - \left(\pm \alpha^2 \alpha' \pm a_1 \alpha \varphi'\right) ,
\\
A_2^{+} = - A_2^{-} = b_{0 x} u_0 + b_{0} u_{0 x} - a_{0x} \varphi',
\\
A_3^{+} = - A_3^{-} = b_1 \alpha , \quad A_4^{+} = - A_4^{-} = b_{0x} \alpha ,
\end{align*}
and the values $ a_0(x, t) $, $ a_1(x, t) $, $ b_0(x, t) $, $ b_1(x, t) $, $ u_0(x, t) $ together with their first derivatives, have been evaluated at $ x = \varphi(t) $.

After integrating equation \eqref{sing_part_j2_+-} in $ \tau $ we come to the second order inhomo\-geneous ordinary differential equations for the functions $ v_j^{-} (t, \tau) $, $ v_j^{+} (t, \tau) $ :
\begin{align}
& (\varphi' - v_0) v_{j \tau \tau }^{\pm} - v_{0 \tau } v_{j \tau }^{\pm} \nonumber\\
&  + ( - a_0(\varphi, t) \varphi'
 + b_0(\varphi, t) (v_0 + u_0 (\varphi, t) ) - v_{0 \tau \tau } )v_j^{\pm} = {\Phi}_j^{\pm},\label{ht_1}
\end{align}
where
$$
{\Phi}_j^{\pm} = {\Phi}_j^{\pm} ( t, \tau) = \int_{-\infty}^\tau {\cal F}_j^{\pm}(t, \xi) d \xi + E_j^{\pm}(t), 
$$
and $ E_j^{\pm}(t) $, $ j = 1, \dots, N $, are integration constants. For example,
\begin{align}
\Phi_1^{+} ={}& - \frac{1}{\alpha} \left(A_1^+ + A_2^+ \left( \frac{1}{\alpha} + \tau \right)\right) e^{-\alpha \tau}
\nonumber\\
& -
\frac{1}{2\alpha} \left(A_3^+ + A_4^+ \left( \frac{1}{2\alpha} + \tau\right) \right) e^{ - 2\alpha \tau} + E_1^+(t),\label{function_Phi_1+}
\\
\Phi_1^{-} ={}& \frac{1}{\alpha} \left(A_1^- + A_2^- \left(- \frac{1}{\alpha} + \tau \right)\right) e^{\alpha \tau}
\nonumber\\
& +
\frac{1}{2\alpha} \left(A_3^- + A_4^- \left(- \frac{1}{2\alpha} + \tau\right) \right) e^{ 2\alpha \tau} + E_1^-(t) .\label{function_Phi_1-}
\end{align}

Since the functions $ v_j(t, \tau ) $ inherit the properties of the singular terms $ V_j $ at the infinity, the conditions
\begin{equation}\label{function_Phi_infinity}
\lim_{\tau \to +\infty} \Phi_j^+ (t, \tau) = 0, \quad \lim_{\tau \to -\infty} \Phi_j^- (t, \tau) = 0
\end{equation}
must be satisfied. These requirements uniquely determine the choice of the right-hand side functions in \eqref{ht_1}.

Fundamental system of solutions for equation \eqref{ht_1} is given by
\begin{align*}
v_{01}^+ & = e^{-\alpha \tau}, \\
v_{02}^+ & = \frac{1}{\alpha \varphi'^3} \left(\varphi' + \frac{\varphi'^2}{2} e^{\alpha \tau} + e^{- \alpha \tau} \ln{\left| e^{\alpha \tau} - \frac{1}{\varphi'}\right|} \right) \qquad{} \\ & \mbox{for } \tau \ge 0,
\end{align*}
and
\begin{align*}
v_{01}^- & = e^{\alpha \tau}, \qquad{} \\
v_{02}^- & = - \frac{1}{\alpha \varphi'^3} \left(\varphi' + \frac{\varphi'^2}{2} e^{-\alpha \tau} + e^{\alpha \tau} \ln{\left| e^{-\alpha \tau} - \frac{1}{\varphi'}\right|} \right) \qquad{} \\ &
\mbox{for } \tau < 0. \qquad{}
\end{align*}

This system yields the general solution to \eqref{ht_1}, given by the formula \eqref{sol_peakon_j}, where $ v_j^+ $, $ v_j^- $ are written as
\begin{align}
v_j^{\pm} (t, \tau) ={}& - v_{01}^{\pm} (t, \tau) \int_{0}^{\tau} \Phi_j^{\pm}(t, \tau) v_{02}^{\pm}(t, \tau) d \tau
\nonumber\\
& + v_{02}^{\pm}(t, \tau) \int_{0}^\tau \Phi_j^{\pm}(t, \tau) v_{01}^{\pm}(t, \tau) d \tau \nonumber\\
& + c_{j1}^{\pm} v_{01}^{\pm} (t, \tau) + c_{j2}^{\pm} v_{02}^{\pm} (t, \tau),\label{sing_part_peakon_j}
\end{align}
where due to the property $v_j^{\pm} (t, \tau) \in {\widetilde G}^{\pm} $ the constants
\begin{align}
c_{j2}^{\pm} = - \lim\limits_{\tau \to \pm\infty} \int\limits_{0}^\tau \Phi_j^\pm(t, \tau) v_{01}^\pm (t, \tau) d \tau,
\end{align}
and the constants  $ c_{j1}^{\pm} $ are chosen to satisfy conjugation condition \eqref{cont_condition} at $ \tau =0 $:
$$
c_{j1}^{+} v_{01}^{+}(t, 0) + c_{j2}^{+} v_{02}^{+} (t, 0) = c_{j1}^{-} v_{01}^{-}(t, 0) + c_{j2}^{-} v_{02}^{-} (t, 0).
$$

The functions defined in \eqref{sing_part_peakon_j} enable the construction of $ v_{j} (t, \tau) $ in accordance with \eqref{sol_peakon_j}. It is evident that this function exhibits a peak at $ \tau = 0 $.

Below we demonstrate that the resulting function $ v_{j} $ belongs to the space~$ {\widetilde G}^{\pm} $.

\begin{lemma}\label{lem: lemma_6} Let $ b_0(x, t) > 0 $ for all $ (x, t) \in {\mathbb R} \times [0; T] $, the function $ \varphi $ be a solution of equation \eqref{dif_eq_peakon} and $ \varphi'(t) > 1 $ for all $ t \in [0; T] $. Then the function $ v_j $ defined by \eqref{sol_peakon_j}, \eqref{sing_part_peakon_j} belongs to the space $ {\widetilde G}^{\pm} $ and satisfies the following inequality
$$
\left| v_j^{\pm} (t, \tau) \right| \le \sum\limits_{k=0}^{j+1} C_{j k} |\tau|^{k} e^{-\alpha |\tau|},
$$
where the constants $ C_{j k} \ge 0 $ are real for all $ k = 0, 1, \dots, $ $ j+1 $.
\end{lemma}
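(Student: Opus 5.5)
The plan is an induction on $j$, proving at each step a structural fact slightly stronger than the stated bound. On each half-line, $v_j^{\pm}$ will be shown to be a finite sum
$$\sum_{m\ge 1}\sum_{k} c_{mk}^{\pm}(t)\,|\tau|^{k}e^{-m\alpha|\tau|},$$
with smooth coefficients, and with the powers $k$ attached to $e^{-\alpha|\tau|}$ not exceeding $j+1$. Terms with $m\ge2$ are dominated by $|\tau|^k e^{-\alpha|\tau|}$ with a constant, so they fit the claimed estimate.

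The base case is $v_0=e^{-\alpha|\tau|}$; it has this form, and since $b_0>0$ we have $\alpha>0$. I next analyse the fundamental system. The hypothesis $\varphi'>1$ gives $e^{\alpha\tau}-1/\varphi'>0$ for $\tau\ge0$, and likewise on $\tau<0$. Hence $\rho=\varphi'-v_0\ge\varphi'-1>0$ on the whole line, the equation \eqref{ht_1} is nonsingular, and $v_{02}^{\pm}$ is smooth on each closed half-line. Expanding the logarithm,
$$\ln\bigl(e^{\alpha\tau}-1/\varphi'\bigr)=\alpha\tau+\ln\bigl(1-e^{-\alpha\tau}/\varphi'\bigr),$$
shows that $v_{02}^{+}$ equals $\tfrac{\varphi'^2}{2}e^{\alpha\tau}$ plus a constant, plus $\tau e^{-\alpha\tau}$, plus a convergent series in $e^{-m\alpha\tau}$. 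So $v_{01}^{+}$ decays like $e^{-\alpha\tau}$ and $v_{02}^{+}$ grows like $e^{\alpha\tau}$. The Wronskian of the pair is proportional to $1/\rho$, which is bounded and bounded away from zero.

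For the inductive step, assume the form holds for $v_0,\dots,v_{j-1}$. Then $\mathcal F_j^{\pm}$ is built from products of these functions and their $\tau$-derivatives, multiplied by polynomial weights in $\tau$ coming from Taylor expansions of the coefficients, as in \eqref{function_F}. It is therefore a sum of terms $\tau^k e^{-m\alpha|\tau|}$ with $m\ge1$. The degree of $\tau$ on the leading exponential grows by at most one per order: $\mathcal F_1$ carries $\tau e^{-\alpha|\tau|}$. The choice \eqref{function_Phi_infinity} of $E_j^{\pm}$ kills the constant, so $\Phi_j^{\pm}$ has the same form, with the power of $\tau$ at most $j$ on $e^{-\alpha|\tau|}$.

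I then insert this into the variation-of-constants formula \eqref{sing_part_peakon_j}. The integral $\int_0^\tau\Phi_j v_{01}$ converges at infinity. With $c_{j2}^{\pm}$ chosen as stated, the coefficient of the growing solution $v_{02}$ becomes $-\int_\tau^{\infty}\Phi_j v_{01}\,d\xi$, which is $O(|\tau|^{j}e^{-2\alpha|\tau|})$. Multiplying by $v_{02}=O(e^{\alpha|\tau|})$ gives $O(|\tau|^{j}e^{-\alpha|\tau|})$.

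The other piece is $-v_{01}\int_0^\tau\Phi_j v_{02}\,d\xi$. The leading product $\Phi_j v_{02}$ contains $\tau^{j}$ times a constant, and integrating it raises the power to $\tau^{j+1}$. This piece is therefore $O(|\tau|^{j+1}e^{-\alpha|\tau|})$. I expect this to be the main obstacle: tracking exactly which resonant terms produce the extra power of $|\tau|$, so that the exponent stays at $j+1$ and no worse. I would handle it by computing the leading coefficients explicitly, rather than crude bounding, only for the $e^{-\alpha|\tau|}$ component. The remaining terms, $c_{j1}v_{01}$ and the constant parts, are harmless.

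Continuity at $\tau=0$ is built in through the choice of $c_{j1}^{\pm}$. Smoothness on each closed half-line, and decay of all $t$- and $\tau$-derivatives, follow because differentiating terms of the form $|\tau|^k e^{-m\alpha|\tau|}$ with smooth coefficients preserves that form, uniformly on the compact interval $[0;T]$. This gives $v_j\in\widetilde G^{\pm}$ together with the stated estimate, the constants $C_{jk}$ being the suprema over $t$ of the coefficients.
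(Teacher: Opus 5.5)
For $j=1$ your argument is essentially the paper's proof. The paper treats only that case: it bounds $|\Phi_1^{\pm}|\le (A_1|\tau|+A_2)e^{-\alpha|\tau|}$, $|v_{01}^{\pm}|\le e^{-\alpha|\tau|}$ and $|v_{02}^{\pm}|\le B_1+B_2e^{\alpha|\tau|}+\dots$, estimates the two variation-of-constants pieces as you do, and asserts that general $j$ is ``similar''.

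For $j\ge 2$, however, your induction does not close, because the degree count for $\Phi_j$ is wrong. Under your hypothesis, $v_{j-1}$ carries $|\tau|^{j}e^{-\alpha|\tau|}$. As \eqref{function_F} shows for $j=1$, $\mathcal F_j$ contains $\tau\bigl(a_{0x}\varphi'-(b_0u_0)_x\bigr)\partial_\tau v_{j-1}$ and $\partial_\tau^2\partial_t v_{j-1}-a_0\partial_t v_{j-1}$. You list the Taylor weights but omit the $t$-derivatives. Both raise the degree by one, since $\partial_t\bigl(|\tau|^k e^{-\alpha(t)|\tau|}\bigr)=-\alpha'|\tau|^{k+1}e^{-\alpha|\tau|}+\dots$. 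Hence $\mathcal F_j$ and $\Phi_j$ carry $|\tau|^{j+1}e^{-\alpha|\tau|}$, not $|\tau|^{j}e^{-\alpha|\tau|}$. Your own resonance step for $-v_{01}\int_0^\tau\Phi_j v_{02}$ then puts $|\tau|^{j+2}e^{-\alpha|\tau|}$ into $v_j$. The degree therefore grows by two per order: one from forming $\mathcal F_j$, one from resonance. The induction only closes with the hypothesis ``degree $2j$'', i.e.\ it proves the bound with $\sum_{k=0}^{2j}$.

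Tracking leading coefficients, as you planned for the ``main obstacle'', does not rescue $j+1$; it shows that $2j$ is sharp.
\begin{itemize}
\item Let $K_1$ be the coefficient of $\tau e^{-\alpha\tau}$ in $\mathcal F_1^{+}$. Modulo $O(e^{-2\alpha|\tau|})$ terms, the part of $\mathcal F_j$ linear in $v_{j-1}$ is the same operator as the part of $\mathcal F_1$ linear in $v_0$.
\item This operator sends $D\tau^p e^{-\alpha\tau}$ to $DK_1\tau^{p+1}e^{-\alpha\tau}$ at top order, independently of $p$. All other contributions to $\mathcal F_j$ have lower degree.
\item With $v_{02}^{+}\sim e^{\alpha\tau}/(2\alpha\varphi')$, the resonant integral gives, for the top coefficient $D_j$ of $\tau^{2j}e^{-\alpha\tau}$ in $v_j^{+}$, the recursion $D_j=D_1D_{j-1}/j$. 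Hence $D_j=D_1^j/j!$ with $D_1=K_1/(4\alpha^2\varphi')$. This is the $\varepsilon$-expansion of a secular factor $e^{\varepsilon D_1\tau^2}$.
\item In Example~3 the paper's $v_1$ contains $-\frac{5t}{4\alpha^2}\tau^2e^{-\alpha|\tau|}\operatorname{sign}(\tau)$. So $D_1\ne0$, and $v_2$ has a nonvanishing $\tau^4e^{-\alpha|\tau|}$ term, which no bound with exponent $j+1=3$ can accommodate.
\end{itemize}
So the exponent $j+1$ is correct only for $j=1$. You should prove the lemma with $2j$ in its place, which still gives $v_j\in\widetilde G^{\pm}$.

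A minor point: your class of finite sums is not closed under the construction, because $v_{02}^{\pm}$ contains $e^{-\alpha|\tau|}\ln\bigl(1-e^{-\alpha|\tau|}/\varphi'\bigr)$. Allow instead convergent power series in $e^{-\alpha|\tau|}$ multiplying each power $|\tau|^k$; these converge on the whole range since $\varphi'>1$.
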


\begin{proof}
Proving Lemma~\ref{lem: lemma_6} is straightforward and follows directly from explicit calculations applied to the solution given by formula~\eqref{sing_part_peakon_j}.
Because the reasoning for general $ j $ is completely analogous, we present the proof only for the case $ v_1 $ as the statement for other $ j $ is proved in a similar manner.

The function $ v_1 $ is defined by formula \eqref{sing_part_peakon_j} with $ j = 1 $.
The correspondent right-side functions satisfy conditions \eqref{function_Phi_infinity} which yield $ E_1^+ (t) = E_1^- (t) = 0 $ in \eqref{function_Phi_1+}, \eqref{function_Phi_1-}. Then the right-side functions $ \Phi_1^\pm $ in \eqref{sing_part_peakon_j} obviously satisfy the inequalities
\begin{equation}\label{cond_Phi_1_pm}
\left| \Phi_1^\pm (t, \tau) \right| \le \left( A_1 |\tau| + A_2 \right) e^{-\alpha |\tau|}
\end{equation}
for some non-negative $ A_1 $, $ A_2 $ .

Taking into account exact form of the fundamental system of solutions for equation \eqref{ht_1} and assumption $ \varphi'(t) > 1 $, $ t \in [0; T] $, we can write inequalities
\begin{align}
\left|v_{01}^{\pm} (t, \tau)\right| &\le e^{-\alpha |\tau|}, \nonumber\\
\left|v_{02}^{\pm} (t, \tau)\right| & \le B_1 + B_2 e^{\alpha |\tau|}
+ B_3 e^{-\alpha |\tau|} \ln\left| e^{-\alpha |\tau|} - \frac{1}{\varphi'}\right|,\label{cond_fund_sol}
\end{align}
for some non-negative constants $ B_1 $, $ B_2 $, $ B_3 $.

From formula \eqref{sing_part_peakon_j} in virtue of \eqref{cond_Phi_1_pm}, \eqref{cond_fund_sol}, we deduce
\begin{align*}
\left| v_1^{\pm} (t, \tau) \right| & \le \left| v_{01}^{\pm} (t, \tau)  \right| \int_{0}^{\tau} \left| \Phi_1^{\pm}(t, \xi) \right| \left| v_{02}^{\pm}(t, \xi) \right| d \xi \\
&  + \left| v_{02}^{\pm}(t, \tau) \right| \int_{\tau}^{+\infty} \left|\Phi_1^{\pm}(t, \xi) \right| \left| v_{01}^{\pm}(t, \xi) \right| d \xi \\
&  + |c_{11}^{\pm} | \, \left| v_{01}^{\pm} (t, \tau) \right| \\
& \le e^{-\alpha |\tau|} \int_0^\tau \left( A_1 |\xi| + A_2 \right) e^{-\alpha |\xi|} \\
& \times \left(B_1 + B_2 e^{\alpha |\xi|} + B_3 e^{-\alpha |\xi|} \ln\left| e^{-\alpha |\xi|} - \frac{1}{\varphi'}\right| \right) d \xi
\\
& + \left(B_1 + B_2 e^{\alpha |\tau|} + B_3 e^{-\alpha |\tau|} \ln\left| e^{-\alpha |\tau|} - \frac{1}{\varphi'}\right| \right) \\
& \times \int_{\tau}^{+\infty}
\left( A_1 |\xi| + A_2 \right) e^{-2\alpha |\xi|} d \xi + |c_{11}^{\pm} | \, e^{-\alpha|\tau|}
\\
\le & \left( C_1 |\tau|^2 + C_2 |\tau| + C_3 \right) e^{-\alpha |\tau|}
\end{align*}
for suitable constants $ C_1 $, $ C_2 $ and $ C_3$.
\end{proof}

Owing to Lemma~\ref{lem: lemma_6}, the function $ v_j $ can be extended as $ V_j(x, t, \tau ) = v_j(t, \tau) $. Summarizing these results, we obtain the following statement.

\begin{theorem} \label{thm: th_5}
Let the following conditions be assumed:
\begin{enumerate}
\item[1.] The functions $ a_j $, $ b_j \in C^{\infty} ({\mathbb R}\times [0; T])$, $ j = 0, 1, \dots,$ $ N $, and additionally
$ a_0(x, t) b_0(x, t) \not= 0 $, $ b_0(x, t) > 0 $ for all $ (x, t) \in {\mathbb R}\times [0; T] $;
\item[2.] The function $ \varphi $ is a solution of differential equation \eqref{dif_eq_peakon} and the inequality $ \varphi'(t) > 1 $ holds for all $ t \in [0;T] $.
\end{enumerate}

Then the function
\begin{equation}\label{as_sol_peakon}
Y_N (x, t, \varepsilon) = \sum\limits_{j=0}^N \varepsilon^j \left[u_j(x, t) + V_j (x, t, \tau)\right], \quad \displaystyle \tau =
\frac{x-\varphi(t)}{\varepsilon},
\end{equation}
is the $ N $-th asymptotic approximation of the peakon-like solution of equation \eqref{CHolm_vc} and satisfies this equation on the set
\begin{align*}
\{(x, t)\in {\mathbb R}\times [0; T] \mid x-\varphi(t) > 0\} \\
\cup \{(x, t)\in {\mathbb R}\times [0; T] \mid x-\varphi(t) < 0 \}
\end{align*}
with an asymptotic accuracy $ O(\varepsilon^N) $.
In addition, as $ \tau \to \pm\infty $ function \eqref{as_sol_peakon} satisfies \eqref{CHolm_vc} with an asymptotic accuracy $ O(\varepsilon^{N+1}) $.
\end{theorem}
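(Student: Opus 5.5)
The plan is a residual estimate, carried out separately on the two open half-planes $\tau>0$ and $\tau<0$. On each half-plane every $V_j$ is $C^\infty$ in $\tau$, so the equation can be understood classically there. We substitute $Y_N$ from \eqref{as_sol_peakon} into \eqref{CHolm_vc}. The derivatives transform as $\partial_x \mapsto \partial_x + \varepsilon^{-1}\partial_\tau$ and $\partial_t \mapsto \partial_t - \varepsilon^{-1}\varphi'\partial_\tau$. The coefficients are expanded using \eqref{coeff}, truncated at order $N$, and we collect powers of $\varepsilon$. The result splits into three groups of terms:
\begin{itemize}
\item[(i)] purely regular terms, which cancel order by order because $u_j$ solve \eqref{reg_part_0}, \eqref{reg_part_j};
\item[(ii)] singular terms whose coefficients are evaluated at $x=\varphi(t)$, which cancel because $v_0=e^{-\alpha|\tau|}$ solves \eqref{sing_part_02_soliton_sol_1} (with $\alpha=\sqrt{b_0/3}$ and $\varphi$ given by \eqref{dif_eq_peakon}), and $v_j^{\pm}$ solve \eqref{sing_part_j2_+-};
\item[(iii)] the remainder.
\end{itemize}
Since $V_j(x,t,\tau)=v_j(t,\tau)$ carries no $x$-dependence, the remainder in (iii) consists of two kinds of terms. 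The first kind are truncation tails of orders $\varepsilon^{N}$ and higher. The second kind arise from replacing $a_k(x,t)$, $b_k(x,t)$, $u_k(x,t)$ by their values on $\Gamma$.

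To control the second kind, we Taylor-expand the coefficients around $x=\varphi(t)$:
\[
a_k(x,t)=\sum_{m\le M}\frac{\partial_x^m a_k(\varphi,t)}{m!}\,\varepsilon^m\tau^m+O\bigl(\varepsilon^{M+1}|\tau|^{M+1}\bigr).
\]
The polynomial-in-$\tau$ pieces of each order are exactly the terms already absorbed into $\mathcal F_j^{\pm}$; the explicit $\tau$-factors in $A_2^{\pm}$ and $A_4^{\pm}$ of ${\cal F}_1^{\pm}$ show this mechanism. What survives after this absorption is multiplied either by $\varepsilon^{N}$ or by $\varepsilon^{M+1}|\tau|^{M+1}$ times a $\tau$-derivative of some $v_j$. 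For such products we invoke Lemma~\ref{lem: lemma_6}: it gives $|v_j^{\pm}|\le \sum_{k\le j+1}C_{jk}|\tau|^k e^{-\alpha|\tau|}$. Because \eqref{ht_1} is a linear ODE with coefficients that are exponentials in $\tau$, the same type of bound holds for the $\tau$-derivatives. Hence
\[
\sup_\tau |\tau|^{m}\,|\partial_\tau^r v_j^{\pm}|<\infty
\]
uniformly for $t\in[0;T]$, since $\alpha(t)\ge\alpha_0>0$ by $b_0>0$ on the compact set. Consequently every remainder term is bounded by $C\varepsilon^N$ on compacta in each half-plane. This gives the $O(\varepsilon^N)$ accuracy. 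The loss of one power relative to $N+1$ comes from the $\varepsilon^{-1}$ in front of the $\tau$-derivatives.

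For the claim as $\tau\to\pm\infty$, the singular terms and their derivatives decay faster than any power of $\tau$ there. The residual therefore reduces to the regular residual, which is $O(\varepsilon^{N+1})$ because the regular equations hold up to order $N$ and the remaining terms carry the factor $\varepsilon^{N+1}$.

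The main obstacle is the bookkeeping in the middle step. We must verify that the recursively defined ${\cal F}_j^{\pm}$ absorb exactly the Taylor pieces of order up to $N$, so that no term of order $\varepsilon^{j-1}$ with $j\le N$ survives. We also need to handle the singular factor $(\varphi'-v_0)^{-1}$ implicit in $v_{02}^{\pm}$; the hypothesis $\varphi'>1$ is what keeps $\varphi'-v_0\ge\varphi'-1>0$. I would first do the case $N=1$ explicitly with ${\cal F}_1^{\pm}$ and then argue by induction on $j$.
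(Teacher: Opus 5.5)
Your route is the one the paper sketches. You compute the residual separately for $\tau>0$ and $\tau<0$, Taylor-expand the coefficients about $\Gamma$ and absorb the polynomial-in-$\tau$ pieces into ${\cal F}_j^{\pm}$, and control the remainders with the weighted decay from Lemma~\ref{lem: lemma_6}. For $\tau\to\pm\infty$ you use the exponential smallness of the singular part.

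\textbf{The gap is step (ii)}, exactly the bookkeeping you deferred, and it does not hold. In $-\varepsilon^2uu_{xxx}$ with $u=u_0+V_0+O(\varepsilon)$, the leading part is $-\varepsilon^{-1}\bigl(u_0(x,t)+V_0\bigr)\partial_\tau^3V_0$. So in the true order-$\varepsilon^{-1}$ balance the coefficient of $\partial_\tau^3V_0$ is $\varphi'-u_0(\varphi,t)-V_0$. By contrast, \eqref{singular_part_0_soliton_sol} and \eqref{sing_part_02_soliton_sol_1} carry only $\varphi'-V_0$, and \eqref{sing_part_j2_+-}, \eqref{ht_1} omit the same term: the cross term $-u_0\partial_\tau^3V$ has been dropped. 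The constant-coefficient case shows it: substituting $u=k+w$ into the CH equation with $a\equiv1$, $b\equiv3$ produces $-kw_{xxx}$.

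Consequently, with $v_0=e^{-\alpha|\tau|}$, $\alpha^2=b_0/3$ and $\varphi$ from \eqref{dif_eq_peakon}, the residual of $Y_N$ contains the uncancelled term
\[
-\varepsilon^{-1}u_0(\varphi(t),t)\,\partial_\tau^3v_0=\varepsilon^{-1}u_0(\varphi(t),t)\,\alpha^3\operatorname{sign}(\tau)\,e^{-\alpha|\tau|}.
\]
Equation \eqref{dif_eq_peakon} together with $\varphi'>1$ forces $u_0(\varphi(t),t)\neq0$. Along $x=\varphi(t)+\varepsilon\tau^*$ with fixed $\tau^*\neq0$, the residual is therefore of exact order $\varepsilon^{-1}$. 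Hence no bound $C\varepsilon^N$ can hold on compact sets meeting $\Gamma$. Only the second assertion survives, since away from $\Gamma$ all singular terms are exponentially small.

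The omission sits in the paper's derivation of the singular equations. Its outline proof has the same hole, so appealing to it does not close yours. To repair your argument you must redo the $\varepsilon^{-1}$ balance:
\begin{itemize}
\item It still gives $\alpha^2=b_0/3$, but the phase equation becomes $\varphi'=2b_0u_0/(3a_0-b_0)$ at $x=\varphi$, instead of \eqref{dif_eq_peakon}.
\item The equations for $v_j^{\pm}$ acquire the leading coefficient $\varphi'-u_0-v_0$. The right-hand sides ${\cal F}_j^{\pm}$ acquire the corresponding Taylor terms, e.g.\ $(u_1+\tau u_{0x})\partial_\tau^3v_0$ for $j=1$.
\item The fundamental system becomes the paper's one with $\varphi'$ replaced by $\varphi'-u_0(\varphi,t)$.
\item The hypothesis $\varphi'>1$ must become $\varphi'-u_0(\varphi,t)>1$. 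This keeps the leading coefficient positive and the logarithm in $v_{02}^{\pm}$ regular.
\end{itemize}
With these corrected equations, the rest of your plan goes through and yields $O(\varepsilon^N)$ off $\Gamma$, uniformly on compact sets: absorbing the Taylor pieces, bounding $\partial_\tau^r v_j^{\pm}$ via the ODE, and losing one power of $\varepsilon$ to the $\varepsilon^{-1}\partial_\tau$.
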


\begin{proof}
The idea behind the proof of Theorem~\ref{thm: th_5} is analogous to that of Theorem~5 in~\cite{SamBrandSam}, which concerns the asymptotic one-phase peakon-like solution of the vcmCH equation. Therefore, we only outline its main idea.

In constructing the asymptotic solution, the general principles of asymptotic analysis are applied, inherently ensuring that the solution being developed meets the intended asymptotic accuracy. A rigorous justification is then required to confirm that the constructed solution indeed achieves the desired level of precision.

In the proof, the structure of the constructed solution is carefully taken into account, and the solution is effectively analyzed on two distinct domains corresponding to $\tau > 0$ and $\tau < 0$. Although the derivatives of the singular terms with respect to the variable $\tau$ are discontinuous, this does not complicate the fulfillment of the asymptotic estimates. These estimates are obtained by carefully accounting for the specific properties of functions belonging to the spaces $G^{\pm}$.

This approach relies on the analysis of the corresponding residual functions~\cite{Sam_MMC_2021_2} for each domain mentioned above. Since this analysis is purely technical and can be rather tedious, we do not present it in detail here.
\end{proof}

\subsection{Example 3}

We consider the construction of the first two terms of the asymptotic one-phase peakon-like solution \eqref{1as_peakon_sol} for the vcCH equation of the form
\begin{multline} \label{example_3}
\frac{5}{4} \bigl(x^2 + 4\bigr) u_t - \varepsilon^2 u_{xxt} + 6 \bigl(t^2 + 1\bigr) u u_x - 2 \varepsilon^2 u_x u_{xx} - \varepsilon^2 uu_{xxx} = 0.
\end{multline}

This equation has a form \eqref{CHolm_vc}, where
$$
a(x, t, \varepsilon) = \frac{5}{4} \bigl(x^2 + 4\bigr), \quad
b(x, t, \varepsilon) = 6 \bigl(t^2 + 1\bigr).
$$

Taking into account the coefficients of this equation, the equation for the phase function can be readily obtained by specifying~\eqref{dif_eq_peakon}. This equation can be further simplified by taking the regular part in the form $u_0(x, t) = u_1(x, t) = 1$, which satisfies the equations for the regular terms \eqref{reg_part_0} and \eqref{reg_part_j}. Under this assumption, equation \eqref{dif_eq_peakon} for the phase function $\varphi$ has a global solution $\varphi = 2t$ with $ t \in {\mathbb R} $.

The main singular term along curve $ x = 2 t $ is exactly determined from \eqref{CHpeakon_main_term} with $ \alpha = \sqrt{2 (t^2 + 1)} $ and is given by
\begin{equation}\label{example_sol_main-term_peakon}
 v_0(t, \tau) = e^{-\alpha |\tau|} = \exp{\left(-\sqrt{2(t^2 + 1)} |\tau|\right)}.
\end{equation}

This function belongs to the space $ {\widetilde G}^{\pm} $ and can be extended in a standard way $ V_0(x, t, \tau) = v_0(t, \tau) $. So, the main term of the asymptotic one-phase peakon-like solution is given as
$$
Y_0(x, t, \varepsilon) = 1 + \exp{\left(-\sqrt{2(t^2 + 1)} \left|\frac{x - 2t}{\varepsilon} \right| \right)},\quad (x, t)\in {\mathbb R}^2.
$$

To find the next singular term, we calculate the functions in \eqref{function_Phi_1+}, \eqref{function_Phi_1-}:
\begin{align*}
\Phi_1^+(t, \tau) & = \left( \frac{10 t \tau}{\alpha} + \frac{10 t}{\alpha^2} - \frac{4 t}{\alpha} - 3t \right) e^{-\alpha \tau } ,
\\
\Phi_1^-(t, \tau) & = \left(\frac{10 t \tau}{\alpha} - \frac{10 t}{\alpha^2} + \frac{4 t}{\alpha} - 3t \right) e^{\alpha \tau }
\end{align*}
by using of which we obtain the first singular term $ v_1 $ in the following form:
\begin{align*}
v_1 (t, \tau) ={}& \frac{5 t}{4\alpha^3} \tau e^{-2 \alpha |\tau|} - \frac{3 t}{8\alpha^2} e^{-2 \alpha |\tau|} \\
& + \left(\frac{25 t}{8\alpha^4} - \frac{t}{2 \alpha^3} \right) e^{-2 \alpha |\tau|} \operatorname{sign}{(\tau)}\\
& - \frac{5t}{4 \alpha^2} \tau^2 e^{ -\alpha|\tau|} \operatorname{sign}{(\tau)} - \left( \frac{15 t}{4 \alpha^{3}} - \frac{t}{\alpha^2} \right) \tau e^{-\alpha|\tau|} \\
& + \frac{3 t}{4 \alpha} \tau e^{-\alpha |\tau|} \operatorname{sign}{(\tau)}
+ \left(\frac{15 t}{8 \alpha^{2}} - \frac{3 t \ln 2}{16 \alpha^2} \right) e^{-\alpha |\tau|} \\
& -
 \left(\! \frac{25 t}{8 \alpha^4} - \frac{t}{2 \alpha^3} + \frac{5t \ln 2}{4 \alpha^8} -  \frac{t \ln 2}{4 \alpha^3} \!\right)\! e^{-\alpha |\tau|} \operatorname{sign}{(\tau)} \\
 & - \frac{1}{8 \alpha} e^{-\alpha |\tau|} \operatorname{sign}{(\tau)} \\
 & \times \int_{0}^{\tau} \left( \frac{10 t \tau}{\alpha} - 3 t + \left(\frac{10 t}{\alpha^2} - \frac{4 t}{\alpha} \right) \right. \\
 & \left. \operatorname{sign}{(\tau)} \right) e^{-2\alpha |\tau|} \ln \left|e^{\alpha |\tau|} - \frac{1}{2} \right| d \tau \\
& - \left(\frac{5 t \tau}{8\alpha^3} - \frac{3t}{16\alpha^2} \right) e^{-3 \alpha |\tau|} \ln \left|e^{\alpha |\tau|} - \frac{1}{2} \right| \\
& - \left( \frac{5 t}{4 \alpha^8} - \frac{t}{4 \alpha^{3}} \right)e^{-3 \alpha |\tau|} \operatorname{sign}{(\tau)} \ln \left|e^{\alpha |\tau|} - \frac{1}{2} \right|,
\end{align*}
where
$$
\operatorname{sign}{(\tau)} = \begin{cases}
 \hphantom{-}1, & \tau \ge 0, \\
 -1,& \tau < 0, \\
 \end{cases}
\quad \text{and}\quad \tau = \frac{x-2t}{\varepsilon} .$$

It is easy to see that the function $ v_1 \in {\widetilde G}^{\pm} $ and it can be extended in required manner as
$$
V_1(x, t, \tau) = v_1(t, \tau).
$$

Thus, the first asymptotic approximation of the asymptotic peakon-like solution is written as
\begin{align}\label{as_sol_1_peakon}
Y_1 (x, t, \tau) = u_0(x, t) + V_0 (x, t, \tau) + \varepsilon (u_1(x, t) + V_1 (x, t, \tau) ).
\end{align}

The graphs of function \eqref{as_sol_1_peakon} are presented in Figures~\ref{fig: 5}--\ref{fig: 7}.

\begin{figure}[ht]
\centering
\includegraphics[scale=0.51]{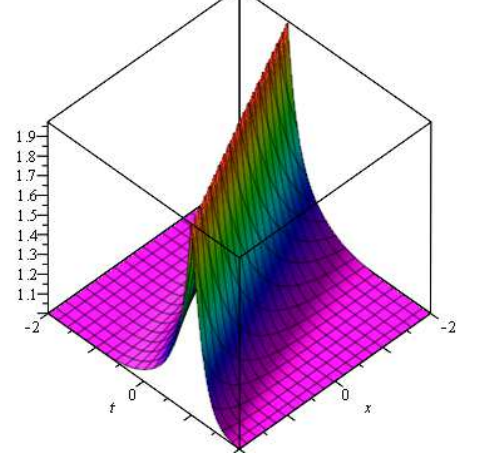} \quad
\includegraphics[scale=0.51]{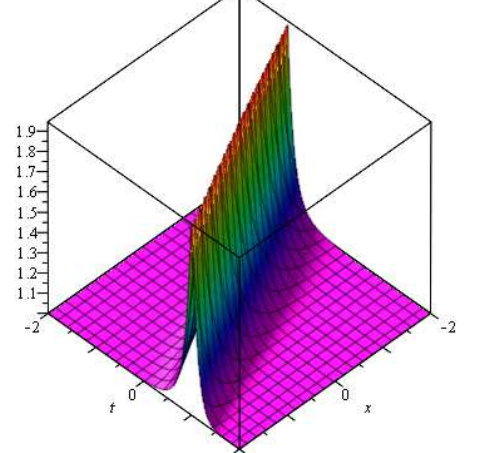}
\caption{The main term 
$ Y_0 (x, t, \tau) = u_0(x, t) + V_0 (x, t, \tau)$ as $\varepsilon=1$ (at the left) and $\varepsilon=0.5$ (at the right).} \label{fig: 5}
\end{figure}

\begin{figure}[ht]
\centering
\includegraphics[scale=0.51]{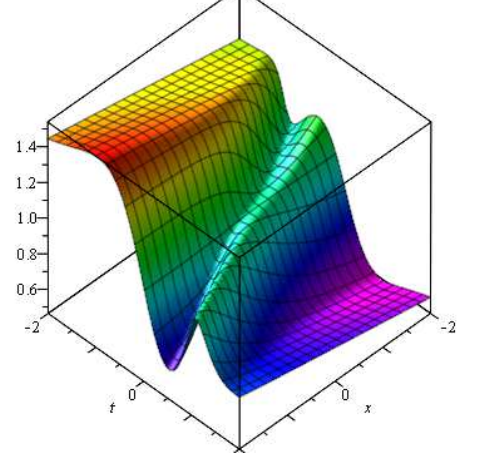} \quad
\includegraphics[scale=0.51]{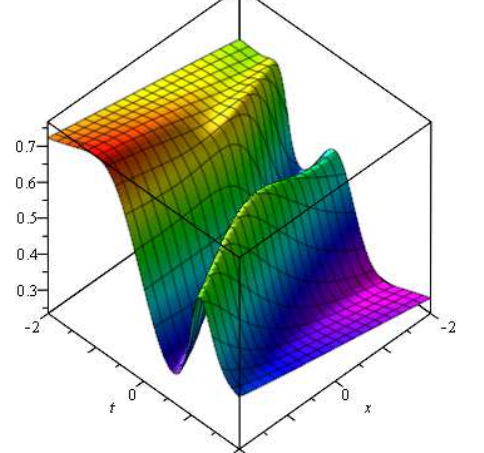}
\caption{The first term 
$ \varepsilon(u_1(x, t) + V_1 (x, t, \tau) ) $ as $\varepsilon=1$ (at the left) and $\varepsilon=0.5$ (at the right).} \label{fig: 6}
\end{figure}

\begin{figure}[ht]
\centering
\includegraphics[scale=0.51]{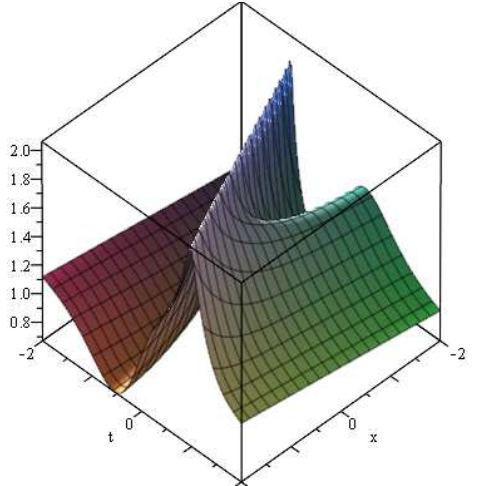} 
\includegraphics[scale=0.51]{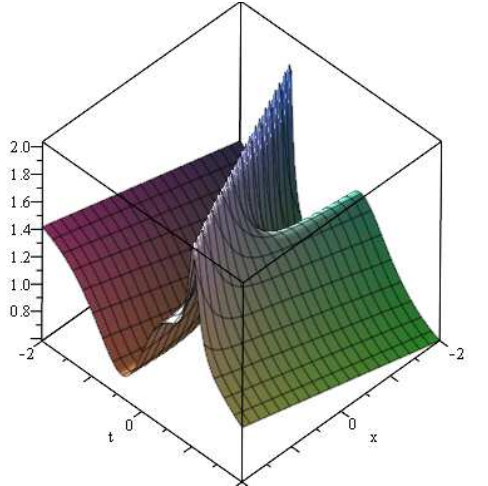}
\caption{The asymptotic solution 
$ Y_1 (x, t, \tau)$ as $\varepsilon=1$ (at the left) and $\varepsilon=0.5$ (at the right).} \label{fig: 7}
\end{figure}

\section{Two-phase peakon-like solutions}\label{section5}

Taking into account the integrability nature of the Camassa–Holm equation possessing multi-phase peakon solutions, it is natural to consider the problem of constructing asymptotic two-phase peakon-like solutions to the vcCH equation~\eqref{CHolm_vc}.
Such a solution is sought in the form of a sum of the regular and singular parts of the asymptotic expansion, whose terms are specified by equations~\eqref{reg_part_0},~\eqref{reg_part_j}, and~\eqref{sing_term_0_2phase}. As in the problems considered above, the key difficulty lies in determining the singular part of the asymptotics, since it is precisely this part that reflects the qualitative properties of the solution, and its construction requires finding particular solutions of certain partial differential equations within appropriate functional spaces.

Furthermore, similar to the case of the two-phase so\-li\-ton-like solutions, the complexity of the problem increases compared to the case of the one-phase peakon-like solutions. The reasons for this complication are analogous to those encountered in the construction of two-phase soliton-like solutions (see the introduction to Section~\ref{section3}). However, unlike the soliton case, here one can obtain a simpler representation by exploiting the structure of the two-peakon solution of the classical CH equation.
Nevertheless, for higher-order singular terms, it appears impossible to overcome the difficulties associated with their construction.

\subsection{Main definitions and concepts}

Let $ G_2^{\pm} $ be a space of continuous functions
$$
f = f(x, t, \tau_1, \tau_2) \in C^\infty
({\mathbb{R}} \times [0; T] \times {\mathbb R} \times {\mathbb R}), $$
for which there exist functions $ f_1^{\pm} = f_1^{\pm} (x,t, \tau_2) $, $ f_2^{\pm} = f_2^{\pm} (x,t, \tau_1) \in G^{\pm} $, such that for all non-negative integers $ n $, $ p $, $ q $, $ \alpha $, $\beta $ uniformly with respect to $ (x,t) \in K $ on any compact set $ K \subset {\mathbb{R}} \times [0; T] $ we have the relations
$$
\lim_{\tau_k \to \pm \infty} \tau_k^{ n} \frac {\partial^{ p}} {\partial x^p}\frac{\partial^{ q}}{\partial t^q}
\frac{\partial^{ \alpha}} {\partial \tau_1^{\alpha}} \frac{\partial^{ \beta}} {\partial \tau_2^{\beta}} \left(
f - f_k^{\pm} \right) = 0, \quad k =1, 2.
$$

Analogously to the space $ {\widetilde G}_2 $, we define $ {\widetilde G}_2^{\pm} \subset G_2^{\pm} $ as a space of continuous functions $ f = f(t, \tau_1, \tau_2) \in C^\infty ([0; T] \times {\mathbb R} \times {\mathbb R}) $, for which there exist functions  $ f_1^{\pm} = f_1^{\pm} (t, \tau_2) $,
$ f_2^{\pm} = f_2^{\pm} (t, \tau_1) \in {\widetilde G}^{\pm} $, such that for all non-negative integers $ n $, $ q $, $ \alpha $, $\beta $ uniformly in $ t \in [0; T] $ we have the relations
$$
\lim_{\tau_k \to \pm \infty} \tau_k^{ n} \frac{\partial^{ q}}{\partial t^q} \frac{\partial^{ \alpha}} {\partial \tau_1^{\alpha}} \frac{\partial^{ \beta}} {\partial \tau_2^{\beta}} \left(
f - f_k^{\pm} \right) = 0, \quad k =1, 2.
$$

\begin{definition} A nontrivial function $ u = u(x, t, \varepsilon) $, whe\-re $ \varepsilon $ is a small parameter, is called an asymptotic two-phase peakon-like function if for any integer $ N \ge 0 $ it can be represented as
	\begin{equation}\label{2as_sol_peakon}
		u(x, t, \varepsilon) = \sum\limits_{j=0}^N \varepsilon^j \left[u_j(x, t) + V_j (x, t, \tau_1, \tau_2)\right] + O\bigl(\varepsilon^{N+1}\bigr),
	\end{equation}
where
$$
\tau_1 = \frac{x - \varphi_1(t)}{\varepsilon}, \quad \tau_2 = \frac{x - \varphi_2(t)}{\varepsilon},
$$
$ V_j \in G_2^{\pm} $, $ j = 0 , 1, \dots, N $, and the phase functions $ \varphi_1 $, $ \varphi_2 \in C^{\infty} ([0;T]) $ satisfy condition $ \varphi_1(0) = \varphi_2(0) $.
\end{definition}

\begin{remark}\label{rem: r_3}
In view of the properties of the two-phase peakon-like solution, $ \varphi_1 (t) \not= \varphi_2(t) $ for all $ t \in (0; T] $. Moreover, we assume the condition $ \varphi_1^{\prime}(t) \not= \varphi_2^{\prime}(t) $ for all $ t \in [0; T] $.
\end{remark}

Formula \eqref{2as_sol_peakon} plays a key role in determining the components of the required solution. As in the soliton case, it is precisely the structure of the singular terms that allows us to derive differential equations for the regular terms of the form~\eqref{reg_part_0} and~\eqref{reg_part_j}, as well as the corresponding equations for the singular terms.

The main singular term of the asymptotic two-phase peakon-like solution is sought as a solution of differential equation~\eqref{main_term2phase_sol} in the space $ G_2^{\pm} $.
The equations for the higher singular terms are more cumbersome than those for $ V_0 $; therefore, they are not presented here.
In the sequel we consider the case of trivial background function, i.e., $ u_k(x, t) = 0 $ for all $ k \ge 0 $.

Let move on constructing main singular term.

\subsection{Main singular term}

Consider equation \eqref{main_term2phase_sol} with $ u_0(x, t) = 0 $. It is a nonlinear partial differential equation, the exact integrability of which appears to be impossible. Therefore, we impose additional conditions of the form $ a_0(\varphi_1, t) = a_0(\varphi_2, t) = 1$, $ b_0(\varphi_1, t) = b_0(\varphi_2, t) = 3 $, which allow us to construct a two-phase peakon-like solution.

Denote $ c_1 = \varphi_1' $, $ c_2 = \varphi_2' $. In view of Remark~\ref{rem: r_3}, we suppose inequality $ c_1 - c_2 > 0 $.

By introducing new variables $\xi $, $ \eta $ via
$$
\tau_k = \xi - \varphi_k'(t) \eta, \quad k = 1, 2,
$$
equation \eqref{main_term2phase_sol} transforms into \eqref{main_term_two_phase_n-var}, which is in fact the Camassa--Holm equation. This equation admits a two-peakon solution of the form \cite{Lundmark_2019}:
\begin{align}
V_0(\xi, \eta) ={}& \frac{c_1 e^{- c_1 \eta} + c_2 e^{- c_2\eta} }{e^{- c_1 \eta} + e^{- c_2 \eta}}
 \exp{\left( - \left| \xi + \ln \frac{c_1 e^{- c_1 \eta} + c_2 e^{- c_2\eta}}{c_1 - c_2} \right| \right) } \nonumber\\
& +\frac{c_1 e^{c_1 \eta} + c_2 e^{c_2\eta} }{e^{ c_1 \eta} + e^{c_2 \eta}}  \exp{\left( - \left| \xi - \ln \frac{c_1 e^{c_1 \eta} + c_2 e^{c_2\eta}}{c_1 - c_2} \right|\right)}.\label{as_sol_2_peakon}
\end{align}
By the inverse transformation of variables
\begin{equation}\label{variables_peakon}
\xi = \frac{c_2 \tau_1 - c_1 \tau_2}{c_2 - c_1}, \quad \eta = \frac{\tau_1 - \tau_2}{c_2 - c_1}
\end{equation}
we obtain the main singular term $ V_0 (t, \tau_1, \tau_2) $ in exact form.

The following properties of resulting function $ V_0 $:
\begin{align*}
\lim_{\tau_1 \to + \infty} V_0 (t, \tau_1, \tau_2) & = c_2 \exp{\left(- \left| \tau_2 - \ln \frac{c_2 }{c_1 - c_2} \right|\right)},
\\
\lim_{\tau_1 \to - \infty} V_0 (t, \tau_1, \tau_2) & = c_1 \exp{\left(- \left| \tau_2 + \ln \frac{c_2 }{c_1 - c_2} \right|\right)},
\\
\lim_{\tau_2 \to + \infty} V_0 (t, \tau_1, \tau_2) & = c_2 \exp{\left(- \left| \tau_1 - \ln \frac{c_1}{c_1 - c_2} \right|\right)},
\\
\lim_{\tau_2\to - \infty} V_0 (t, \tau_1, \tau_2) & = c_1 \exp{\left(- \left| \tau_1 + \ln \frac{c_1 }{c_1 - c_2} \right|\right)}
\end{align*}
imply that $ V_0(t, \tau_1, \tau_2) $ belongs to the space $ {\widetilde G}^{\pm}_2 $.

\begin{remark}\label{rem: r_4}
In limiting case where one of the phase variable tends to $ +\infty $ or $ -\infty $, the value of $ V_0(t, \tau_1, \tau_2) $ approaches the function that represents the main singular term of the asymptotic one-phase peakon-like solution.
\end{remark}

\begin{theorem} \label{thm: th_6}
Let the following conditions be assumed:
\begin{enumerate}
\item[1.] The functions $ a_0 $, $ b_0 \in C^{(1)} ({\mathbb R}\times [0; T])$, $ j = 0, 1, \dots, $ $ N $, $ a_0(x, t) b_0(x, t) \not= 0 $ for all $ (x, t) \in {\mathbb R}\times [0; T] $;
\item[2.] The phase functions $ \varphi_1 $, $ \varphi_2 \in C^{(1)} ([0; T]) $ are such that condition $ \varphi_1(0) = \varphi_2(0) $ holds and
\begin{align*}
& \varphi'_1(t) > \varphi'_2(t), \\
& a_0(\varphi_1(t), t) = a_0(\varphi_2(t), t) = 1 , \\
& b_0(\varphi_1(t), t) = b_0(\varphi_2(t), t) = 3
\end{align*}
for all $ t \in [0; T] $.
\end{enumerate}

Then function
$$
Y_0(x, t, \varepsilon) = V_0(t, \tau_1, \tau_2),\, \tau_1 = \frac{x-\varphi_1(t)}{\varepsilon}, \, \tau_2 = \frac{x-\varphi_2(t)}{\varepsilon},
$$
where $ V_0 $ is given via \eqref{as_sol_2_peakon}, \eqref{variables_peakon}, is the main asymptotic approximation of the two-phase peakon-like solution of equation \eqref{CHolm_vc}.
\end{theorem}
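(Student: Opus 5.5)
The proof follows the scheme of Theorem~\ref{thm: th_4}, adapted to the peakon setting of Theorem~\ref{thm: th_5}. The statement has two parts: $V_0$ belongs to ${\widetilde G}_2^{\pm}$, and substituting $Y_0$ into \eqref{CHolm_vc} leaves a residual that is $O(1)$ as $\varepsilon\to 0$ on compact sets, away from the peak curves. Because of the peaks, the identity is understood piecewise: on each of the (at most four) open regions of the $(x,t)$ half-plane where the arguments $\xi \pm \ln(\cdot)$ in \eqref{as_sol_2_peakon} do not change sign, $V_0$ is smooth; at the interfaces we only use continuity of $V_0$, in the weak sense in which the classical two-peakon solves the CH equation.

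\textbf{Step 1: $V_0 \in {\widetilde G}_2^{\pm}$.} Using \eqref{variables_peakon} and $c_1>c_2$, we compute the limits of $V_0$ as $\tau_k\to\pm\infty$. This was already recorded after \eqref{as_sol_2_peakon}. We strengthen it to rapid (exponential) convergence, including for all derivatives in the smooth pieces. For instance, as $\tau_1\to+\infty$ with $\tau_2$ fixed, $\eta\to-\infty$. Then the weight
\[
\frac{c_1e^{-c_1\eta}+c_2e^{-c_2\eta}}{e^{-c_1\eta}+e^{-c_2\eta}}
\]
tends to $c_1$ with error $O(e^{-(c_1-c_2)|\eta|})$, and the shift $\ln(\cdots)$ is linear in $\eta$ plus an exponentially small correction. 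This makes one of the two exponentials exponentially small in $|\tau_1|$, and the other converges to the displayed profile $f_1^{+}\in{\widetilde G}^{\pm}$. The remaining three limits are handled the same way. This also yields Remark~\ref{rem: r_4}.

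\textbf{Step 2: residual.} We substitute $Y_0=V_0$ (with $u_0\equiv 0$) into \eqref{CHolm_vc}. The $\varepsilon^{-1}$ terms are collected as in Theorem~\ref{thm: th_4}. Since $V_0$ solves \eqref{main_term2phase_sol}, equivalently the CH equation \eqref{main_term_two_phase_n-var} in $(\xi,\eta)$, these terms vanish once the coefficients are frozen at $a_0=1$, $b_0=3$. What remains is
\[
g_0=-\tfrac1\varepsilon\bigl[a_0(x,t)-1\bigr]\bigl(\varphi_1'\partial_{\tau_1}V_0+\varphi_2'\partial_{\tau_2}V_0\bigr)+\tfrac1\varepsilon\bigl[b_0(x,t)-3\bigr]V_0\bigl(\partial_{\tau_1}V_0+\partial_{\tau_2}V_0\bigr),
\]
plus $O(1)$ contributions coming from $\partial_t$ acting on the $t$-dependence through $c_k(t)$ and through the higher coefficients $a_1,b_1$.

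\textbf{Step 3: estimates.} We split $V_0$ into its two peakon terms, each of which decays like $e^{-|\tau_k-\text{shift}|}$ around one curve $\Gamma_k$ (up to a bounded logarithmic shift). By the hypotheses $a_0(\varphi_k,t)=1$ and $b_0(\varphi_k,t)=3$, together with $C^{(1)}$ regularity, we have $|a_0(x,t)-1|\le C\varepsilon|\tau_k|$ near $\Gamma_k$. Combined with $|\tau_k|\,|\partial_{\tau}V_0|\le C$, which holds because of the exponential decay from Step 1, this gives $g_0=O(1)$ uniformly on compact sets, exactly as in the proof of Theorem~\ref{thm: th_4}.

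\textbf{Main obstacle.} The hardest part is the localisation of each peakon term to its own phase curve. The shifts $\pm\ln\bigl((c_1e^{\mp c_1\eta}+c_2e^{\mp c_2\eta})/(c_1-c_2)\bigr)$ are not bounded in $\eta$: they grow like $c_k|\eta|$. So we must verify that in the original variables the centre of each peak is at bounded $\tau_k$-distance from the corresponding $\Gamma_k$. The plan is to check this directly from \eqref{variables_peakon} in the two regimes $\eta\to\pm\infty$; near $\eta=0$, that is near $t=0$ where $\varphi_1(0)=\varphi_2(0)$, both curves coincide, so either one serves.
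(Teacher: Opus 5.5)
Your proposal is correct and follows the paper's route: the paper proves Theorem~\ref{thm: th_6} only by declaring it analogous to Theorem~\ref{thm: th_4}, whose proof is exactly your Steps~2--3 (the $O(\varepsilon^{-1})$ terms cancel by the frozen CH equation, and the bound $|a_0(x,t)-a_0(\varphi_k(t),t)|\le C\varepsilon|\tau_k|$ is played against the decay of $V_0$), so you in fact give more detail than the paper does. Your ``main obstacle'' dissolves in one line: $\xi-c_k\eta=\tau_k$ identically, so the two peak arguments are $\tau_1-\ln\frac{c_1+c_2e^{\tau_1-\tau_2}}{c_1-c_2}$ and $\tau_2+\ln\frac{c_2+c_1e^{\tau_1-\tau_2}}{c_1-c_2}$ with $e^{\tau_1-\tau_2}=e^{(\varphi_2(t)-\varphi_1(t))/\varepsilon}\in(0;1]$ on $[0;T]$; hence the shifts are uniformly bounded, and this also shows that $c_2=\varphi_2'>0$ is tacitly required for the formula to be defined.
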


\begin{proof} This statement is analog of Theorem~\ref{thm: th_4} and that is why it is given without additional reasoning.
\end{proof}

\subsection{Example 4}

To demonstrate the result of applying the algorithm within the framework of Theorem~\ref{thm: th_6}, we consider the vcCH equation \eqref{CHolm_vc} with coefficients given as~\cite{Popovych_2012}
$$
a_0(x, t) = \exp{\bigl( x^2 - 62 xt + 600 t^2\bigr)} , \quad b_0(x, t) = 3 + \varepsilon t^2,
$$
which leads to the following equation:
\begin{align}
& \exp{\bigl(x^2 - 62 xt + 600 t^2\bigr)} u_t - \varepsilon^2 u_{txx} + \bigl(3 + \varepsilon t^2\bigr) uu_x\nonumber\\
& \qquad - 2 \varepsilon^2 u_x u_{xx} - \varepsilon^2 uu_{xxx} = 0.\label{ex_2_peakon}
\end{align}

We choose $ \varphi_1(t) = 50 t $ and $ \varphi_2(t) = 12 t $, which ensures that the conditions of Theorem~\ref{thm: th_6} are satisfied for all $ t $.

After reducing \eqref{ex_2_peakon} to the CH equation, accordingly to algorithm of constructing the main term of the singular part of the two-phase peakon-like solution we obtain
\begin{align*}
V_0(t, \tau_1, \tau_2) = \frac{50 e^{\alpha_1(\tau_1 - \tau_2)} + 12 e^{\alpha_2 (\tau_1 - \tau_2)}}{e^{\alpha_1 (\tau_1 - \tau_2)} + e^{\alpha_2 (\tau_1 - \tau_2)}} \\
 \times \exp \left( - \left| \frac{25 \tau_2 - 6 \tau_1}{19} + \ln \frac{25 e^{\alpha_1(\tau_1 - \tau_2)} + 6 e^{ \alpha_2 (\tau_1 - \tau_2)}}{19} \right| \right)
\\
 + \frac{50 e^{- \alpha_1(\tau_1 - \tau_2)} + 12 e^{- \alpha_2 (\tau_1 - \tau_2)}}{e^{-\alpha_1 (\tau_1 - \tau_2)} + e^{-\alpha_2 (\tau_1 - \tau_2)}} \\
 \times \exp\left( - \left| \frac{25 \tau_2 - 6 \tau_1}{19}  \right. \right.
 \left. \left.
 - \ln \frac{25 e^{-\alpha_1 (\tau_1 - \tau_2 )} + 6 e^{- \alpha_2 (\tau_1 - \tau_2)}}{19} \right| \right) ,
\end{align*}
where $ \alpha_1 = {25} / {19}$, $ \alpha_2 = {6} / {19}$.

It is obviously, that
\begin{align*}
\lim_{\tau_1 \to + \infty} V_0 (t, \tau_1, \tau_2) & = 12 \exp{\left(- \left| \tau_2 - \ln \frac{6}{19} \right|\right)},
\\
\lim_{\tau_1 \to - \infty} V_0 (t, \tau_1, \tau_2) & = 50 \exp{\left(- \left| \tau_2 + \ln \frac{6}{19} \right|\right)},
\\
\lim_{\tau_2 \to + \infty} V_0 (t, \tau_1, \tau_2) & = 12 \exp{\left(- \left| \tau_1 - \ln \frac{25}{19} \right|\right)},
\\
\lim_{\tau_2\to - \infty} V_0 (t, \tau_1, \tau_2) & = 50 \exp{\left(- \left| \tau_1 + \ln \frac{25}{19} \right|\right)}.
\end{align*}

The main term of the asymptotic two-phase soliton-like solution is given as
\begin{align*}
Y_0(x, t, \varepsilon) = \frac{50 e^{- 50 t / \varepsilon} + 12 e^{ -12 t / \varepsilon} }{e^{ -50 t/\varepsilon} + e^{-12 t/ \varepsilon }} \\
\times \exp{\left( - \left| \frac{x}{\varepsilon} + \ln \frac{25 e^{- 50 t / \varepsilon} + 6 e^{- 12 t / \varepsilon}}{19} \right| \right) } \\
+ \frac{50 e^{50 t / \varepsilon} + 12 e^{ 12 t / \varepsilon} }{e^{ 50 t / \varepsilon} + e^{12 t / \varepsilon}} \exp{\left( - \left| \frac{x}{\varepsilon} - \ln \frac{25 e^{50 t/\varepsilon} + 6 e^{12 t / \varepsilon}}{19} \right| \right)}.
\end{align*}

Its graphs are represented on Figure~\ref{fig: 8}.

\begin{figure}[ht]
\centering
\includegraphics[scale=0.52]{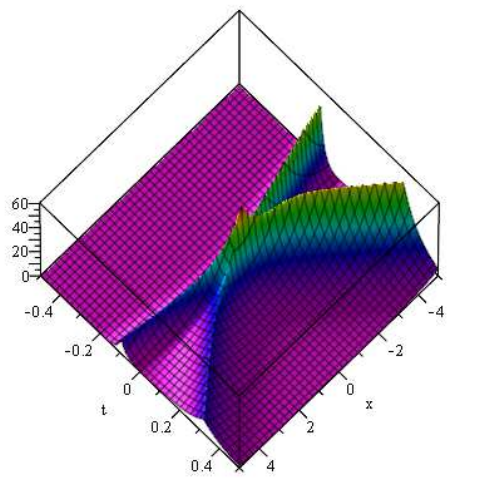} 
\includegraphics[scale=0.52]{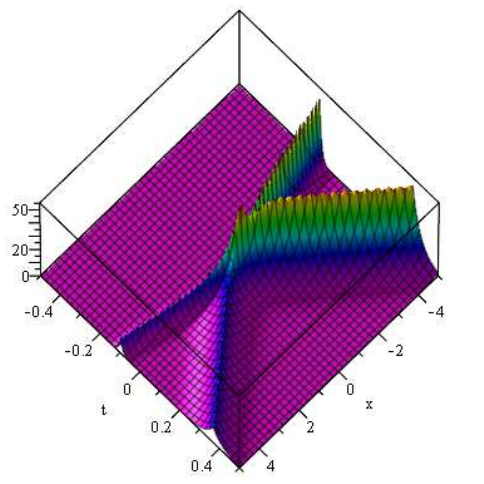}
\caption{The main term 
$ Y_0 (x, t, \tau) = V_0 (x, t, \tau)$ as $\varepsilon=1$ (at the left) and $\varepsilon=0.5$ (at the right).} \label{fig: 8}
\end{figure}

\section{Discussions and conclusions}\label{section6}

This study addresses the Camassa--Holm equation with variable coefficients under a small dispersion. By introducing variable coefficients, this model constitutes a direct generalization of the classical CH equation \cite{CamassaHolm}, which is well known  as a bi-Hamiltonian dynamical system \cite{BPS} admitting both soliton and peakon solutions \cite{Focas}.

The presence of variable coefficients in the system generally complicates finding its solutions and alters certain properties characteristic of integrable systems. Under fair\-ly general assumptions, and considering the smallness of the dispersive terms, some of these properties can be interpreted as small deformations of the corresponding geometric and physical structures of the classical CH equation with constant coefficients.

In this paper, we focus on the mathematical description of particular solutions of the vcCH equation that exhibit properties analogous, in a certain sense, to those of classical soliton and peakon solutions. The associated mathematical model contains a small parameter at the higher order derivatives and is recognized in the literature as a singularly perturbed system, which allows the application of the idea of the nonlinear WKB method \cite{Miura} for its constructive analysis.

For the vcCH system, we consider the construction of soliton- and peakon-like solutions in the form of asymptotic expansions, including both one-phase and two-phase cases. The methodology for constructing these solutions is based on approaches previously validated for the vcmCH equation \cite{SamBrandSam}. However, the vcCH system presents substantial differences and is considerably more complex.

The searched solution is represented as a sum of two asymptotic components: a regular part, which serves as a background function common to both soliton-like and pea\-kon-like solutions, and a singular component, which captures the distinctive features of these solutions.

The determination of the terms of the regular part of the asymptotic solution does not present significant difficulties, since it is connected only with the possibility of integrating certain quasilinear and linear first-order partial differential equations by means of the method of characteristics.

Searching the singular terms of the asymptotics encounters substantial difficulties arising from the fact that the main singular term of the asymptotics, as in the case of the CH equation with constant coefficients, is defined implicitly.

In the one-phase case this is not an insurmountable obstacle, since it is possible to study the solvability of the equations for the higher order singular terms, which opens the way to constructing an asymptotic soliton-like solution with arbitrary accuracy with respect to a small parameter. However, in the two-phase case even the initial step -- the construction of a rapidly decaying solution of the equation for the main singular term -- cannot, in general, be carried out using the methodology of \cite{SamBrandSam}, which should be regarded as a certain limitation of its applicability.

Here, apparently insurmountable technical obstacles arise due to the considerable complexity of integrating the differential equation for the main singular term \eqref{sing_term_0_2phase}, which contains three nonlinear terms and two independent variables. Taking into account that the classical CH equation \eqref{Camassa_Holm} does not possess the property of gauge equivalence (property of insensitivity to the values of the coefficients), and that the coefficients of its nonlinear terms play a key role in its integrability, one must apparently conclude that the main singular term, as a particular solution of a partial differential equation with time dependent coefficients -- even when the time variable in equation \eqref{sing_term_0_2phase} is treated as a parameter -- cannot be constructed in the case of arbitrary coefficients in equation \eqref{CHolm_vc}. This leads to the necessity of imposing certain restrictions on these coefficients.

This situation differs somewhat from that of the Kor\-te\-weg--de Vries equation with variable coefficients, for which asymptotic two-phase \cite{Sam_2008} and multi-phase \cite{Sam_2012_1, Sam_2012_2} so\-li\-ton-like solutions have been constructed and estimates of their asymptotic accuracy have been established. This was made possible precisely due to the property of gauge equivalence of the classical KdV equation, that is, its ``insensitivity'' to the exact values of its coefficients.

Another notable aspect of the one-phase case concerns the determination of the phase function. For the vcmCH equation, this function is determined as a solution to a nonlinear ordinary differential equation derived from an orthogonality condition. In case of the vcCH system, the only constraint on the phase function is provided through certain algebraic inequalities involving the main terms of the expansions for the original equation’s coefficients, the regular component of the asymptotic solution, and the derivative of the phase function. These relations also define, in particular, the time interval over which the corresponding asymptotic expansions for soliton-like solutions are valid.

Important results of this study concern the construction of asymptotic pea\-kon-like solutions. First, the terms of the singular part of the asymptotic solutions were constructed separately on two intervals: to the left and to the right of the point where the peakon solution loses smoothness. The resulting functions were then ``glued together'' in such a way as to satisfy the conditions of continuity and piecewise smoothness.

It should be noted that the main term of the constructed asymptotic peakon-like solution is determined explicitly, as in the case of the vcmCH equation \cite{SamBrandSam}. This is due to the simpler form of the peakon solution compared to the soliton solution. As in the soliton case, the solvability of the differential equations for the higher-order asymptotic terms of the peakon-like solution has been established in appropriate functional spaces, which makes it possible to obtain the corresponding asymptotic expansion with an arbitrary number of terms.

As in the previously considered cases, theorems on the asymptotic accuracy of the constructed asymptotic solutions have been proved.

Each of the considered cases is illustrated by nontrivial examples for which, in accordance with the obtained general results, approximate solutions are derived in explicit form and their graphs are presented.

\section*{CRediT authorship contribution statement}

\noindent
Yuliia Samoilenko: Conceptualisation, Methodology, Validation, Formal analysis, Investigation, Writing original draft, Writing review \& editing, Visualization.

\medskip

\noindent
Valerii Samoilenko: Conceptualisation, Methodology,
Validation, Formal analysis, Investigation, Writing original draft, Writing review \& editing.

\section*{Declaration of competing interest}
The authors declare that they have no known competing financial interests or personal relationships that could
have appeared to influence the work reported in this paper.

\section*{Data availability}
No data was used for the research described in the article.

\section*{Acknowledgement}

The first author was partially supported by the ANR through projects No. ANR--23--PAUK--0038--01 and ANR--25--PAUK--0038--01. The second author was partially supported by a grant from the Simons Foundation International (SFI--PD--Ukraine--00014586, V.H.S.).

The authors are deeply grateful to Professor Lorenzo Brandolese (Institut Camille Jordan, Universit\'e Claude Bernard Lyon 1, Villeurbanne, France) for drawing their attention to the Camassa–Holm equation, as well as for his constant support, valuable advice throughout the research, and insightful comments on the results presented in this paper.

The authors also sincerely thank their colleagues, Professor Anatoliy Pry\-kar\-pat\-skiy (Department of Computer Science and Telecommunication, Cracow University of Te\-ch\-nology, Kraków, Poland, and Department of Ad\-van\-ced Mathematics, Lviv Polytechnic National University, Lviv, Ukraine) and Professor Vyacheslav Boyko (Institute of Ma\-the\-matics of NAS of Ukraine), for helpful discussions, comments, and useful remarks regarding the results of this work.

{\small
\bibliographystyle{unsrtnat}
\bibliography{Samoilenko}

}

\end{document}